\DeclareMathOperator*{\argmax}{arg\,max}
\newcolumntype{L}[1]{>{\raggedright\arraybackslash}p{#1}}
\newcolumntype{C}[1]{>{\centering\arraybackslash}p{#1}}
\newcolumntype{R}[1]{>{\raggedleft\arraybackslash}p{#1}}
\newcommand\numberthis{\addtocounter{equation}{1}\tag{\theequation}}
\newcommand{\pushright}[1]{\ifmeasuring@#1\else\omit\hfill$\displaystyle#1$\fi\ignorespaces}
\newcommand{\pushleft}[1]{\ifmeasuring@#1\else\omit$\displaystyle#1$\hfill\fi\ignorespaces}
\begin{document}
\bibliographystyle{splncs04}
\title{Regret-Optimal Online Caching for Adversarial and Stochastic Arrivals\thanks{This work is supported by a SERB grant on Leveraging Edge Resources for Service Hosting.}}
\titlerunning{Optimal switching regret for online caching}
%
\author{Fathima Zarin Faizal\inst{1}\orcidID{0000-0002-5663-8308} \and
Priya Singh\inst{1}\orcidID{0000-0002-1658-1116} \and
Nikhil Karamchandani \inst{1}\orcidID{0000-0002-7233-0717} \and
Sharayu Moharir \inst{1}\orcidID{0000-0001-9393-9276} }
\authorrunning{F. Faizal et al.}
%
\institute{Indian Institute of Technology Bombay, Mumbai, Maharashtra, India - 400076\\
\url{iitb.ac.in} }
\maketitle              
\begin{abstract}
We consider the online caching problem for a cache of limited size. In a time-slotted system, a user requests one file from a large catalog in each slot. If the requested file is cached, the policy receives a unit reward and zero rewards otherwise. We show that a Follow the Perturbed Leader (FTPL)-based anytime caching policy is simultaneously regret-optimal for both adversarial and i.i.d. stochastic arrivals. Further, in the setting where there is a cost associated with switching the cached contents, we propose a variant of FTPL that is order-optimal with respect to time for both adversarial and stochastic arrivals and has a significantly better performance compared to FTPL with respect to the switching cost for stochastic arrivals. We also show that these results can be generalized to the setting where there are constraints on the frequency with which cache contents can be changed. Finally, we validate the results obtained on various synthetic as well as real-world traces. \keywords{Online caching \and algorithms \and regret bounds}
\end{abstract}
\section{Introduction}
The caching problem has been studied since the 1960s, initially motivated by memory management in computers \cite{computer_stuff}. More recently, there has been renewed interest motivated by Content Delivery Networks \cite{cdn} used for applications such as Video-on-Demand services. Such applications rely on low latency to provide a good customer experience. The framework of this problem involves a library of $L$ files and a cache located near the end-users that is capable of storing at most $C$ files at any given time, the algorithmic challenge being to determine the most popular files to be stored in the cache.

Two types of arrival patterns have been considered in the existing literature and in our work. The first is known as the Independent Reference Model, where request arrivals are generated by an i.i.d. stochastic process and the distribution of the request process is unknown to the policy. The second arrival model is the adversarial arrival model where we make no structural assumptions on the arrival process. Here, the arrival pattern is generated by an oblivious adversary who knows which caching policy is being used but is not aware of the sample path of decisions made by the policy. In both models, as we are focused on the online caching problem, requests are revealed causally and therefore caching decisions have to be made based on past arrival patterns without any explicit knowledge of future arrivals.

Various metrics have been used to characterize the performance of caching polices, including hit-rate and competitive ratio. Regret is a popular metric for online learning algorithms \cite{cesa_lugosi} and is defined as the difference between the reward incurred by the optimal stationary policy and the policy under consideration. Our broad goal is to determine if there exists caching policies that have order-optimal regret with respect to time for i.i.d. stochastic and adversarial arrivals and therefore robust to the nature of the arrival process. Polices that perform well in the adversarial setting are primarily focused on not performing horribly on any arrival sequence. This often leads to sub-optimal performance for specific arrival sequences. Similarly, policies designed for specific arrival processes or under some structural assumptions on the arrival process have poor performance in the adversarial setting as they have very poor performance for specific arrival processes which affects the worst-case performance of the policy. For instance, policies designed for the independent reference model would not be ideal when requests are not stationary. 

Prediction with expert advice \cite{cesa_lugosi,littlestone1994} and Online Convex Optimization \cite{zinkevich2003online} are well-known settings in online learning for which optimal algorithms have been found. Though the caching problem is equivalent to the prediction with expert advice setting with ${L \choose C}$ experts, ${L \choose C}$ is typically a very large number resulting in standard algorithms being computationally inefficient. Least Frequently Used (LFU), Least Recently Used (LRU) and First-in-First-Out (FIFO) are popular caching policies that have been shown to achieve optimal competitive ratio \cite{albers}. There are also results on the closed form stationary hit probabilities of these algorithms under the Independent Reference Model \cite{stoch_analysis_comp_storage,lru_climb_stat_hit_prob}. 

Under stochastic arrivals, LFU achieves order-optimal regret \cite{learning_to_cache} but under adversarial arrivals, LFU, LRU and FIFO have been shown to have suboptimal regret \cite{lfu_lb_paper}. A sublinear regret upper bound was proved for a gradient-based coded caching policy (OGA) under adversarial arrivals \cite{lfu_lb_paper} while the first uncoded caching policy to be shown to achieve sublinear regret is the Follow The Perturbed Leader (FTPL) policy \cite{sigmetrics}. Proposed in \cite{hannan1957}, FTPL has also been shown to achieve order-optimal regret under adversarial arrivals by proving a lower bound on the regret using a balls-into-bins argument in \cite{sigmetrics}. An FTPL-based policy has also been shown to be regret-optimal for bipartite caching networks \cite{leadcache}. 

These policies do not consider the overhead of fetching files into the cache from the library each time the cache updates, called the \emph{switching cost} \cite{sqrt_t}. An $\tilde{O}(C \sqrt{T})$ upper bound on the regret including the switching cost was shown for a variant of the Multiplicative-Weight policy (MW) under adversarial arrivals which is also more computationally efficient compared to the original MW algorithm naively applied to the caching problem. An upper bound of $\tilde{O}(\sqrt{CT})$ was shown for an FTPL-based policy which is also simpler to implement \cite{sqrt_t} and improves upon the earlier bound by a factor of $\Theta(\sqrt{C})$. 

\subsection{Our contributions}
 We consider the following two settings: unrestricted switching, where the objective is to minimize the regret including the switching cost, and restricted switching, where the cache is allowed to update at certain fixed points only and the objective is to minimize regret. In Section \ref{sec: unlimited}, we consider the unrestricted switching setting and show that FTPL with an adaptive learning rate achieves order-optimal regret under stochastic arrivals even after including the switching cost, while FTPL with a constant learning rate cannot have order-optimal regret for both stochastic and adversarial arrivals. We also propose the Wait then FTPL (W-FTPL) policy that improves the bound on the switching cost from $\mathcal{O}(D)$ to $\mathcal{O}(\log D)$, where $D$ is the per-file switching cost. In Section \ref{sec: restricted}, we consider the restricted switching setting and prove a lower bound on the regret of any policy and an upper bound on the regret of FTPL. We show that FTPL acheives order-optimal regret under stochastic file requests and in a special case of this setting under adversarial file requests. Finally, in Section \ref{sec: numerical}, we present the results of numerical experiments on synthetic as well as real-world traces that validate the results obtained. Due to a lack of space, the proofs of the theorems stated in this paper can be found in \cite{full_paper}. 

We thus show that FTPL with an adaptive learning rate applied to the online caching problem has order-optimal regret under stochastic and adversarial arrivals in the unrestricted switching and in a special case of the restricted switching setting. 
\section{Problem formulation}
We consider the classical content caching problem where a user requests files from a library that is stored in a back-end server. There is a cache that is capable of serving user requests at a lower cost but has a storage size that is typically considerably smaller than the library size. Time is slotted and in each time slot, the user requests at most one file. The sequence of events in a time slot is as follows. The cache may first update its contents, after which it receives a request for a file from the user. If the requested file is available in the cache, the cache is said to have a \emph{hit} and the request is fulfilled locally by the cache, and otherwise a \emph{miss}, in which case the file request is fulfilled by the back-end server. 

\emph{Cache configuration.} We consider a cache of size $C$ that stores files from a library $\mathcal{L}$ of size $L$. Usually, the cache size is much smaller than the library size, i.e., $C \ll L$. The file requested by the user at time $t$ is denoted by $x_t$ and is represented also in the form of the one-hot encoded vector $\mathbf{x}_t \in \{0,1\}^{L}$. For $\tau \geq 2$, we denote by $\boldsymbol{X}_{\tau}=\sum_{t=1}^{\tau-1} \boldsymbol{x}_t$ the $L$-length vector storing the cumulative sum of requests for each file till time slot $\tau-1$. $\boldsymbol{X}_1$ is initialized to be the zero vector. Let C(t) denote the set of files cached in round $t$ and let $\boldsymbol{y}_t \in \{0,1\}^{L}$ be a binary vector denoting the state of the cache at time $t$, such that $y_t = (y_t^1, y_t^2, \ldots, y_t^L)$ with $y_t^i= 1$ for $i \in C(t)$ and 0 otherwise. 

\emph{File requests.} We consider two types of file requests: adversarial and stochastic. The file requests are said to be \textit{adversarial} if no assumptions are made regarding the statistical properties of the file requests. We assume that the adversary is oblivious, i.e., the entire file request sequence is fixed before the first request is sent. The file requests are said to be \textit{stochastic} if in each slot, the request is generated independently according to a popularity distribution $\boldsymbol{\mu}=(\mu_1, \ldots, \mu_L)$, where $\mathbb{P}(x_t=i)=\mu_i$ and $\sum_{i} \mu_i =1$. Without loss of generality, we assume that $\mu_1 \geq \ldots \geq \mu_L$. As is the case in most real-world applications, the popularity distribution is assumed to be unknown to the caching policy beforehand. 

\emph{Caching Policy.} At the beginning of any given time slot $t \geq 2$, a caching policy $\pi(\cdot)$ maps the history of observations it has seen so far (denoted by $h(t)$) to a valid cache configuration $C(t)$, i.e., $C(t)=\pi(h(t))$. In the first time slot, we assume that the cache stores $C$ files randomly chosen from the library and that this does not incur any switch cost. We define $T$ to be the time horizon of interest. When the file requests are adversarial, the optimal stationary policy is defined to be the caching policy that stores the top $C$ files in hindsight, i.e., stores the $C$ files that received the maximum number of requests till time $T$. When the file requests are stochastic, the optimal stationary policy is defined to be the caching policy that stores the files with the top $C$ popularities in the cache, i.e., $C(t)=\mathcal{C} \, \forall t$, where $\mathcal{C}=\{1,\ldots,C\}$. 

\emph{Reward and Switch Cost.} At each time step, the policy obtains a reward of 1 unit when the requested file is available in the cache, i.e., a hit occurs, and a reward of 0 units otherwise. A caching policy that fetches a large number of files into the cache each time the cache updates is not ideal as fetching files into the cache causes latency and consumes bandwidth. Thus, we also consider the switch cost, i.e., the cost of fetching files from the back-end server into the cache. We assume that fetching a file into the cache from the server incurs a cost of $D$ units. 

\emph{Performance metric.} Policies are evaluated on the basis of the regret that they incur. Informally, the regret of a policy till time $T$ is the difference between the net utility of the optimal stationary policy and the net utility the policy under consideration. The net utility of a policy till time $T$ is the difference between the net reward accrued till $T$ and the overall switch cost incurred till then. Stationary policies do not incur any switch cost and hence their net utility is determined by the overall number of hits they have till time $T$. As discussed in the next section, in some cases, we omit the switch cost. 

\emph{Problem settings.} We consider the following two variations of the classical content caching problem:
\begin{enumerate}  
\item \textbf{Setting 1:} Unrestricted switching with switching cost. \\
In this setting, the system incurs a cost of $D$ units every time a file is fetched from the back-end server to be stored in the cache. When following a policy $\pi$ on a request sequence $\{x_t\}_{t=1}^{T}$, the regret till time $T$ for $\{x_t\}_{t=1}^{T}$ including the switching cost when the file requests are adversarial is defined as:
\begin{align}
 R^{\pi}_{A}(\{x_t\}_{t=1}^{T},T,D) &=\sup _{\boldsymbol{y} \in \mathcal{Y}}\left\langle\boldsymbol{y}, \boldsymbol{X}_{T+1}\right\rangle-\sum_{t=1}^{T} \mathbb{E}\left\langle\boldsymbol{y}_{t}, \boldsymbol{x}_{t}\right\rangle +\frac{D}{2} \sum_{t=1}^{T-1} \mathbb{E}\left\|\boldsymbol{y}_{t+1}-\boldsymbol{y}_{t}\right\|_{1},
\end{align}
where the expectation is with respect to any randomness introduced by the policy. The regret of a policy $\pi$ till time $T$ is defined as the worst-case regret over all possible request sequences, i.e.,
\begin{align*}
 R^{\pi}_{A}(T,D) = \underset{\{x_t\}_{t=1}^{T}}{\sup}  R^{\pi}_{A}(\{x_t\}_{t=1}^{T},T,D).
\end{align*}
When the file requests are stochastic, the regret including the switching cost after $T$ time steps is defined as:
\begin{align}
    R^{\pi}_{S}(T,D) =\mathbb{E} \left [ \sum_{t=1}^{T} \mathbbm{1}\{x(t) \in \mathcal{C}\}-\mathbbm{1}\{x(t) \in C(t)\} \right ] +\frac{D}{2} \sum_{t=1}^{T-1} \mathbb{E}\left\|\boldsymbol{y}_{t+1}-\boldsymbol{y}_{t}\right\|_{1}, \label{eqn: switching_regret_stochastic}
\end{align}
where $\mathcal{C}$ denotes the set of files having the top $C$ popularities. In the above expression, the expectation is taken with respect to the randomness in the file requests as well as any randomness introduced by the policy.
\newline
\item \textbf{Setting 2:} Restricted switching without switching cost. \\
Here, the cache is allowed to change its contents only at $s+1$ fixed time slots for some $1 \leq s \leq T$. To be precise, the cache is allowed to change its contents only at the beginning of the following time slots: $1, r_1+1, r_1+r_2+1, \ldots, \sum_{i=1}^{s}r_i+1$, where $1 \leq r_i \leq T, 1 \leq i \leq s$ denotes the $i^{\text{th}}$ inter-switching period such that $\sum_{i=1}^{s}r_i = T$. Thus, within the time horizon $T$, the cache is allowed to update only at $s$ fixed time slots. Note that the setting where the cache is allowed to change its contents only after every $1 \leq r \leq T$ requests, i.e., at time slots $1,r+1,\ldots,T+1$ and $s=\frac{T}{r}$ is a special case of this setting. For simplicity, we restrict our attention to the case where there is no switch cost, i.e., $D=0$. When following a policy $\pi$, the regret after $T$ time steps when the file requests are adversarial is:
\begin{align}
 R^{\pi}_{A}(T) &=\sup _{\boldsymbol{y} \in \mathcal{Y}}\left\langle\boldsymbol{y}, \boldsymbol{X}_{T+1}\right\rangle-\sum_{t=1}^{T} \mathbb{E}\left\langle\boldsymbol{y}_{t}, \boldsymbol{x}_{t}\right\rangle,
\end{align}
where the expectation is with respect to the randomness introduced by the policy. When the file requests are stochastic, the regret after $T$ time steps is:
\begin{align}
    R^{\pi}_{S}(T) =\mathbb{E} \left [ \sum_{t=1}^{T} \mathbbm{1}\{x(t) \in \mathcal{C}\}-\mathbbm{1}\{x(t) \in C(t)\} \right ], \label{eqn: switching_regret_stochastic}
\end{align}
where $\mathcal{C}$ denotes the set of files having the top $C$ popularities. In the above expression, the expectation is taken with respect to the randomness introduced by the policy and the file requests.
\end{enumerate} 
To distinguish between results including switching cost and those without switching cost, we use the notation $R_{(\cdot)}^{\pi}(T,D)$ for results involving the switch cost and $R_{(\cdot)}^{\pi}(T)$ for results without the switch cost. 

The overall goal of this work is to characterize the optimal regret in the two settings mentioned above, for both adversarial and stochastic file requests. This entails proving scheme-agnostic lower bounds on the regret as well as designing policies whose regret is of the same order as these lower bounds. As we will see, these results will also highlight the impact of switching cost and intermittent switching on the optimal achievable regret. 
\section{Policies} \label{sec: policies}
In this section, we introduce and formalize policies whose optimality (or suboptimality) will be discussed in later sections. 
\subsection{Least Frequently Used (LFU)}
The LFU algorithm (formally defined in Algorithm \ref{alg: lfu}) keeps track of the number of times each file has been requested so far. At each time step $t$, the files with the $C$ highest number of requests are cached. This policy is deterministic and thus performs poorly when faced with certain adversarial request sequences \cite{lfu_lb_paper}. For the simplified case of $L=2, C=1$, one example is a round-robin request sequence of the form $1,2,1,2,\ldots$ which would result in LFU obtaining essentially zero reward while the optimal stationary policy obtains a reward of $T/2$. For stochastic requests, it has been shown to achieve $\mathcal{O}(1)$ regret when switching is allowed at all time slots and when the algorithm incurs no switch cost \cite{learning_to_cache}. 
\begin{algorithm}[H]
\caption{LFU algorithm}\label{alg: lfu}
\begin{algorithmic}[1]
\Procedure{LFU}{$T$}
\State $\boldsymbol{c}_t \gets \mathbf{0}$
\While{$t \leq T$}
\State $C_t \gets \underset{C}{\argmax} (c_t(1), \ldots, c_t(L)  )$
\State Receive file request $x_t$
\State $ c_t(x_t) \gets  c_t(x_t) +  1$
\EndWhile
\EndProcedure
\end{algorithmic}
\end{algorithm}
\subsection{Follow The Perturbed Leader (FTPL)} 
The FTPL algorithm (formally defined in Algorithm \ref{alg: ftpl})) is a variation of the LFU algorithm and also keeps track of the number of times each file has been requested so far, but adds an independent Gaussian perturbation with mean 0 and standard deviation $\eta_t$ (referred to as the learning rate) to the counts of each file in each time slot. At each time step $t$, the files with the $C$ highest perturbed counts are cached. Special cases of this policy are known to achieve order-optimal regret under adversarial requests (with and without switch cost) \cite{sigmetrics,sqrt_t}. We will henceforth refer to the FTPL algorithm with the learning rate $\eta_t$ by FTPL($\eta_t$). 
\begin{algorithm}[H]
\caption{FTPL algorithm}\label{alg: ftpl}
\begin{algorithmic}[1]
\Procedure{FTPL}{$T,\{\eta_{t}\}_{t=1}^{T}$}
\State $\boldsymbol{c}_t \gets \mathbf{0}$
\State Sample $\boldsymbol{\gamma} \sim \mathcal{N}(\boldsymbol{0}, \boldsymbol{I}_{L \times L})$
\While{$t \leq T$}
\State $C_t \gets \underset{C}{\argmax} (c_t(1) + \eta_t\gamma(1) , \ldots, c_t(L) + \eta_t\gamma(L) )$
\State Receive file request $x_t$
\State $ c_t(x_t) \gets  c_t(x_t) +  1$
\EndWhile
\EndProcedure
\end{algorithmic}
\end{algorithm}
\subsection{Wait then FTPL (W-FTPL)} 
The algorithm that we propose, Wait then FTPL (formally defined in Algorithm \ref{alg: wftpl})), is a variant of the FTPL algorithm where the policy remains idle for an initial deterministic waiting period and then follows the normal FTPL algorithm. The motivation for this algorithm is to avoid the higher switch cost incurred initially by the FTPL algorithm under stochastic file requests until the policy has seen enough requests to have a good enough estimate of the underlying popularity distribution, while ensuring order-optimal regret in the adversarial setting. We will henceforth refer to the W-FTPL algorithm with the learning rate $\eta_t$ by W-FTPL($\eta_t$). 
\begin{algorithm}[H]
\caption{W-FTPL algorithm}\label{alg: wftpl}
\begin{algorithmic}[1]
\Procedure{W-FTPL}{$T,\{\eta_{t}\}_{t=1}^{T},D,t'$}
\State $\boldsymbol{c}_t \gets \mathbf{0}$
\State Sample $\boldsymbol{\gamma} \sim \mathcal{N}(\boldsymbol{0}, \boldsymbol{I}_{L \times L})$
\While{$t \leq T$}
\If{$t >t'$}
\State $C_t \gets \underset{C}{\argmax} (c_t(1) + \eta_t\gamma(1) , \ldots, c_t(L) + \eta_t\gamma(L) )$
\EndIf
\State Receive file request $x_t$
\State $ c_t(x_t) \gets  c_t(x_t) +  1$
\EndWhile
\EndProcedure
\end{algorithmic}
\end{algorithm}
\section{Setting 1: Unrestricted switching with switching cost} \label{sec: unlimited}
In this section, we consider the setting where there is no limitation on the switching frequency of the cache and the objective is to minimize the regret including the switching cost, i.e., minimize regret as well as the number of fetches into the cache. We consider both stochastic and adversarial file request sequences and show that FTPL($\alpha \sqrt{t}$) and W-FTPL($\alpha \sqrt{t}$) are order-optimal under both types of file requests. While the FTPL($\eta$) algorithm is order-optimal under adversarial requests for a particular value of $\eta$ \cite{sigmetrics}, we prove that the same does not hold true for stochastic file requests. 
\subsection{Adversarial requests} \label{subsec: unlimited_adversarial}
In this section, we discuss the performance of the policies introduced in Section \ref{sec: policies} under adversarial file requests. The key results of this section has been summarized in the following theorem:
\begin{theorem} \label{thm: unlimited_adversarial}
Under adversarial requests, we have
\begin{enumerate}[label=(\alph*)]
\item \cite[Theorem~2]{sigmetrics} For any policy $\pi$ and $L \geq 2C$,
\begin{align*}
R_{A}^{\pi}(T,D=0) \geq \sqrt{\frac{C T}{2 \pi}}-\Theta\left(\frac{1}{\sqrt{T}}\right). 
\end{align*}
\item \cite[Proposition~1]{lfu_lb_paper} The regret of the LFU policy can be characterized as:
 \begin{align*}
R_{A}^{\textrm{LFU}}(T,0) = \Omega(T).
\end{align*}
\item \cite[Theorem~4.1]{sqrt_t} The regret of FTPL($\alpha \sqrt{t}$) is upper bounded as:
\begin{align*}
R_{A}^{\textrm{FTPL}(\alpha \sqrt{t})}(T,D)  \leq c_{1} \sqrt{T}+c_{2} \ln T+c_{3},
\end{align*}
where $c_{1}=\mathcal{O}(\sqrt{\ln (L e / C)})$, and $c_{2}, c_{3}$ are small constants depending on $L, C,D$ and $\alpha$.  
\item The regret of W-FTPL($\alpha \sqrt{t}$) is upper bounded as:
 \begin{align*}
R_{A}^{\textrm{W-FTPL}(\alpha \sqrt{t})}(T,D) \leq  \mathcal{O}(\sqrt{T}).
\end{align*}
\end{enumerate}
\end{theorem}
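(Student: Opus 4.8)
The plan is to derive part~(d) directly from part~(c) by coupling W-FTPL($\alpha\sqrt{t}$) with FTPL($\alpha\sqrt{t}$). I would run both policies on the same request sequence, using the \emph{same} realization of the Gaussian vector $\boldsymbol{\gamma}\sim\mathcal{N}(\boldsymbol{0},\boldsymbol{I}_{L\times L})$ and the same random initial configuration $\boldsymbol{y}_1$. The key observation is that both Algorithm~\ref{alg: ftpl} and Algorithm~\ref{alg: wftpl} increment the running count vector $\boldsymbol{c}_t$ in \emph{every} slot (the count update lies outside the conditional), so for every $t>t'$ the two policies maximize the identical objective $\sum_{i\in S}\bigl(c_t(i)+\eta_t\gamma(i)\bigr)$ over $|S|=C$, with ties broken by the same fixed rule, and hence cache exactly the same set. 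Thus, on every sample path, W-FTPL and FTPL make identical decisions for all $t>t'$, and differ only on the window $\{1,\dots,t'\}$, on which W-FTPL simply holds the (cost-free) initial configuration $\boldsymbol{y}_1$.

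Given the coupling, I would compare the two regrets term by term. The comparator $\sup_{\boldsymbol{y}}\langle\boldsymbol{y},\boldsymbol{X}_{T+1}\rangle$ is common to both, the per-slot hit terms agree for $t>t'$, and for $t\le t'$ the per-slot hit gap of W-FTPL relative to FTPL is $\langle\boldsymbol{y}_t^{\mathrm{FTPL}},\boldsymbol{x}_t\rangle-\langle\boldsymbol{y}_1,\boldsymbol{x}_t\rangle\le 1$, contributing at most $t'$ in total. For the switching cost, W-FTPL performs no fetch during $\{1,\dots,t'\}$, a single fetch at slot $t'+1$ of cost $\tfrac{D}{2}\lVert\boldsymbol{y}_{t'+1}^{\mathrm{FTPL}}-\boldsymbol{y}_1\rVert_1\le CD$ (both vectors have exactly $C$ ones), and thereafter exactly the same per-slot fetch cost as FTPL; since FTPL's switching cost over $\{1,\dots,t'\}$ is nonnegative, W-FTPL's total switching cost exceeds FTPL's by at most $CD$. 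Combining these, pointwise in the request sequence and hence also in the worst case,
\begin{align*}
R_A^{\textrm{W-FTPL}(\alpha\sqrt{t})}(T,D)\;\le\;R_A^{\textrm{FTPL}(\alpha\sqrt{t})}(T,D)+t'+CD\;\le\;c_1\sqrt{T}+c_2\ln T+c_3+t'+CD,
\end{align*}
where the last step is part~(c). Since the waiting length $t'$ is chosen to be of order $\log D$ and therefore a constant with respect to $T$, the right-hand side is $\mathcal{O}(\sqrt{T})$ with the hidden constant depending only on $L,C,D,\alpha$, which is exactly the claim.

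There is essentially no hard analytic step here: part~(c) does the heavy lifting and the rest is bookkeeping, so the care is in getting that bookkeeping exactly right. The three points I would be most careful about are: (i) that the coupling is legitimate, i.e.\ that maintaining the counts identically and reusing one perturbation draw really forces the two argmax sets to coincide for all $t>t'$ on every sample path; (ii) the switching-cost accounting, in particular that collapsing W-FTPL's first $t'$ (absent) moves into a single jump at slot $t'+1$ cannot cost more than $CD$ beyond what FTPL pays over the same window (the nontrivial direction being that W-FTPL never pays \emph{less} in a way that could be offset by more later, which is ruled out because its post-$t'$ trajectory equals FTPL's); and (iii) noting that, in contrast to the stochastic analysis where $t'$ must be tuned against the popularity gap, the adversarial bound is indifferent to $t'$ as long as $t'=o(\sqrt{T})$ — so the choice $t'=\Theta(\log D)$, which is made to control the switching cost under stochastic requests, simultaneously preserves adversarial order-optimality.
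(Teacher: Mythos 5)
Your proof is correct, and it takes a genuinely different route from the paper's. The paper proves part~(d) by re-running the potential-function machinery of \cite{sqrt_t} from scratch: it bounds the reward loss on $\{1,\dots,t'\}$ by $t'$, then telescopes $\Phi_t(\boldsymbol{X}_{t+1})-\Phi_t(\boldsymbol{X}_t)$ from $t'+1$ to $T$, applies Jensen's inequality, the Gaussian-maximum bound of \cite{cohenhazan15}, and the Hessian bound $\sum_t \langle \boldsymbol{x}_t,\nabla^2\Phi_t\,\boldsymbol{x}_t\rangle \le \sqrt{2/\pi}\sum_t \eta_t^{-1}$, and separately dominates W-FTPL's switching cost by FTPL's via Proposition~4.2 of \cite{sqrt_t}. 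You instead give a black-box reduction to part~(c): couple the two policies through a shared $\boldsymbol{\gamma}$ and shared counts (which, as you note, are updated in every slot in both algorithms, so the argmax sets coincide for all $t>t'$ almost surely — ties have probability zero under the Gaussian perturbation), then charge the reward gap on $\{1,\dots,t'\}$ by $t'$ and the single collapsed fetch at $t'+1$ by $CD$, giving $R_A^{\textrm{W-FTPL}}(T,D)\le R_A^{\textrm{FTPL}}(T,D)+t'+CD$ pointwise and hence in the worst case. Your switching-cost accounting is sound: W-FTPL's cost decomposes as the one jump of at most $CD$ plus exactly FTPL's per-slot costs on $\{t'+1,\dots,T-1\}$, and FTPL's costs on the discarded window are nonnegative. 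What your approach buys is modularity — it works verbatim for any ``wait-then-$\pi$'' variant of any anytime policy with a known regret-plus-switching bound, and avoids redoing the telescoping; what the paper's approach buys is explicit constants tied to the $\eta_{t'+1}$ term rather than an additive $CD$. (One immaterial imprecision: $t'=u(\log D)^{1+\beta}$, not $\Theta(\log D)$, but it remains constant in $T$, which is all the argument needs.)
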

Part (a) has been proved in \cite{sigmetrics} and provides a lower bound on the regret of any policy under adversarial requests. 

As argued before, LFU performs poorly under adversarial requests. This is also seen for many popular classical caching algorithms like LRU and FIFO (refer \cite{lfu_lb_paper}).

Part (c) has been proved in \cite{sqrt_t} and provides an $\mathcal{O}(\sqrt{T})$ upper bound on the regret including the switching cost of the FTPL($\alpha \sqrt{t}$) policy under adversarial requests, thus showing that this algorithm is order-optimal under adversarial requests. FTPL($\alpha \sqrt{T}$) has also been shown to be order-optimal under adversarial requests \cite{sigmetrics,sqrt_t}. 

Part (d) provides an upper bound on the regret including the switching cost of W-FTPL($\alpha \sqrt{t}$) under adversarial requests. This result shows that W-FTPL($\alpha \sqrt{t}$) is order-optimal under adversarial requests. The proof of this result can be found in Appendix \ref{sec: app_wftpl_adversarial_ub}.

\subsection{Stochastic requests}
To find a policy that achieves order-optimal regret under stochastic and adversarial arrivals, we were motivated by \cite{sigmetrics,sqrt_t} where the regret for FTPL with the learning rates $\alpha \sqrt{t}$ and $\alpha \sqrt{T}$ ($\alpha$ being some positive constant) under adversarial arrivals was characterized. In this section, we discuss the performance of these policies under stochastic file requests.  The key results of this section has been summarized in the following theorem:
\begin{theorem} \label{thm: unlimited_stochastic}
The file requests are stochastic i.i.d. with the popularity distribution $\boldsymbol{\mu}$. 
\begin{enumerate}[label=(\alph*)]
\item \cite[Theorem~1]{learning_to_cache} When $D=0$, the regret of the LFU policy can be upper bounded as:
\begin{align}
R^{LFU}_{S}(T,0)<\min \left(\frac{16}{\Delta_{\min }^{2}}, \frac{4 C(L-C)}{\Delta_{\min }}\right), \label{eqn: lfu_ub}
\end{align}
where $\Delta_{\min}=\mu_C-\mu_{C+1}$. 
\item For $L=2, C=1$ and $D=0$, the regret of FTPL($\eta$) can be lower bounded as:
\begin{align*}
     R^{\textrm{FTPL}(\eta)}_{S}(T,0) \geq \frac{\eta e^{-\left ( \frac{1+\eta}{\eta} \right )^2}}{4 }.
\end{align*}
\item The regret of FTPL($\alpha \sqrt{t}$) is upper bounded as:
 \begin{align*}
    R^{\textrm{FTPL}(\sqrt{t})}_{S}(T,D) \leq \left (1+ DC \right ) t_0 + \left (1+ \frac{D}{\Delta_{\min}} \right ) \left ( \frac{8}{\Delta_{\min}} + \frac{32 \alpha^2}{\Delta_{\min}} \right ),
\end{align*}
where $t_0=\max \left \{\frac{8}{\Delta_{\min}^2} \log \left ( {L^3}\right )  , \frac{32 \alpha^2}{\Delta_{\min}^2 }\log \left ( {L^3}\right ) \right \}$. 
\item The regret of W-FTPL($\alpha \sqrt{t}$) is upper bounded as:
 \begin{align*}
    R^{\textrm{W-FTPL}(\sqrt{t})}_{S}(T,D) &\leq t' + \frac{16}{\Delta_{\min }} + \frac{64 \alpha^2}{\Delta_{\min }}+ 2L^3D \biggl ( e^{- u (\log D)^{1+\beta} \Delta_{\min}^{2} / 8} \frac{8}{\Delta_{\min}^2} \\&+ e^{-u (\log D)^{1+\beta} \Delta_{\min}^{2} / 32 \alpha^{2}} \frac{32 \alpha^2}{\Delta_{\min}^2} \biggr ),
\end{align*}
where $t'=\max \left \{\frac{8}{\Delta_{\min}^2} \log \left ( \frac{L^3}{2}\right )  , \frac{32 \alpha^2}{\Delta_{\min}^2 }\log \left ( \frac{L^3}{2}\right ), u (\log D)^{1+\beta} \right \}$. 
\end{enumerate}
\end{theorem}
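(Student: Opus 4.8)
The plan is to split the regret of W-FTPL($\alpha\sqrt{t}$) into three parts: (i) the hit-rate loss over the idle slots $t\le t'$, (ii) the hit-rate loss over the slots $t>t'$, and (iii) the total switch cost, which is incurred only in slots $t>t'$ since no files are fetched while the policy is idle (the single cache update entering slot $t'+1$ is treated as cost-free, consistent with the convention that the initial cache population incurs no switch cost). Since each slot yields a reward in $\{0,1\}$ and the optimal stationary cache $\mathcal C$ earns at most one hit per slot, part (i) is at most $t'$, which is precisely the leading $t'$ term in the bound.

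For (ii) and (iii) the basic estimate is a bound, for each ordered pair $(i,j)$ with $i\in\mathcal C$ and $j\notin\mathcal C$, on the probability $q_t^{ij}$ that at the start of slot $t>t'$ the perturbed count of $j$ exceeds that of $i$. The count vector $\boldsymbol c_t$ is the empirical count of the $t-1$ requests seen so far (the idle phase still accumulates counts), so $c_t(i)-c_t(j)$ has mean at least $(t-1)(\mu_i-\mu_j)$, while $\alpha\sqrt{t}\,(\gamma_i-\gamma_j)$ is mean-zero Gaussian with standard deviation $\sqrt{2}\,\alpha\sqrt{t}$. Splitting the event $\{c_t(j)+\alpha\sqrt{t}\,\gamma_j\ge c_t(i)+\alpha\sqrt{t}\,\gamma_i\}$ at the midpoint $(t-1)(\mu_i-\mu_j)/2$ and applying Hoeffding's inequality to the count half and a Gaussian tail bound to the perturbation half (using $(t-1)^2/t\ge(t-1)/2$) gives
\begin{align*}
q_t^{ij}\ \le\ 2\,e^{-(t-1)(\mu_i-\mu_j)^2/8}\ +\ 2\,e^{-(t-1)(\mu_i-\mu_j)^2/(32\alpha^2)}.
\end{align*}

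For (ii), the expected hit-rate loss in slot $t$ is a gap-weighted sum over the pairs exchanged relative to $\mathcal C$, hence at most $\sum_{i\in\mathcal C,\,j\notin\mathcal C}(\mu_i-\mu_j)\,q_t^{ij}$; summing the geometric series over $t>t'$ turns $\sum_{t>t'}q_t^{ij}$ into a constant times $\frac{1}{(\mu_i-\mu_j)^2}e^{-t'(\mu_i-\mu_j)^2/8}$ plus the analogous $\alpha^2$ term, and the $(\mu_i-\mu_j)$ weight leaves $\frac{1}{\mu_i-\mu_j}\le\frac{1}{\Delta_{\min}}$; the choices $t'\ge\frac{8}{\Delta_{\min}^2}\log(L^3/2)$ and $t'\ge\frac{32\alpha^2}{\Delta_{\min}^2}\log(L^3/2)$ push both exponentials below $2/L^3$, which absorbs the $\le L^2$ pairs and yields the $\frac{16}{\Delta_{\min}}+\frac{64\alpha^2}{\Delta_{\min}}$ term. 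For (iii), $\mathbb E\|\boldsymbol y_{t+1}-\boldsymbol y_t\|_1\le\mathbb E|C(t)\triangle\mathcal C|+\mathbb E|\mathcal C\triangle C(t+1)|\le2\sum_{i\in\mathcal C,\,j\notin\mathcal C}(q_t^{ij}+q_{t+1}^{ij})$; repeating the geometric summation but now keeping only $\mu_i-\mu_j\ge\Delta_{\min}$ in the exponents (there is no $(\mu_i-\mu_j)$ weight to cancel, so this contributes $\frac{1}{\Delta_{\min}^2}$ rather than $\frac{1}{\Delta_{\min}}$), retaining the head factors $e^{-t'\Delta_{\min}^2/8}$ and $e^{-t'\Delta_{\min}^2/(32\alpha^2)}$, and using $t'\ge u(\log D)^{1+\beta}$, reproduces exactly the $2L^3D(\cdot)$ term of the statement. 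Adding (i)--(iii) gives the claim.

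The step I expect to be the crux is the choice of the idle length $t'$, in particular the $u(\log D)^{1+\beta}$ component. One needs $t'$ large enough that the $D$-weighted switch cost $D\,e^{-\Theta(t'\Delta_{\min}^2)}$ stays bounded, yet small enough that the additive $t'$ term remains low order in $T$; the difficulty is that the decay exponent carries the unknown gap $\Delta_{\min}^2$, so a choice $t'\asymp\log D$ would require the hidden constant to exceed $1/\Delta_{\min}^2$, which a gap-agnostic policy cannot do. Taking $t'\asymp(\log D)^{1+\beta}$ for any fixed $\beta>0$ makes $D\,e^{-u(\log D)^{1+\beta}\Delta_{\min}^2/c}=e^{\log D-\Theta((\log D)^{1+\beta})}=\mathcal O(1)$ for every fixed $\Delta_{\min}$ with a universal constant $u$, which is exactly what reduces the switch-cost dependence on $D$ from the $\mathcal O(D)$ of part~(c) to the $\mathcal O((\log D)^{1+\beta})$ claimed here. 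Verifying the precise Hoeffding constants, handling the time-varying learning rate $\alpha\sqrt{t}$ when the counts carry over from the idle phase, and checking that the update entering slot $t'+1$ is negligible are then routine.
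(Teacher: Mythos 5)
Your proposal addresses only part (d), but for that part it is correct and follows essentially the same route as the paper: bound the idle-phase regret by $t'$, control both the post-idle hit-rate loss and the per-slot fetch probabilities by the same pairwise tail bound (Hoeffding on the empirical counts plus a Gaussian tail on the $\alpha\sqrt{t}$ perturbation), sum the resulting geometric series with the gap weight present for the regret term and absent for the switch-cost term, and use $t'\ge u(\log D)^{1+\beta}$ to kill the $D$ prefactor. Your handling of the single update entering slot $t'+1$ matches the paper's implicit treatment, and your crux discussion of why $(\log D)^{1+\beta}$ rather than $\log D$ is needed is exactly the paper's motivation.
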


Part (a) of the above theorem has been proved in \cite{learning_to_cache} and shows that the regret of LFU is $\mathcal{O}(1)$ when the file requests are stochastic. Thus, the regret of any policy that has order optimal regret under stochastic file requests should be $\mathcal{O}(1)$.  

Part (b) gives a lower bound on the regret of the FTPL($\eta$) algorithm under stochastic file requests. Note that for all $\eta \geq 1$, we have
\begin{align*}
    R^{\textrm{FTPL}(\eta)}_{S}(T)&\geq \frac{\eta e^{-\left ( \frac{1+\eta}{\eta} \right )^2}}{4} \geq \frac{\eta}{4 e^4}. 
\end{align*}
This shows that the regret is $\Omega(\eta)$ for the FTPL($\eta$) algorithm when the file requests are stochastic. Using $\eta=\alpha \sqrt{T}$, where $\alpha$ is a positive constant, from \cite{sigmetrics} which gave an $\mathcal{O}(\sqrt{T})$ upper bound for FTPL($\eta$), we get that with the constant learning rate $\eta=\alpha \sqrt{T}$, the regret of FTPL($\eta$) is $\Theta(\sqrt{T})$. Thus, while FTPL($\eta$) is order-optimal under adversarial requests (see Section \ref{subsec: unlimited_adversarial}), it cannot simultaneously be order-optimal under adversarial and stochastic file requests. The proof of this result can be found in Appendix \ref{sec: app_ftpl_lb}.

Part (c) shows that FTPL($\alpha \sqrt{t}$), $\alpha >0$ is order-optimal when the file requests are stochastic. While FTPL($\eta$) with $\eta=\mathcal{O}(\sqrt{T})$ achieves $\Omega(\sqrt{T})$ regret, this result shows an $\mathcal{O}(1)$ upper bound on the regret of FTPL($\alpha \sqrt{t}$) including the switching cost, thus showing that this algorithm is order-optimal. Note that the upper bound grows linearly with the per-file switch cost $D$. The proof of this result can be found in Appendix \ref{sec: app_ftpl_ub_stochastic}

Recall that the Wait then FTPL($\alpha \sqrt{t}$) (W-FTPL($\alpha \sqrt{t}$)) algorithm is a variant of the FTPL($\alpha \sqrt{t}$) algorithm. The algorithm remains idle till time $t'=u (\log D)^{1+\beta}, u>0$, and then normal FTPL($\alpha \sqrt{t}$) is followed. Part (d) proves an $\mathcal{O}(1)$ upper bound on the regret including the switching cost of this algorithm with respect to the horizon $T$ under stochastic file requests, thus showing that this algorithm is order-optimal. The main improvement over the FTPL($\alpha \sqrt{t}$) algorithm is the $\mathcal{O}( (\log D)^{1+\beta} )$ upper bound on the regret including the switching cost for a large enough value of $D$ under stochastic file requests, as compared to the upper bound $\mathcal{O}(D)$ for FTPL($\alpha \sqrt{t}$). The key idea behind remaining idle for an initial period that depends logarithmically on $D$ is to avoid the higher switch cost incurred at the beginning by the FTPL($\alpha \sqrt{t}$) algorithm (refer to Section \ref{sec: numerical}). The proof of this result can be found in Appendix~\ref{sec: app_wftpl_ub}.
\section{Restricted switching} \label{sec: restricted}
In this section, we consider the setting where the cache is allowed to update its contents only at $s+1$ fixed number of time slots, where $s \in \mathbb{Z}, s \leq T$. The first point is at time slot $1$, the second at time slot $r_1+1$, the third point is at time slot $r_1+r_2+1$, and so on till the ${s+1}^{\mathrm{th}}$ point, which is at time slot $\sum_{i=1}^{s}r_i+1$, where $r_i \in \mathbb{Z}, r_i \leq T, 1 \leq i \leq s$ and $\sum_{i=1}^{s}r_i = T$. Note that the cache is allowed to update its contents only $s$ times till the time horizon $T$. Refer to Figure~\ref{fig: restricted} for an illustration of this setting. As a special case of this setting, we also consider the homogenous case where the cache is allowed to update only after every $r \in \mathbb{Z}$ requests, i.e., $s=\frac{T}{r}$. We study the regret performance of FTPL and also provide lower bounds on the regret incurred by any online scheme. In the homogenous case, we also show that FTPL($\sqrt{rt}$) achieves order-optimal regret. 
\begin{figure}[t]
\centering
\scalebox{0.74}{
\begin{tikzpicture}[ every text node part/.style={align=center}]
                    
  \begin{scope}[start chain=1, node distance=9mm,  every node/.style = draw]
  { 
    \begin{scope}[every join/.style=-, node distance=4mm]
    {
    \node [ellipse, on chain,fill=red!10] {Backend \\ Server};
    \node [ellipse, on chain, join,fill=blue!10] {Cache};
    }
    \end{scope}       
    
    \node [name=arrow, single arrow, on chain=1, single arrow head extend=3pt, minimum height=7mm, rotate = 180] {};
    \node [name=1, draw,on chain=1, minimum size=.9cm,fill=yellow!40] {};

    \begin{scope}[node distance=0mm, minimum size=.9cm]
    { 
        \foreach \x in {2,3,4} {
          \node [name = \x, draw,on chain=1] {};
        }
        \node [name=5, draw,on chain=1,fill=yellow!40] {};
                \foreach \x in {6,7} {
          \node [name = \x, draw,on chain=1] {};
        }
          \node [name=8, draw,on chain=1,fill=yellow!40] {};
                \foreach \x in {9,10} {
          \node [name = \x, draw,on chain=1] {};
        }
    }
    \end{scope}

    \node [name=k, single arrow, draw, on chain=1, single     arrow head extend=3pt, minimum height=7mm, rotate =     180]  {};
    \node [circle, on chain=1,fill=green!10] {User};
    
   }  
   \end{scope}
     
   \begin{scope}[start chain=2 going right, node distance=0cm, minimum size=0.9cm]
   {    
      \node[xshift = 4cm, yshift=0.26cm] at (1.north) {Request sequence};
        
      \node[xshift = -0.6cm, yshift=-0.26cm] at (1.south) {\small $t\ =$};
        \node[yshift=-0.3cm, on chain=2] at (1.south) {\tiny 1};
        \node[on chain=2] {\tiny 2};
        \node[on chain=2] {\hspace{0.05cm}\ldots};
        \node[on chain=2] {\tiny $r_1$};
         \node[on chain=2] {\tiny $r_1+1$};
        \node[on chain=2] {\ldots};
        \node[on chain=2] {\hspace{-0.05cm}\tiny $r_1+r_2$};
         \node[on chain=2,yshift=-0.18cm] {\hspace{-0.2cm} \tiny $r_1+r_2$\\ \tiny$+1$};
        \node[on chain=2,yshift=0.18cm] {\hspace{-0.15cm}\ldots};
        \node[on chain=2] {\hspace{-0.2cm}\tiny $T$};
   }
   \end{scope}
    
\end{tikzpicture}
}
\caption{The time slots where the cache is allowed to update its contents have been marked in yellow.}
\label{fig: restricted}
\end{figure}
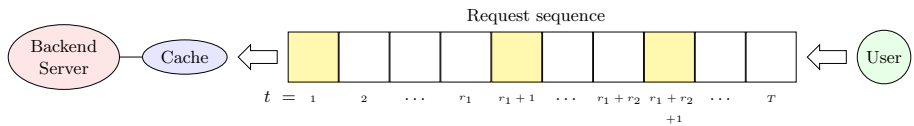
\subsection{Stochastic requests}
\begin{theorem}  \label{thm: restricted_stochastic_main}
The file requests are stochastic i.i.d. with the popularity distribution $\boldsymbol{\mu}$. When cache updates are restricted to $s+1$ fixed points defined by the inter-switching periods $\{r_i\}_{i=1}^{s}$ as outlined above, 
\begin{enumerate}[label=(\alph*)]
\item When $L=2, C=1$, for any online caching policy $\pi$, there exists a popularity distribution such that the popularities of the two files are greater than $1>a>0$ and the difference in the popularities is $\Delta$, such that
\begin{align*}
R^{\pi}_S(T) \geq \frac{r_1 \Delta}{2} +  \sum_{i=2}^{s} r_i \frac{\Delta}{4} \exp \left(- t_i \frac{\Delta^2}{a^2} \right).
\end{align*}
\item The regret of FTPL($\alpha \sqrt{t}$) is upper bounded as: 
\begin{align*}
R_{S}^{FTPL(\alpha \sqrt{t})}(T) &\leq r_1+ 2\sum_{j=1}^{C} \sum_{k=C+1}^{L} \sum_{i=2}^{s} r_i \, \Delta_{j, k} \left ( e^{-t_i \Delta_{j, k}^{2} / 8} + e^{-t_i \Delta_{j,k}^2/32 \alpha^2} \right ).
\end{align*}
\end{enumerate}
\end{theorem}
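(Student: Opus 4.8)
\emph{Proof proposal.}

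\textbf{Part (a): the lower bound.} The plan is a two-point (Le Cam / Bretagnolle--Huber) argument localised to each inter-switching block. Write $t_i=\sum_{j=1}^{i-1}r_j$ for the number of requests revealed before the $i$-th switch, so $t_1=0$. Fix $a\in(0,\tfrac12)$ and $\Delta\in(0,1-2a)$ and consider the two symmetric instances on $\mathcal{L}=\{1,2\}$ with popularity vectors $\left(\tfrac{1+\Delta}{2},\tfrac{1-\Delta}{2}\right)$ (instance $P$) and $\left(\tfrac{1-\Delta}{2},\tfrac{1+\Delta}{2}\right)$ (instance $Q$); both entries exceed $a$ by the choice of $\Delta$. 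Since the cache holds exactly one file and is frozen inside a block, letting $A_i$ be the event ``file $2$ is the cached file during block $i$'' makes the per-slot regret in block $i$ equal $\Delta\,\mathbb{P}_P[A_i]$ under $P$ and $\Delta\,\mathbb{P}_Q[A_i^c]$ under $Q$, so block $i$ contributes $r_i\Delta\,\mathbb{P}_P[A_i]$ (resp.\ $r_i\Delta\,\mathbb{P}_Q[A_i^c]$) to the regret. For $i=1$ the configuration is chosen before any request, hence $\mathbb{P}_P[A_1]+\mathbb{P}_Q[A_1^c]=1$, and averaging over the two instances gives $\tfrac{r_1\Delta}{2}$. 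For $i\ge 2$ the configuration in block $i$ is a randomised function of the first $t_i$ i.i.d.\ requests, so Bretagnolle--Huber gives $\mathbb{P}_P[A_i]+\mathbb{P}_Q[A_i^c]\ge\tfrac12\exp\!\left(-t_i\,D_{\mathrm{KL}}(\mathrm{Ber}(\tfrac{1+\Delta}{2})\,\|\,\mathrm{Ber}(\tfrac{1-\Delta}{2}))\right)$, and $D_{\mathrm{KL}}(\mathrm{Ber}(\tfrac{1+\Delta}{2})\,\|\,\mathrm{Ber}(\tfrac{1-\Delta}{2}))\le\chi^2(\mathrm{Ber}(\tfrac{1+\Delta}{2})\,\|\,\mathrm{Ber}(\tfrac{1-\Delta}{2}))=\tfrac{4\Delta^2}{1-\Delta^2}\le\tfrac{\Delta^2}{a^2}$, the last step using $\Delta<1-2a$ and $a<\tfrac12$. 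Summing the per-block averaged bounds and using that the larger of the two regrets is at least their average yields the claimed lower bound, the witnessing popularity distribution being whichever of $P,Q$ attains the maximum.

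\textbf{Part (b): the upper bound.} This is the block-by-block version of the stochastic analysis of FTPL$(\alpha\sqrt{t})$ in the unrestricted setting (Theorem~\ref{thm: unlimited_stochastic}(c)). The first block contributes at most $r_1$ since the per-slot regret never exceeds $1$. For a block $i\ge2$, FTPL freezes for all $r_i$ slots of that block the top-$C$ set $C(t)$ of the perturbed counts $\tilde c_\ell=X^\ell_{t_i+1}+\eta_i\gamma_\ell$ computed at the switch, where $X^\ell_{t_i+1}$ is the number of requests for file $\ell$ in the first $t_i$ slots and $\eta_i$ is the learning rate at the $i$-th switch, of order $\alpha\sqrt{t_i}$. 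First I would bound the conditional per-slot regret by pairwise inversion terms: conditioned on $C(t)$ it equals $\sum_{\ell\in\mathcal{C}\setminus C(t)}\mu_\ell-\sum_{\ell\in C(t)\setminus\mathcal{C}}\mu_\ell$, and pairing the missing high-popularity files (all in $\{1,\dots,C\}$) with the spurious low-popularity ones (all in $\{C+1,\dots,L\}$) and using $\mu_j-\mu_k\ge0$ for $j\le C<k$ bounds this by $\sum_{j\le C}\sum_{k>C}\Delta_{j,k}\,\mathbbm{1}\{j\notin C(t),\,k\in C(t)\}$; taking expectations and noting that $j\notin C(t),k\in C(t)$ forces $\tilde c_k\ge\tilde c_j$ gives per-slot regret at most $\sum_{j\le C}\sum_{k>C}\Delta_{j,k}\,\mathbb{P}[\tilde c_k\ge\tilde c_j]$. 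Next I would split $\{\tilde c_k\ge\tilde c_j\}=\{X^k_{t_i+1}-X^j_{t_i+1}+\eta_i(\gamma_k-\gamma_j)\ge0\}$ at the midpoint $t_i\Delta_{j,k}/2$ of the gap between the two means: the empirical part $X^k_{t_i+1}-X^j_{t_i+1}$ is a sum of $t_i$ i.i.d.\ increments in $\{-1,0,1\}$ with mean $-\Delta_{j,k}$, so Hoeffding gives $\mathbb{P}[X^k_{t_i+1}-X^j_{t_i+1}\ge-t_i\Delta_{j,k}/2]\le e^{-t_i\Delta_{j,k}^2/8}$, while splitting $\gamma_k-\gamma_j$ by a union bound into its two one-sided standard-normal tails gives $\mathbb{P}[\eta_i(\gamma_k-\gamma_j)\ge t_i\Delta_{j,k}/2]\le 2e^{-t_i\Delta_{j,k}^2/(32\alpha^2)}$. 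Multiplying by $r_i$, summing over $i\ge2$ and over the pairs $(j,k)$, adding the $r_1$ from the first block, and absorbing $e^{-t_i\Delta_{j,k}^2/8}+2e^{-t_i\Delta_{j,k}^2/(32\alpha^2)}\le 2(e^{-t_i\Delta_{j,k}^2/8}+e^{-t_i\Delta_{j,k}^2/(32\alpha^2)})$ yields the stated bound.

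\textbf{Main obstacle.} In part (a) the delicate point is turning the per-block hypothesis-testing estimates into a lower bound for a single fixed instance: since the ``harder'' of $P,Q$ could in principle differ from block to block, one must average over $\{P,Q\}$ first and only then pass to the maximum, and one must check that the constraint on $\Delta$ makes the $\chi^2$ (hence KL) bound $\le\Delta^2/a^2$ hold simultaneously for every block. In part (b) the substantive step is the per-slot regret decomposition---showing that a ``wrong'' frozen configuration costs no more than the sum of $\Delta_{j,k}$ over displaced index pairs and that each displacement implies the corresponding perturbed-count inversion---after which the two concentration bounds (Hoeffding for the empirical counts, Gaussian tails for the perturbations, with the learning rate of order $\alpha\sqrt{t_i}$ chosen to balance the two exponents) are routine; a minor point to get right is that $X^k_{t_i+1}-X^j_{t_i+1}$ is a difference of dependent binomials yet is still a sum of bounded i.i.d.\ terms, so Hoeffding applies to it directly.
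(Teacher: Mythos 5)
Your proposal is correct and follows essentially the same route as the paper: part (a) is the per-block Bretagnolle--Huber two-instance argument with KL tensorization over the $t_i$ revealed requests (the paper uses a generic swapped pair $\boldsymbol{\mu},\boldsymbol{\mu}'$ and the bound $D(\mathbb{P}_{\mu},\mathbb{P}_{\mu'})\le \Delta^2/(\mu_1\mu_2)\le \Delta^2/a^2$, where you use the symmetric instance and a $\chi^2$ bound --- a cosmetic difference), and part (b) is the same frozen-block pairwise-inversion decomposition with Hoeffding plus Gaussian tails, differing only in how the union bound is organized (you apply Hoeffding once to the difference of counts at threshold $t_i\Delta_{j,k}/2$, the paper splits each of the two empirical averages and two perturbations at $\Delta_{j,k}/4$), with identical exponents either way.
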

In part (a), we prove a fundamental lower bound on the regret of any policy $\pi$ under stochastic file requests when cache updates are restricted to $s+1$ fixed time slots. The proof of this result can be found in Appendix \ref{sec: app_restricted_stochastic_lb}. In part (b), we prove an upper bound on the regret of the FTPL($\alpha \sqrt{t}$) policy when cache updates are restricted to $s+1$ fixed time slots. The proof of this result can be found in Appendix \ref{sec: app_restricted_stochastic_ub_ftpl}. We thus have that the FTPL($\alpha \sqrt{t}$) policy has order-optimal regret in this setting under stochastic file requests. Next, we consider the special case where all $r_i$ are equal to $r$, i.e., $s=T/r$. 
\begin{theorem} \label{thm: constrained_stochastic_main}
The file requests are stochastic i.i.d. with the popularity distribution $\boldsymbol{\mu}$. When the cache is allowed to update only after every $r$ requests, 
\begin{align*}
   R_{S}^{\textrm{FTPL}(\alpha \sqrt{t}) }  \leq 1+t_0' + 2 \left ( \frac{8}{\Delta_{\min}} + \frac{32 \alpha^2}{\Delta_{\min}} \right ). 
\end{align*}    
where $t_0'=\max \left \{r,\frac{8}{\Delta_{\min}^2} \log \left ( {L^2}\right )  , \frac{32 \alpha^2}{\Delta_{\min}^2 }\log \left ( {L^2}\right ) \right \}$.
\end{theorem}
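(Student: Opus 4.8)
The plan is to deduce this bound from Theorem~\ref{thm: restricted_stochastic_main}(b) (and the per‑block estimates in its proof) by specializing to the homogeneous case $r_i=r$ for all $i$, so that $s=T/r$ and the $i$-th switching instant acts on a count vector built from $t_i=(i-1)r$ requests. Substituting into the bound of Theorem~\ref{thm: restricted_stochastic_main}(b) gives
\[
R_{S}^{\textrm{FTPL}(\alpha\sqrt{t})} \;\le\; r \;+\; 2\sum_{j=1}^{C}\sum_{k=C+1}^{L}\Delta_{j,k}\sum_{i=2}^{T/r} r\Bigl(e^{-(i-1)r\Delta_{j,k}^{2}/8}+e^{-(i-1)r\Delta_{j,k}^{2}/(32\alpha^{2})}\Bigr).
\]
Extending the sum over $i$ to infinity removes the dependence on $T$; the remaining work is to show the right‑hand side is at most $1+t_0'+2\bigl(\tfrac{8}{\Delta_{\min}}+\tfrac{32\alpha^{2}}{\Delta_{\min}}\bigr)$.

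First I would note that the naive route — bounding each geometric tail by $\sum_{m\ge1}re^{-mr\gamma}\le 1/\gamma$ and then summing over all $(j,k)$ — loses a factor $C(L-C)$ in front of $1/\Delta_{\min}$, which is too large. To avoid this I would split the blocks at the threshold $t_0'=\max\{r,\ \tfrac{8}{\Delta_{\min}^{2}}\log L^{2},\ \tfrac{32\alpha^{2}}{\Delta_{\min}^{2}}\log L^{2}\}$ and treat the two phases differently. For the burn‑in phase — block $1$ and every block $i$ with $t_i<t_0'$ — I use only the trivial bound that the instantaneous regret in a slot is at most $1$; since these blocks span at most $t_0'$ slots up to integer rounding, they contribute at most $1+t_0'$, and, crucially, this estimate is applied \emph{before} any union bound over $(j,k)$, so no $C(L-C)$ appears here.

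For the converged phase — blocks with $t_i\ge t_0'$ — I would instead use the per‑block estimate behind Theorem~\ref{thm: restricted_stochastic_main}(b): the expected regret in block $i$ is at most $2r\sum_{j,k}\Delta_{j,k}(e^{-t_i\Delta_{j,k}^{2}/8}+e^{-t_i\Delta_{j,k}^{2}/(32\alpha^{2})})$. Summing the geometric tail $\sum_{t_i\ge t_0'}re^{-t_i\gamma}$ in an $r$-free way and using that its leading term already carries the factor $e^{-t_0'\Delta_{\min}^{2}/8}\le L^{-2}$ (respectively $e^{-t_0'\Delta_{\min}^{2}/(32\alpha^{2})}\le L^{-2}$) by the definition of $t_0'$, together with $C(L-C)\le L^{2}/4$ and $\Delta_{j,k}\ge\Delta_{\min}$, collapses $\sum_{j,k}$ to an $O(1/\Delta_{\min})$ quantity and yields the $2\bigl(\tfrac{8}{\Delta_{\min}}+\tfrac{32\alpha^{2}}{\Delta_{\min}}\bigr)$ term, the leading $2$ being inherited from Theorem~\ref{thm: restricted_stochastic_main}(b).

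The main obstacle will be making the two regimes dovetail while keeping every estimate genuinely free of $r$: when $r\ge\tfrac{8}{\Delta_{\min}^{2}}\log L^{2}$ there are effectively no burn‑in blocks past the first and the whole bound must be absorbed into $t_0'=r$, whereas when $r$ is small the burn‑in must last $\Theta(\log L)$ requests so that the $e^{-t_0'\Delta_{\min}^{2}/8}$ factor can defeat the union bound over $(j,k)$; defining $t_0'$ as the maximum of $r$ and the two logarithmic thresholds is exactly what reconciles these cases, and the geometric‑tail estimate has to be carried out so that the inter‑switch period $r$ does not leak into the final constant. A secondary nuisance is the integer rounding of $T/r$ and $t_0'/r$, which I expect to be responsible for the harmless additive $1$.
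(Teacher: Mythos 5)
Your plan is correct and follows essentially the same route as the paper's proof: bound the first $\lceil t_0'\rceil$ slots trivially by $1+t_0'$, then apply the per-block bound of Theorem~\ref{thm: restricted_stochastic_main}(b) to the blocks with $t_i\ge t_0'$, using $e^{-t_0'\Delta_{\min}^2/8}\le L^{-2}$ (and likewise for the $32\alpha^2$ exponent) to absorb the $C(L-C)$ union bound, together with an $r$-free geometric-tail estimate and the additive $1$ from integer rounding. The one detail you leave implicit is how ``$\Delta_{j,k}\ge\Delta_{\min}$'' produces the $O(1/\Delta_{\min})$ rather than $O(1/\Delta_{\min}^2)$ constant: the paper gets $\Delta_{j,k}e^{-t\Delta_{j,k}^2/8}\le\Delta_{\min}e^{-t\Delta_{\min}^2/8}$ from the monotonicity of $u\, e^{-u^2/2}$ on $[1,\infty)$ (valid since $t\ge t_0'>4/\Delta_{\min}^2$), and this is exactly the step you would need to make precise.
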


In  Theorem \ref{thm: constrained_stochastic_main}, we prove an $\mathcal{O}(\max \{r,\log L\})$ upper bound on the regret of the FTPL($\alpha \sqrt{t}$) algorithm under stochastic file requests. While the order-optimality of this policy with respect to $r$ follows from Theorem \ref{thm: restricted_stochastic_main}, we also note that the bound proved here improves upon the worst-case $\mathcal{O}(L^2)$ dependency in the upper bound proved in part (b) of Theorem \ref{thm: restricted_stochastic_main} to $\mathcal{O}(\log L)$. The proof of this result can be found in Appendix \ref{sec: app_constrained_stochastic_ub_ftpl}. 
\subsection{Adversarial requests}
\begin{theorem} \label{thm: restricted_adversarial_main}
Files are requested by an oblivious adversary. When cache updates are restricted to $s+1$ fixed points defined by $\{r_i\}_{i=1}^{s}$ as outlined above,  
\begin{enumerate}[label=(\alph*)]
\item For any online caching policy $\pi$ and $L \geq 2C$, 
\begin{align*}
 R^{\pi}_{A}(T)  \geq \frac{1}{2}   \left(0.15 \sqrt{ C\sum_{i=1}^{s}r_i^2 } \left(1-\frac{(C-1)\left(\sum_{i=1}^{s}r_i^4 \right)}{2\left(\sum_{i=1}^{s}r_i^2 \right)^{2}}\right) -0.6 \, C  \underset{1 \leq i \leq s}{\max} r_i  \right). 
\end{align*}
\item The regret of FTPL($\alpha \sqrt{t}$) is upper bounded as:
\begin{align*}
R_{A}^{\textrm{FTPL}(\alpha \sqrt{t})}(T) &\leq  \mathcal{O}(\alpha \sqrt{T})+ \sqrt{\frac{2}{\pi}} \sum_{i=1}^{s} \frac{r_i^2}{\alpha \sqrt{\sum_{j=0}^{i-1}r_j }}. 
\end{align*}
\end{enumerate}
\end{theorem}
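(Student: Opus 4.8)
The statement has two independent halves: part (a) is a scheme-agnostic lower bound, proved via a randomized oblivious adversary, and part (b) is an upper bound obtained by running the standard FTPL analysis on the ``block-aggregated'' instance. I discuss them in turn.

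\textbf{Part (a).} The plan is to take $L=2C$ (using any $2C$ of the files if $L>2C$), group them into $C$ pairs $\{2j-1,2j\}$, and consider the oblivious adversary which, in the $i$-th inter-switching block, sends all $r_i$ requests to a single file $f_i$ drawn uniformly from the $2C$ files, independently across blocks. Since the worst case dominates the average, it suffices to lower bound $\mathbb{E}[R^{\pi}_A(\{x_t\}_{t=1}^T,T)]$ under this distribution. The online term is immediate: the configuration used during block $i$ is fixed at the start of block $i$ and is therefore independent of $f_i$, so a hit occurs in block $i$ with probability at most $C/(2C)=1/2$, giving expected online reward at most $\sum_i r_i/2=T/2$. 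For the benchmark, caching the more requested file of each pair is a feasible static configuration, so with $N_\ell$ the total count of file $\ell$ one gets $\sup_{\boldsymbol y\in\mathcal Y}\langle\boldsymbol y,\boldsymbol X_{T+1}\rangle\ge\sum_{j=1}^{C}\max(N_{2j-1},N_{2j})=\tfrac T2+\tfrac12\sum_{j=1}^{C}|N_{2j-1}-N_{2j}|$, hence $R^{\pi}_A(T)\ge\tfrac12\sum_j\mathbb{E}|N_{2j-1}-N_{2j}|=\tfrac C2\,\mathbb{E}|S|$, where $S:=\sum_i r_i\xi_i$ with i.i.d.\ $\xi_i\in\{-1,0,1\}$, $\mathbb{P}(\xi_i=\pm1)=\tfrac1{2C}$. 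It remains to anti-concentrate $S$: using $\mathbb{E}S^2=\tfrac1C\sum_i r_i^2$ and $\mathbb{E}S^4=\tfrac3{C^2}(\sum_i r_i^2)^2+\tfrac{C-3}{C^2}\sum_i r_i^4$ in a Paley--Zygmund / Berry--Esseen-type bound yields $\mathbb{E}|S|\gtrsim\sqrt{\tfrac1C\sum_i r_i^2}\,\bigl(1-\Theta(1)\tfrac{\sum_i r_i^4}{(\sum_i r_i^2)^2}\bigr)$, with the non-Gaussianity/discreteness of the $\xi_i$ contributing an additive error of order $\sum_i r_i^3/\sum_i r_i^2\le\max_i r_i$; multiplying by $C/2$ gives the stated bound once the explicit constants ($0.15$, $0.6$) are tracked. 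The main obstacle here is precisely this quantitative anti-concentration step --- obtaining usable absolute constants while correctly accounting for the fourth-moment correction and for the discreteness of the per-block increments.

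\textbf{Part (b).} The plan is to regard the restricted instance as online linear optimization over $s$ ``super-rounds'', where super-round $i$ has reward vector $\boldsymbol\Delta_i:=\sum_{t\in\text{block }i}\boldsymbol x_t$ (so $\|\boldsymbol\Delta_i\|_1=r_i$, $\|\boldsymbol\Delta_i\|_\infty\le r_i$), and FTPL$(\alpha\sqrt t)$ plays $\boldsymbol y_i:=\argmax_{\boldsymbol y\in\mathcal Y}\langle\boldsymbol y,\boldsymbol X_{t_i}+\eta_{t_i}\boldsymbol\gamma\rangle$, where $t_i:=1+\sum_{j<i}r_j$ is the start of block $i$, $\boldsymbol X_{t_i}=\sum_{j<i}\boldsymbol\Delta_j$, and $\eta_{t_i}=\alpha\sqrt{t_i}$ (note $t_i=\sum_{j=0}^{i-1}r_j$ with the convention $r_0:=1$, which matches the denominator in the statement). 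Introducing the one-block look-ahead iterate $\tilde{\boldsymbol y}_i:=\argmax_{\boldsymbol y\in\mathcal Y}\langle\boldsymbol y,\boldsymbol X_{t_i}+\boldsymbol\Delta_i+\eta_{t_i}\boldsymbol\gamma\rangle$, I would split
\[
R_A^{\textrm{FTPL}(\alpha\sqrt t)}(T)=\Bigl[\sup_{\boldsymbol y\in\mathcal Y}\langle\boldsymbol y,\boldsymbol X_{T+1}\rangle-\sum_{i=1}^{s}\mathbb{E}\langle\tilde{\boldsymbol y}_i,\boldsymbol\Delta_i\rangle\Bigr]+\sum_{i=1}^{s}\mathbb{E}\langle\tilde{\boldsymbol y}_i-\boldsymbol y_i,\boldsymbol\Delta_i\rangle .
\]
The first bracket is handled by the usual ``be-the-(perturbed-)leader'' telescoping for FTPL with a non-decreasing perturbation scale (the same mechanism as in \cite{sigmetrics,sqrt_t}, but over super-rounds): it is at most $\eta_{t_s}\,\mathbb{E}[\sup_{\boldsymbol y\in\mathcal Y}\langle\boldsymbol y,\boldsymbol\gamma\rangle]$ plus a telescoping remainder, and since $\mathbb{E}[\sup_{\boldsymbol y\in\mathcal Y}\langle\boldsymbol y,\boldsymbol\gamma\rangle]=\mathcal O(\sqrt{C\ln(Le/C)})$ and $\eta_{t_s}\le\alpha\sqrt T$, this is $\mathcal O(\alpha\sqrt T)$. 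For the second bracket, optimality of $\boldsymbol y_i$ and $\tilde{\boldsymbol y}_i$ gives $\langle\tilde{\boldsymbol y}_i-\boldsymbol y_i,\boldsymbol\Delta_i\rangle\ge0$, and for the upper bound I would use $\langle\tilde{\boldsymbol y}_i-\boldsymbol y_i,\boldsymbol\Delta_i\rangle\le\|\boldsymbol\Delta_i\|_\infty\,\|\tilde{\boldsymbol y}_i-\boldsymbol y_i\|_1\le r_i\,\|\tilde{\boldsymbol y}_i-\boldsymbol y_i\|_1$, where $\|\tilde{\boldsymbol y}_i-\boldsymbol y_i\|_1$ is twice the number of files that the top-$C$ set gains (equivalently loses) when $\boldsymbol\Delta_i$ is added. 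Conditioning on $\{\gamma(\ell')\}_{\ell'\ne\ell}$, file $\ell$ can only enter the top-$C$ set if $\gamma(\ell)$ falls in an interval of length at most $\Delta_i^\ell/\eta_{t_i}$, which by the $1/\sqrt{2\pi}$ bound on the standard-normal density has probability at most $\Delta_i^\ell/(\eta_{t_i}\sqrt{2\pi})$; summing over $\ell$ and using $\sum_\ell\Delta_i^\ell=r_i$ gives $\mathbb{E}\|\tilde{\boldsymbol y}_i-\boldsymbol y_i\|_1\le 2r_i/(\eta_{t_i}\sqrt{2\pi})$, hence $\mathbb{E}\langle\tilde{\boldsymbol y}_i-\boldsymbol y_i,\boldsymbol\Delta_i\rangle\le\sqrt{2/\pi}\,r_i^2/(\alpha\sqrt{t_i})$. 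Summing over $i$ produces the second term in the statement. Here the main work is bookkeeping: making the be-the-leader argument go through with the per-block, non-decreasing learning rate (and checking no extra term survives beyond what $\mathcal O(\alpha\sqrt T)$ hides), and verifying that only the ``file enters'' direction (not the harder ``file leaves'' direction) needs the density estimate, the two being equal in count.
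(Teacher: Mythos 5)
Your proposal is correct in both halves, but the two parts sit differently relative to the paper. For part (a) you follow essentially the same blueprint as the paper: the same block-constant randomized adversary over $2C$ files, the same $T/2$ bound on any online policy's expected reward (via independence of $\boldsymbol{y}_{t_i}$ and the block's file), and the same reduction of the benchmark to pair-maxima. Where you diverge is the final anti-concentration step: you work directly with $S=\sum_i r_i\xi_i$ for three-valued $\xi_i$ and invoke a moment/Berry--Esseen bound, whereas the paper first conditions on which blocks land in a given super-bin, applies Markov plus Berry--Esseen to the resulting conditional $\pm r_i/2$ sum (this is where $1-\Phi(1)\geq 0.15$ and the $\max_i r_i$ error enter), and then handles the outer expectation of $\sqrt{\sum_i r_i^2\,\mathbbm{1}\{W_i\in\text{super-bin}\}}$ via the expansion $\mathbb{E}\sqrt{Y}\geq\sqrt{\mathbb{E}Y}\,(1-\mathrm{Var}(Y)/(2(\mathbb{E}Y)^2))$ borrowed from \cite{sigmetrics} --- that is precisely the origin of the $(C-1)\sum_i r_i^4/(2(\sum_i r_i^2)^2)$ correction factor. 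Your single-shot route is cleaner and, if carried out, would in fact yield a bound without that correction factor (a non-iid Berry--Esseen with third moments $\rho_i=r_i^3/C$ gives $0.15-C_0\sqrt{C}\max_i r_i/\sqrt{\sum_i r_i^2}$ directly); but this is exactly the step you leave as a sketch, and it is the only quantitatively delicate step in part (a), so you should be aware you have asserted rather than derived the constants.

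For part (b) you take a genuinely different route from the paper. The paper aggregates into super-rounds exactly as you do, but then runs the Gaussian-smoothed potential argument of \cite{sqrt_t}: Taylor-expand $\Phi_i(\boldsymbol{X}^i)-\Phi_i(\boldsymbol{X}^{i-1})$, telescope, absorb the learning-rate increments via $|\eta^i-\eta^{i-1}|\,\mathbb{E}\max_{\boldsymbol{y}}\langle\boldsymbol{y},\boldsymbol{\gamma}\rangle$, and bound the Hessian quadratic form by $r_i^2\max_{k,j}|\nabla^2\Phi_i|_{kj}\leq r_i^2\sqrt{2/\pi}/\eta^i$. You instead use the Kalai--Vempala decomposition into a be-the-perturbed-leader term (handled by telescoping with the non-decreasing $\eta$, giving $\eta^s\,\mathbb{E}\max_{\boldsymbol{y}}\langle\boldsymbol{y},\boldsymbol{\gamma}\rangle=\mathcal{O}(\alpha\sqrt{T})$) and a stability term, which you control by the interval/density argument: since adding $\boldsymbol{\Delta}_i$ can only raise the competing threshold, file $\ell$ enters the top-$C$ set only if $\gamma(\ell)$ lands in an interval of length at most $\Delta_i^\ell/\eta^i$, giving $\mathbb{E}\|\tilde{\boldsymbol{y}}_i-\boldsymbol{y}_i\|_1\leq 2r_i/(\eta^i\sqrt{2\pi})$ and hence exactly the paper's per-block term $\sqrt{2/\pi}\,r_i^2/(\alpha\sqrt{t_i})$. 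Both analyses are standard and land on the same bound with the same constant; yours avoids computing second derivatives of the smoothed potential at the cost of the (correctly identified, and correct) one-sided ``enters vs.\ leaves'' accounting, while the paper's reuses machinery already established in \cite{sigmetrics,sqrt_t}.
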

The proof of this theorem can be found in Appendix~\ref{sec: app_restricted_adversarial_lb} and Appendix~\ref{sec: app_restricted_adversarial_ub_ftpl} respectively. In part (a), we prove a lower bound on the regret of any policy $\pi$ under adversarial file requests when cache updates are restricted to $s$ fixed time slots. Note that a necessary condition for this bound to be meaningful is 
\begin{align}
4 \underset{1 \leq i \leq s}{\max} r_i \leq \sqrt{\sum_{i=1}^{s} r_{i}^{2} }. \label{eqn: necessary_condition_restricted_ad_lb}
\end{align}
 When this bound is meaningful, we have that $R_A^{\pi}(T) = \Omega \left (\sqrt{C \sum_{i=1}^{s} r_i^2 } \right )$ for any online caching policy $\pi$. When all the $r_i$'s are equal, this condition translates to $r \leq T/16$. This condition does not hold if any of the $r_i$'s is too large. For instance, when $s=3$ and $r_1=T/2, r_2=r_3=T/4$, this condition does not hold. When $\underset{1 \leq i \leq s}{\max} r_i$ and $\underset{1 \leq i \leq s}{\min} r_i$ are known, a sufficient condition for \eqref{eqn: necessary_condition_restricted_ad_lb} to hold is:
\begin{align*}
\frac{\underset{1 \leq i \leq s}{\max} r_i}{\underset{1 \leq i \leq s}{\min} r_i} \leq \frac{\sqrt{s}}{4}. 
\end{align*}

We now discuss the special case where all $r_i$ are equal to $r$, i.e., $s=T/r$. It follows from part (b) of Theorem \ref{thm: restricted_adversarial_main} that FTPL($\sqrt{rt}$) achieves a regret of $\mathcal{O}(\sqrt{rT})$. The following theorem provides a matching lower bound for this setting that proves that FTPL($\sqrt{rt}$) is order-optimal, the proof of which can be found in Appendix \ref{sec: app_constrained_lb_adversarial}. 
\begin{theorem} \label{thm: constrained_adversarial_main}
Files are requested by an oblivious adversary. When the cache is allowed to update only after every $r$ requests, for any online caching policy $\pi$ and $L \geq 2C$, 
\begin{align*}
  R^{\pi}_{A}(T)  \geq   \begin{cases}
  \sqrt{\frac{CrT}{2\pi }}-\Theta\left({\frac{r\sqrt{r}}{\sqrt{T}}}\right), & \text{when } r=o(T), \\
    \Omega(T), & \text{when } r=\Omega(T). 
  \end{cases}
\end{align*}
\end{theorem}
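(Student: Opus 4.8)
\emph{Proof proposal.} The plan is to prove the bound by the probabilistic method: construct a distribution over request sequences and show that every (possibly randomized) online policy has expected regret at least the stated quantity, so that some deterministic sequence witnesses it. Since the cache may only be refreshed at times $1, r+1, 2r+1,\dots$, it is natural to work with $s=T/r$ blocks of $r$ consecutive requests. Take $L=2C$ and group the files into $C$ pairs $\{2j-1,2j\}$, $j=1,\dots,C$. For block $i$, draw independently a pair $p_i\sim\mathrm{Unif}\{1,\dots,C\}$ and a sign $\sigma_i\sim\mathrm{Unif}\{-1,+1\}$, and let all $r$ requests in block $i$ be for file $2p_i-1$ if $\sigma_i=+1$ and for file $2p_i$ if $\sigma_i=-1$. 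This is essentially the hard instance behind Theorem~\ref{thm: restricted_adversarial_main}(a) specialized to equal block lengths, but the equal-length structure will let us replace the Paley--Zygmund type estimate used there by the exact first-order asymptotics of a simple random walk, which is what yields the sharp constant $1/\sqrt{2\pi}$.

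First I would upper bound the reward of an arbitrary online policy $\pi$. The cache configuration $\boldsymbol y$ in force throughout block $i$ is determined by $\pi$'s internal randomness and by the requests in blocks $1,\dots,i-1$, hence is independent of the file requested in block $i$; and that file is uniformly distributed over all $2C$ files. Therefore the conditional expected number of hits in block $i$ is $r\,\|\boldsymbol y\|_1/(2C)\le r/2$, and summing over the $s$ blocks gives $\sum_{t=1}^{T}\mathbb{E}\langle\boldsymbol y_t,\boldsymbol x_t\rangle\le rs/2=T/2$ for every $\pi$.

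Next I would lower bound the reward of the best static configuration in hindsight. For pair $j$ set $N_j=|\{i:p_i=j\}|$ and $W_j=r\sum_{i:p_i=j}\sigma_i$; within pair $j$ the more frequently requested file received $\tfrac12\bigl(rN_j+|W_j|\bigr)$ requests. The static configuration that stores, for each pair, its more frequently requested file is feasible (it holds exactly $C$ files), so $\sup_{\boldsymbol y\in\mathcal{Y}}\langle\boldsymbol y,\boldsymbol X_{T+1}\rangle\ge\sum_{j=1}^{C}\tfrac12\bigl(rN_j+|W_j|\bigr)=\tfrac{T}{2}+\tfrac12\sum_{j=1}^{C}|W_j|$, using $\sum_j N_j=s$. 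Combining the two displays and taking expectations over the random sequence,
\begin{align*}
R^{\pi}_{A}(T)\ \ge\ \mathbb{E}\Big[\sup_{\boldsymbol y\in\mathcal{Y}}\langle\boldsymbol y,\boldsymbol X_{T+1}\rangle\Big]-\frac{T}{2}\ \ge\ \frac12\sum_{j=1}^{C}\mathbb{E}|W_j|\ =\ \frac{C}{2}\,\mathbb{E}|W_1| .
\end{align*}
Conditioned on $N_1=n$, $W_1$ equals $r$ times a sum of $n$ i.i.d.\ signs, so $\mathbb{E}|W_1|=r\,\mathbb{E}|S_{N_1}|$ with $S_n$ a simple symmetric random walk and $N_1\sim\mathrm{Bin}(s,1/C)$ independent of it. When $r=o(T)$, i.e.\ $s/C\to\infty$, I would use $\mathbb{E}|S_n|=\sqrt{2n/\pi}-\Theta(1/\sqrt n)$ together with the concentration of $N_1$ about its mean $s/C$ to get $\mathbb{E}|S_{N_1}|\ge\sqrt{2s/(C\pi)}-\Theta(\sqrt{C/s})$, hence $\mathbb{E}|W_1|\ge\sqrt{2rT/(C\pi)}-\Theta(r^{3/2}/\sqrt T)$; multiplying by $C/2$ gives $R^{\pi}_{A}(T)\ge\sqrt{CrT/(2\pi)}-\Theta(r^{3/2}/\sqrt T)$. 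When $r=\Omega(T)$ we have $s=\Theta(1)$; using only $\mathbb{E}|S_{N_1}|\ge\mathbb{P}(N_1\ge1)\ge1/C$ we already get $R^{\pi}_{A}(T)\ge\tfrac{C}{2}\cdot r\cdot\tfrac1C=r/2=\Omega(T)$.

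The main obstacle is recovering the sharp leading constant: a crude bound such as the $L^1$ Khintchine inequality gives only $\mathbb{E}|S_n|\ge\sqrt{n/2}$, which is too small, so one must instead invoke the precise asymptotics of $\mathbb{E}|S_n|$ with an explicit error term and then carefully propagate the random number of walk steps $N_1$ through the concave map $n\mapsto\sqrt n$ --- controlling the resulting Jensen gap is exactly what pins down the $\Theta(r^{3/2}/\sqrt T)$ lower-order term. A secondary point worth spelling out is why the active pair must be randomized in each block: with a deterministic (e.g.\ round-robin) assignment of blocks to pairs, an online policy that inferred the schedule could, at each block boundary, pre-load both files of the upcoming block's pair and thereby obtain more than $T/2$ hits; randomizing $p_i$ makes the requested file uniform over all $2C$ files, which is what forces the clean $T/2$ bound regardless of the policy.
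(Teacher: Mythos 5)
Your proposal is correct and follows essentially the same route as the paper: the identical hard distribution (a uniformly random one of the first $2C$ files repeated $r$ times in each of the $T/r$ blocks), the same $T/2$ upper bound on any online policy's expected reward via independence of the cache configuration from the current block's request, and the same balls-into-bins/random-walk estimate of the static optimum --- the only difference being that you re-derive the expected maximum explicitly (via the asymptotics of $\mathbb{E}|S_n|$ and the Jensen-gap control for the binomial block counts $N_j$), whereas the paper observes that the count vector is exactly $r$ times that of an unrestricted length-$T/r$ uniform sequence and invokes Theorem~2 of \cite{sigmetrics} as a black box. Your treatment of the $r=\Omega(T)$ case through the same construction, using $\mathbb{E}|S_{N_1}|\ge\mathbb{P}(N_1\ge 1)\ge 1/C$, is also valid and arguably cleaner than the paper's separate appeal to the randomly initialized cache.
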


\section{Numerical experiments} \label{sec: numerical}
In this section, we present the results of numerical simulations for the various policies discussed in Section \ref{sec: policies}. 
\begin{figure}[H]
     \centering
     \begin{subfigure}[b]{0.495\textwidth}
         \centering
         \scalebox{0.4}{ \input{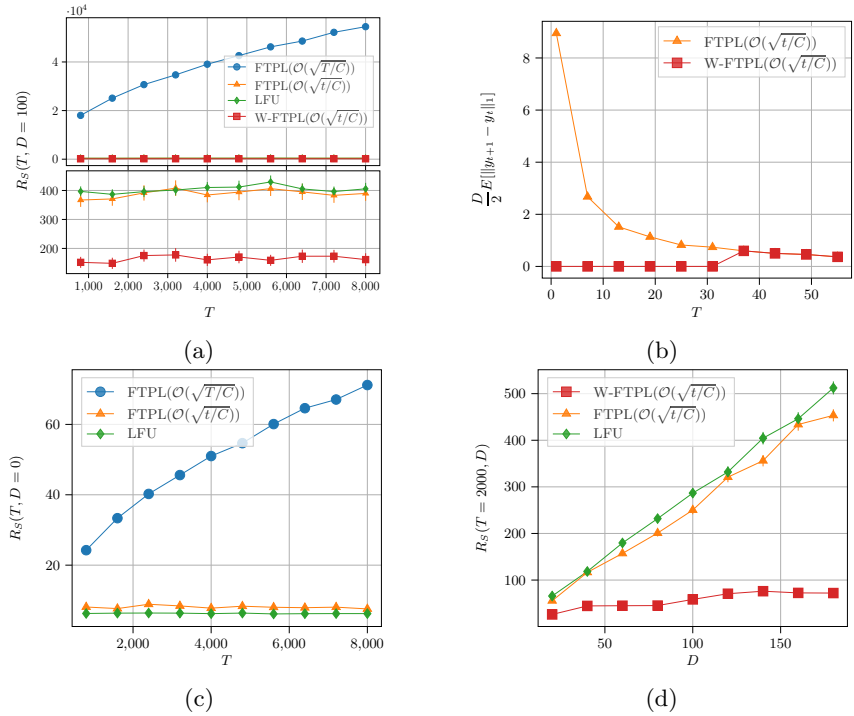} } 
         \subcaption{}
         \label{fig: stochastic_regret}
     \end{subfigure}
     \hfill
     \begin{subfigure}[b]{0.495\textwidth}
         \centering
         \scalebox{0.6}{
\begin{tikzpicture}

\definecolor{darkgray176}{RGB}{176,176,176}
\definecolor{darkorange25512714}{RGB}{255,127,14}
\definecolor{lightgray204}{RGB}{204,204,204}
\definecolor{steelblue31119180}{RGB}{31,119,180}
\definecolor{crimson2143940}{RGB}{214,39,40}

\begin{axis}[
legend cell align={left},
legend style={fill opacity=0.8, draw opacity=1, text opacity=1, draw=lightgray204},
tick align=outside,
tick pos=left,
x grid style={darkgray176},
xlabel={\(\displaystyle T\)},
xmajorgrids,
xmin=-1.7, xmax=57.7,
xtick style={color=black},
y grid style={darkgray176},
ylabel={\(\displaystyle \frac{D}{2} E [\| y_{t+1} - y_{t} \|_1 ]\)},
ymajorgrids,
ymin=-0.44793076810395, ymax=9.40654613018296,
ytick style={color=black}
]
\path [draw=steelblue31119180, semithick]
(axis cs:1,8.93938463792099)
--(axis cs:1,8.95861536207901);

\path [draw=steelblue31119180, semithick]
(axis cs:7,2.66699268545367)
--(axis cs:7,2.68200731454633);

\path [draw=steelblue31119180, semithick]
(axis cs:13,1.5120887699749)
--(axis cs:13,1.5239112300251);

\path [draw=steelblue31119180, semithick]
(axis cs:19,1.12881860218644)
--(axis cs:19,1.13918139781356);

\path [draw=steelblue31119180, semithick]
(axis cs:25,0.817539246683171)
--(axis cs:25,0.826460753316829);

\path [draw=steelblue31119180, semithick]
(axis cs:31,0.733760806805818)
--(axis cs:31,0.742239193194182);

\path [draw=steelblue31119180, semithick]
(axis cs:37,0.593168415054628)
--(axis cs:37,0.600831584945372);

\path [draw=steelblue31119180, semithick]
(axis cs:43,0.495983391290234)
--(axis cs:43,0.503016608709766);

\path [draw=steelblue31119180, semithick]
(axis cs:49,0.455624265421808)
--(axis cs:49,0.462375734578192);

\path [draw=steelblue31119180, semithick]
(axis cs:55,0.364469929380361)
--(axis cs:55,0.370530070619639);

\path [draw=darkorange25512714, semithick]
(axis cs:1,0)
--(axis cs:1,0);

\path [draw=darkorange25512714, semithick]
(axis cs:7,0)
--(axis cs:7,0);

\path [draw=darkorange25512714, semithick]
(axis cs:13,0)
--(axis cs:13,0);

\path [draw=darkorange25512714, semithick]
(axis cs:19,0)
--(axis cs:19,0);

\path [draw=darkorange25512714, semithick]
(axis cs:25,0)
--(axis cs:25,0);

\path [draw=darkorange25512714, semithick]
(axis cs:31,0)
--(axis cs:31,0);

\path [draw=darkorange25512714, semithick]
(axis cs:37,0.593168415054628)
--(axis cs:37,0.600831584945372);

\path [draw=darkorange25512714, semithick]
(axis cs:43,0.495983391290234)
--(axis cs:43,0.503016608709766);

\path [draw=darkorange25512714, semithick]
(axis cs:49,0.455624265421808)
--(axis cs:49,0.462375734578192);

\path [draw=darkorange25512714, semithick]
(axis cs:55,0.364469929380361)
--(axis cs:55,0.370530070619639);

\addplot [semithick, darkorange25512714, mark=triangle*, mark size=3, mark options={solid}]
table {%
1 8.949
7 2.6745
13 1.518
19 1.134
25 0.822
31 0.738
37 0.597
43 0.4995
49 0.459
55 0.3675
};
\addlegendentry{FTPL($\mathcal{O}(\sqrt{t/C})$)}
\addplot [semithick,crimson2143940, mark=square*, mark size=3, mark options={solid}]
table {%
1 0
7 0
13 0
19 0
25 0
31 0
37 0.597
43 0.4995
49 0.459
55 0.3675
};
\addlegendentry{W-FTPL($\mathcal{O}(\sqrt{t/C})$)}
\end{axis}

\end{tikzpicture} }
        \subcaption{}
         \label{fig: switches}
     \end{subfigure} 
     
     \begin{subfigure}[b]{0.495\textwidth}
         \centering
         \scalebox{0.6}{
\begin{tikzpicture}

\definecolor{darkgray176}{RGB}{176,176,176}
\definecolor{darkorange25512714}{RGB}{255,127,14}
\definecolor{forestgreen4416044}{RGB}{44,160,44}
\definecolor{lightgray204}{RGB}{204,204,204}
\definecolor{steelblue31119180}{RGB}{31,119,180}

\begin{axis}[
legend cell align={left},
legend style={
  fill opacity=0.8,
  draw opacity=1,
  text opacity=1,
  at={(0.03,0.97)},
  anchor=north west,
  draw=lightgray204
},
tick align=outside,
tick pos=left,
x grid style={darkgray176},
xlabel={\(\displaystyle T\)},
xmajorgrids,
xmin=440, xmax=8360,
xtick style={color=black},
y grid style={darkgray176},
ylabel={\(\displaystyle R_S(T,D= 0)\)},
ymajorgrids,
ymin=2.56561746509608, ymax=75.5383562410627,
ytick style={color=black}
]
\path [draw=steelblue31119180, semithick]
(axis cs:800,23.6845764382496)
--(axis cs:800,24.8254235617504);

\path [draw=steelblue31119180, semithick]
(axis cs:1600,32.6482556733396)
--(axis cs:1600,34.0517443266604);

\path [draw=steelblue31119180, semithick]
(axis cs:2400,39.4031190173029)
--(axis cs:2400,41.0768809826971);

\path [draw=steelblue31119180, semithick]
(axis cs:3200,44.7763334655335)
--(axis cs:3200,46.4236665344665);

\path [draw=steelblue31119180, semithick]
(axis cs:4000,50.0821857176008)
--(axis cs:4000,51.9278142823992);

\path [draw=steelblue31119180, semithick]
(axis cs:4800,53.7828850757653)
--(axis cs:4800,55.5571149242347);

\path [draw=steelblue31119180, semithick]
(axis cs:5600,59.1208211232182)
--(axis cs:5600,61.0591788767818);

\path [draw=steelblue31119180, semithick]
(axis cs:6400,63.5652853507813)
--(axis cs:6400,65.6447146492187);

\path [draw=steelblue31119180, semithick]
(axis cs:7200,66.1186920368251)
--(axis cs:7200,68.0213079631749);

\path [draw=steelblue31119180, semithick]
(axis cs:8000,70.1585864305721)
--(axis cs:8000,72.2214135694279);

\path [draw=darkorange25512714, semithick]
(axis cs:800,7.39243856980608)
--(axis cs:800,8.81756143019392);

\path [draw=darkorange25512714, semithick]
(axis cs:1600,6.90413596778256)
--(axis cs:1600,8.41586403221743);

\path [draw=darkorange25512714, semithick]
(axis cs:2400,8.05535407315437)
--(axis cs:2400,9.72464592684563);

\path [draw=darkorange25512714, semithick]
(axis cs:3200,7.56989246707889)
--(axis cs:3200,9.26010753292111);

\path [draw=darkorange25512714, semithick]
(axis cs:4000,7.05125841728848)
--(axis cs:4000,8.51874158271152);

\path [draw=darkorange25512714, semithick]
(axis cs:4800,7.52755165973296)
--(axis cs:4800,9.14244834026704);

\path [draw=darkorange25512714, semithick]
(axis cs:5600,7.1765657185231)
--(axis cs:5600,8.8834342814769);

\path [draw=darkorange25512714, semithick]
(axis cs:6400,7.20519124920302)
--(axis cs:6400,8.61480875079698);

\path [draw=darkorange25512714, semithick]
(axis cs:7200,7.265824871383)
--(axis cs:7200,8.864175128617);

\path [draw=darkorange25512714, semithick]
(axis cs:8000,6.90328915212156)
--(axis cs:8000,8.22671084787844);

\path [draw=forestgreen4416044, semithick]
(axis cs:800,6.04132050116294)
--(axis cs:800,6.48867949883706);

\path [draw=forestgreen4416044, semithick]
(axis cs:1600,6.10051708145893)
--(axis cs:1600,6.62948291854107);

\path [draw=forestgreen4416044, semithick]
(axis cs:2400,6.15902645953412)
--(axis cs:2400,6.63097354046587);

\path [draw=forestgreen4416044, semithick]
(axis cs:3200,6.11469583663071)
--(axis cs:3200,6.61530416336929);

\path [draw=forestgreen4416044, semithick]
(axis cs:4000,5.97513331627189)
--(axis cs:4000,6.46486668372811);

\path [draw=forestgreen4416044, semithick]
(axis cs:4800,6.1223662196062)
--(axis cs:4800,6.6376337803938);

\path [draw=forestgreen4416044, semithick]
(axis cs:5600,5.88256013673093)
--(axis cs:5600,6.39743986326907);

\path [draw=forestgreen4416044, semithick]
(axis cs:6400,5.96379169957128)
--(axis cs:6400,6.45620830042872);

\path [draw=forestgreen4416044, semithick]
(axis cs:7200,5.97721063210443)
--(axis cs:7200,6.49278936789558);

\path [draw=forestgreen4416044, semithick]
(axis cs:8000,6.00070931811345)
--(axis cs:8000,6.45929068188655);

\addplot [semithick, steelblue31119180, mark=*, mark size=3, mark options={solid}]
table {%
800 24.255
1600 33.35
2400 40.24
3200 45.6
4000 51.005
4800 54.67
5600 60.09
6400 64.605
7200 67.07
8000 71.19
};
\addlegendentry{FTPL($\mathcal{O}(\sqrt{T/C})$)}
\addplot [semithick, darkorange25512714, mark=triangle*, mark size=3, mark options={solid}]
table {%
800 8.105
1600 7.66
2400 8.89
3200 8.415
4000 7.785
4800 8.335
5600 8.03
6400 7.91
7200 8.065
8000 7.565
};
\addlegendentry{FTPL($\mathcal{O}(\sqrt{t/C})$)}
\addplot [semithick, forestgreen4416044, mark=diamond*, mark size=3, mark options={solid}]
table {%
800 6.265
1600 6.365
2400 6.395
3200 6.365
4000 6.22
4800 6.38
5600 6.14
6400 6.21
7200 6.235
8000 6.23
};
\addlegendentry{LFU}
\end{axis}

\end{tikzpicture} }
        \subcaption{}
         \label{fig: stochastic_regret_D_0}
     \end{subfigure}
     \begin{subfigure}[b]{0.495\textwidth}
         \centering
         \scalebox{0.6}{
\begin{tikzpicture}

\definecolor{darkgray176}{RGB}{176,176,176}
\definecolor{darkorange25512714}{RGB}{255,127,14}
\definecolor{forestgreen4416044}{RGB}{44,160,44}
\definecolor{lightgray204}{RGB}{204,204,204}
\definecolor{steelblue31119180}{RGB}{31,119,180}
\definecolor{crimson2143940}{RGB}{214,39,40}

\begin{axis}[
legend cell align={left},
legend style={
  fill opacity=0.8,
  draw opacity=1,
  text opacity=1,
  at={(0.03,0.97)},
  anchor=north west,
  draw=lightgray204
},
tick align=outside,
tick pos=left,
x grid style={darkgray176},
xlabel={$D$},
xmajorgrids,
xmin=12, xmax=188,
xtick style={color=black},
y grid style={darkgray176},
ylabel={\(\displaystyle R_S(T=2000,D)\)},
ymajorgrids,
ymin=0.350839769430529, ymax=551.333456630496,
ytick style={color=black}
]
\path [draw=steelblue31119180, semithick]
(axis cs:20,25.3955041722062)
--(axis cs:20,27.0094958277938);

\path [draw=steelblue31119180, semithick]
(axis cs:40,43.0354926655731)
--(axis cs:40,46.0395073344269);

\path [draw=steelblue31119180, semithick]
(axis cs:60,43.1157969262247)
--(axis cs:60,46.8142030737753);

\path [draw=steelblue31119180, semithick]
(axis cs:80,43.2275661963528)
--(axis cs:80,47.2874338036472);

\path [draw=steelblue31119180, semithick]
(axis cs:100,56.2120635800443)
--(axis cs:100,60.7429364199557);

\path [draw=steelblue31119180, semithick]
(axis cs:120,68.1717913327902)
--(axis cs:120,73.1482086672098);

\path [draw=steelblue31119180, semithick]
(axis cs:140,73.5604644783152)
--(axis cs:140,78.6395355216848);

\path [draw=steelblue31119180, semithick]
(axis cs:160,70.1885260066786)
--(axis cs:160,74.7214739933215);

\path [draw=steelblue31119180, semithick]
(axis cs:180,70.4975266621867)
--(axis cs:180,73.7024733378133);

\path [draw=darkorange25512714, semithick]
(axis cs:20,53.8391472861771)
--(axis cs:20,57.2608527138229);

\path [draw=darkorange25512714, semithick]
(axis cs:40,112.754538091234)
--(axis cs:40,120.310461908766);

\path [draw=darkorange25512714, semithick]
(axis cs:60,151.55176827185)
--(axis cs:60,162.73823172815);

\path [draw=darkorange25512714, semithick]
(axis cs:80,193.490898254914)
--(axis cs:80,208.169101745086);

\path [draw=darkorange25512714, semithick]
(axis cs:100,241.168353958547)
--(axis cs:100,259.256646041453);

\path [draw=darkorange25512714, semithick]
(axis cs:120,309.89937916293)
--(axis cs:120,331.815620837071);

\path [draw=darkorange25512714, semithick]
(axis cs:140,343.994433257095)
--(axis cs:140,368.275566742905);

\path [draw=darkorange25512714, semithick]
(axis cs:160,420.551264281231)
--(axis cs:160,447.083735718769);

\path [draw=darkorange25512714, semithick]
(axis cs:180,440.493366326254)
--(axis cs:180,466.391633673746);

\path [draw=forestgreen4416044, semithick]
(axis cs:20,63.7322723253068)
--(axis cs:20,67.7827276746932);

\path [draw=forestgreen4416044, semithick]
(axis cs:40,114.478238962928)
--(axis cs:40,122.366761037072);

\path [draw=forestgreen4416044, semithick]
(axis cs:60,173.437380832286)
--(axis cs:60,185.817619167714);

\path [draw=forestgreen4416044, semithick]
(axis cs:80,223.564256849684)
--(axis cs:80,240.215743150316);

\path [draw=forestgreen4416044, semithick]
(axis cs:100,276.312354088575)
--(axis cs:100,296.977645911425);

\path [draw=forestgreen4416044, semithick]
(axis cs:120,320.09484121121)
--(axis cs:120,343.91515878879);

\path [draw=forestgreen4416044, semithick]
(axis cs:140,391.032871329317)
--(axis cs:140,417.907128670683);

\path [draw=forestgreen4416044, semithick]
(axis cs:160,432.087639850155)
--(axis cs:160,460.237360149845);

\path [draw=forestgreen4416044, semithick]
(axis cs:180,498.27620777228)
--(axis cs:180,526.28879222772);

\addplot [semithick, crimson2143940, mark=square*, mark size=3, mark options={solid}]
table {%
20 26.2025
40 44.5375
60 44.965
80 45.2575
100 58.4775
120 70.66
140 76.1
160 72.455
180 72.1
};
\addlegendentry{W-FTPL($\mathcal{O}(\sqrt{t/C})$)}
\addplot [semithick, darkorange25512714, mark=triangle*, mark size=3, mark options={solid}]
table {%
20 55.55
40 116.5325
60 157.145
80 200.83
100 250.2125
120 320.8575
140 356.135
160 433.8175
180 453.4425
};
\addlegendentry{FTPL($\mathcal{O}(\sqrt{t/C})$)}
\addplot [semithick, forestgreen4416044, mark=diamond*, mark size=3, mark options={solid}]
table {%
20 65.7575
40 118.4225
60 179.6275
80 231.89
100 286.645
120 332.005
140 404.47
160 446.1625
180 512.2825
};
\addlegendentry{LFU}
\end{axis}

\end{tikzpicture} }
        \subcaption{}
         \label{fig: stochastic_regret_D}
     \end{subfigure}
        \caption{The plots compare (a) the regret including the switching cost for $D=100$ as a function of $T$, (b) the switching cost in each time slot for $D=30, u=5, \beta=0.6$ as a function of $T$, (c) the regret as a function of $T$ with $D=0$, and (d) the regret including the switching cost for $T=2000$ as a function of $D$, of various caching policies under stochastic file requests. $L=10, C=4$ for each of the plots and the popularity distribution is a dyadic distribution. Parts (a) and (c) show that the regret incurred by FTPL($\mathcal{O}(\sqrt{T})$) is increasing with $T$ while FTPL($\mathcal{O}(\sqrt{t})$), W- FTPL($\mathcal{O}(\sqrt{t})$) and LFU have essentially constant regret. Part (b) shows that  FTPL($\mathcal{O}(\sqrt{t})$) makes more switches at the beginning, thus motivating the W- FTPL($\mathcal{O}(\sqrt{t})$) algorithm. Part (d) shows that while the regret including the switching cost of LFU and FTPL($\mathcal{O}(\sqrt{t})$) increase linearly in $D$, it increases sublinearly for W-FTPL($\mathcal{O}(\sqrt{t})$).}
        \label{fig: stochastic_unlimited}
\end{figure}
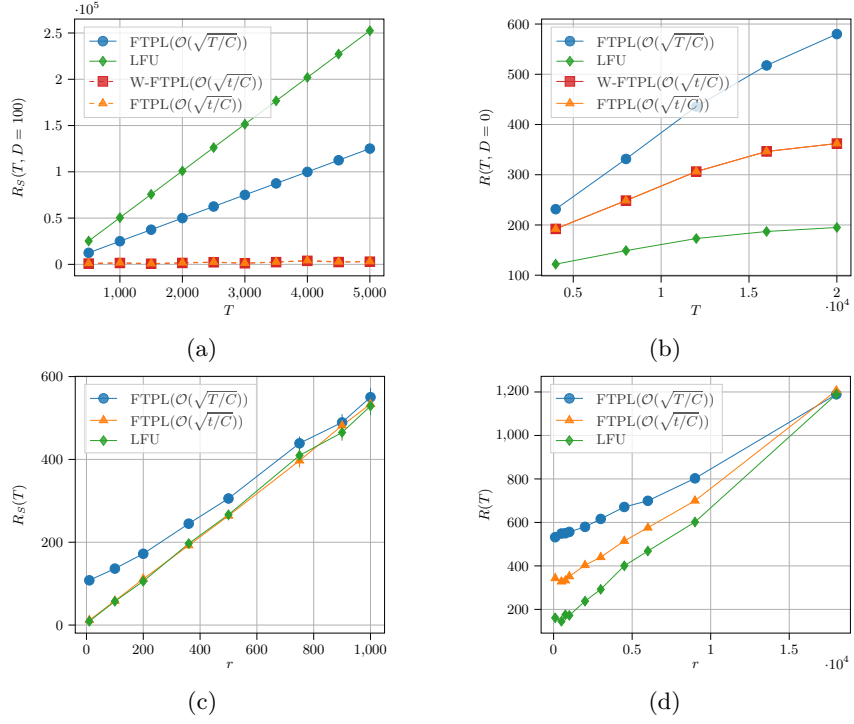
\begin{figure}[H]
	\centering
	\begin{subfigure}[b]{0.495\textwidth}
		\centering
		\scalebox{0.6}{
\begin{tikzpicture}

\definecolor{crimson2143940}{RGB}{214,39,40}
\definecolor{darkgray176}{RGB}{176,176,176}
\definecolor{darkorange25512714}{RGB}{255,127,14}
\definecolor{forestgreen4416044}{RGB}{44,160,44}
\definecolor{lightgray204}{RGB}{204,204,204}
\definecolor{steelblue31119180}{RGB}{31,119,180}

\begin{axis}[
legend cell align={left},
legend style={
  fill opacity=0.8,
  draw opacity=1,
  text opacity=1,
  at={(0.03,0.97)},
  anchor=north west,
  draw=lightgray204
},
tick align=outside,
tick pos=left,
x grid style={darkgray176},
xlabel={\(\displaystyle T\)},
xmajorgrids,
xmin=275, xmax=5225,
xtick style={color=black},
y grid style={darkgray176},
ylabel={\(\displaystyle R_S(T,D= 100)\)},
ymajorgrids,
ymin=-12396.5412539103, ymax=265061.74005971,
ytick style={color=black}
]
\path [draw=steelblue31119180, semithick]
(axis cs:500,12409.4727491008)
--(axis cs:500,12507.4712508992);

\path [draw=steelblue31119180, semithick]
(axis cs:1000,24904.1921336747)
--(axis cs:1000,25043.1798663253);

\path [draw=steelblue31119180, semithick]
(axis cs:1500,37377.71391777)
--(axis cs:1500,37542.64208223);

\path [draw=steelblue31119180, semithick]
(axis cs:2000,49898.3886606836)
--(axis cs:2000,50091.6233393164);

\path [draw=steelblue31119180, semithick]
(axis cs:2500,62397.1013802775)
--(axis cs:2500,62617.2186197225);

\path [draw=steelblue31119180, semithick]
(axis cs:3000,74925.0734742166)
--(axis cs:3000,75164.4865257834);

\path [draw=steelblue31119180, semithick]
(axis cs:3500,87345.4077066943)
--(axis cs:3500,87595.5402933057);

\path [draw=steelblue31119180, semithick]
(axis cs:4000,99761.8698912051)
--(axis cs:4000,100034.182108795);

\path [draw=steelblue31119180, semithick]
(axis cs:4500,112371.021470707)
--(axis cs:4500,112661.742529293);

\path [draw=steelblue31119180, semithick]
(axis cs:5000,124919.743343077)
--(axis cs:5000,125247.712656923);

\path [draw=darkorange25512714, semithick]
(axis cs:500,797.728728463019)
--(axis cs:500,1579.22727153698);

\path [draw=darkorange25512714, semithick]
(axis cs:1000,1043.93078486165)
--(axis cs:1000,2497.35721513835);

\path [draw=darkorange25512714, semithick]
(axis cs:1500,424.868944555826)
--(axis cs:1500,1387.67905544417);

\path [draw=darkorange25512714, semithick]
(axis cs:2000,860.352144939299)
--(axis cs:2000,2765.7598550607);

\path [draw=darkorange25512714, semithick]
(axis cs:2500,1315.23499172141)
--(axis cs:2500,3803.04900827859);

\path [draw=darkorange25512714, semithick]
(axis cs:3000,495.877779462999)
--(axis cs:3000,2390.302220537);

\path [draw=darkorange25512714, semithick]
(axis cs:3500,1048.54728593588)
--(axis cs:3500,4292.74471406411);

\path [draw=darkorange25512714, semithick]
(axis cs:4000,2074.99774791404)
--(axis cs:4000,6219.93825208596);

\path [draw=darkorange25512714, semithick]
(axis cs:4500,1215.75728144227)
--(axis cs:4500,4449.53871855773);

\path [draw=darkorange25512714, semithick]
(axis cs:5000,1172.68966477492)
--(axis cs:5000,5275.15833522508);

\path [draw=forestgreen4416044, semithick]
(axis cs:500,25200)
--(axis cs:500,25200);

\path [draw=forestgreen4416044, semithick]
(axis cs:1000,50450)
--(axis cs:1000,50450);

\path [draw=forestgreen4416044, semithick]
(axis cs:1500,75700)
--(axis cs:1500,75700);

\path [draw=forestgreen4416044, semithick]
(axis cs:2000,100950)
--(axis cs:2000,100950);

\path [draw=forestgreen4416044, semithick]
(axis cs:2500,126200)
--(axis cs:2500,126200);

\path [draw=forestgreen4416044, semithick]
(axis cs:3000,151450)
--(axis cs:3000,151450);

\path [draw=forestgreen4416044, semithick]
(axis cs:3500,176700)
--(axis cs:3500,176700);

\path [draw=forestgreen4416044, semithick]
(axis cs:4000,201950)
--(axis cs:4000,201950);

\path [draw=forestgreen4416044, semithick]
(axis cs:4500,227200)
--(axis cs:4500,227200);

\path [draw=forestgreen4416044, semithick]
(axis cs:5000,252450)
--(axis cs:5000,252450);

\path [draw=crimson2143940, semithick]
(axis cs:500,495.54594458627)
--(axis cs:500,1175.62205541373);

\path [draw=crimson2143940, semithick]
(axis cs:1000,786.026594931995)
--(axis cs:1000,2152.493405068);

\path [draw=crimson2143940, semithick]
(axis cs:1500,215.198805799748)
--(axis cs:1500,1114.56919420025);

\path [draw=crimson2143940, semithick]
(axis cs:2000,607.763789834916)
--(axis cs:2000,2445.07221016508);

\path [draw=crimson2143940, semithick]
(axis cs:2500,1019.04906705999)
--(axis cs:2500,3433.44293294001);

\path [draw=crimson2143940, semithick]
(axis cs:3000,263.591394823591)
--(axis cs:3000,2104.25660517641);

\path [draw=crimson2143940, semithick]
(axis cs:3500,807.629479465825)
--(axis cs:3500,3990.68652053418);

\path [draw=crimson2143940, semithick]
(axis cs:4000,1754.34249239171)
--(axis cs:4000,5826.72550760829);

\path [draw=crimson2143940, semithick]
(axis cs:4500,889.77451604453)
--(axis cs:4500,4061.65348395547);

\path [draw=crimson2143940, semithick]
(axis cs:5000,866.276640156608)
--(axis cs:5000,4912.14335984339);

\addplot [semithick, steelblue31119180, mark=*, mark size=3, mark options={solid}]
table {%
500 12458.472
1000 24973.686
1500 37460.178
2000 49995.006
2500 62507.16
3000 75044.78
3500 87470.474
4000 99898.026
4500 112516.382
5000 125083.728
};
\addlegendentry{FTPL($\mathcal{O}(\sqrt{T/C}))$}
\addplot [semithick, forestgreen4416044, mark=diamond*, mark size=3, mark options={solid}]
table {%
500 25200
1000 50450
1500 75700
2000 100950
2500 126200
3000 151450
3500 176700
4000 201950
4500 227200
5000 252450
};
\addlegendentry{LFU}
\addplot [dashed, crimson2143940, mark=square*, mark size=3, mark options={solid}]
table {%
500 835.584
1000 1469.26
1500 664.884
2000 1526.418
2500 2226.246
3000 1183.924
3500 2399.158
4000 3790.534
4500 2475.714
5000 2889.21
};
\addlegendentry{W-FTPL($\mathcal{O}(\sqrt{t/C}))$}
\addplot [dashed, darkorange25512714, mark=triangle*, mark size=3, mark options={solid}]
table {%
	500 1188.478
	1000 1770.644
	1500 906.274
	2000 1813.056
	2500 2559.142
	3000 1443.09
	3500 2670.646
	4000 4147.468
	4500 2832.648
	5000 3223.924
};
\addlegendentry{FTPL($\mathcal{O}(\sqrt{t/C}))$}
\end{axis}

\end{tikzpicture}}
		\subcaption{}
		\label{fig: adversarial_regret}
	\end{subfigure}
	\hfill
	\begin{subfigure}[b]{0.495\textwidth}
		\centering
		\scalebox{0.6}{
\begin{tikzpicture}

\definecolor{crimson2143940}{RGB}{214,39,40}
\definecolor{darkgray176}{RGB}{176,176,176}
\definecolor{darkorange25512714}{RGB}{255,127,14}
\definecolor{forestgreen4416044}{RGB}{44,160,44}
\definecolor{lightgray204}{RGB}{204,204,204}
\definecolor{steelblue31119180}{RGB}{31,119,180}

\begin{axis}[
legend cell align={left},
legend style={
  fill opacity=0.8,
  draw opacity=1,
  text opacity=1,
  at={(0.03,0.97)},
  anchor=north west,
  draw=lightgray204
},
tick align=outside,
tick pos=left,
x grid style={darkgray176},
xlabel={\(\displaystyle T\)},
xmajorgrids,
xmin=3200, xmax=20800,
xtick style={color=black},
y grid style={darkgray176},
ylabel={\(\displaystyle R(T,D=0)\)},
ymajorgrids,
ymin=98.7661319637941, ymax=609.911228760324,
ytick style={color=black}
]
\path [draw=steelblue31119180, semithick]
(axis cs:4000,228.887370545224)
--(axis cs:4000,233.952629454776);

\path [draw=steelblue31119180, semithick]
(axis cs:8000,327.995688192894)
--(axis cs:8000,334.804311807106);

\path [draw=steelblue31119180, semithick]
(axis cs:12000,431.528474171989)
--(axis cs:12000,439.551525828011);

\path [draw=steelblue31119180, semithick]
(axis cs:16000,511.230546193947)
--(axis cs:16000,523.929453806053);

\path [draw=steelblue31119180, semithick]
(axis cs:20000,573.322639275882)
--(axis cs:20000,586.677360724118);

\path [draw=darkorange25512714, semithick]
(axis cs:4000,187.191674084104)
--(axis cs:4000,197.168325915896);

\path [draw=darkorange25512714, semithick]
(axis cs:8000,241.257236448225)
--(axis cs:8000,255.902763551775);

\path [draw=darkorange25512714, semithick]
(axis cs:12000,297.016387252083)
--(axis cs:12000,315.623612747917);

\path [draw=darkorange25512714, semithick]
(axis cs:16000,335.527674966928)
--(axis cs:16000,357.152325033072);

\path [draw=darkorange25512714, semithick]
(axis cs:20000,350.943992690625)
--(axis cs:20000,373.296007309375);

\path [draw=forestgreen4416044, semithick]
(axis cs:4000,122)
--(axis cs:4000,122);

\path [draw=forestgreen4416044, semithick]
(axis cs:8000,149)
--(axis cs:8000,149);

\path [draw=forestgreen4416044, semithick]
(axis cs:12000,173)
--(axis cs:12000,173);

\path [draw=forestgreen4416044, semithick]
(axis cs:16000,187)
--(axis cs:16000,187);

\path [draw=forestgreen4416044, semithick]
(axis cs:20000,195)
--(axis cs:20000,195);

\path [draw=crimson2143940, semithick]
(axis cs:4000,187.191674084104)
--(axis cs:4000,197.168325915896);

\path [draw=crimson2143940, semithick]
(axis cs:8000,241.257236448225)
--(axis cs:8000,255.902763551775);

\path [draw=crimson2143940, semithick]
(axis cs:12000,297.016387252083)
--(axis cs:12000,315.623612747917);

\path [draw=crimson2143940, semithick]
(axis cs:16000,335.527674966928)
--(axis cs:16000,357.152325033072);

\path [draw=crimson2143940, semithick]
(axis cs:20000,350.943992690625)
--(axis cs:20000,373.296007309375);

\addplot [semithick, steelblue31119180, mark=*, mark size=3, mark options={solid}]
table {%
4000 231.42
8000 331.4
12000 435.54
16000 517.58
20000 580
};
\addlegendentry{FTPL($\mathcal{O}(\sqrt{T/C}))$}
\addplot [semithick, forestgreen4416044, mark=diamond*, mark size=3, mark options={solid}]
table {%
4000 122
8000 149
12000 173
16000 187
20000 195
};
\addlegendentry{LFU}
\addplot [semithick, crimson2143940, mark=square*, mark size=3, mark options={solid}]
table {%
4000 192.18
8000 248.58
12000 306.32
16000 346.34
20000 362.12
};
\addlegendentry{W-FTPL($\mathcal{O}(\sqrt{t/C}))$}
\addplot [semithick, darkorange25512714, mark=triangle*, mark size=3, mark options={solid}]
table {%
	4000 192.18
	8000 248.58
	12000 306.32
	16000 346.34
	20000 362.12
};
\addlegendentry{FTPL($\mathcal{O}(\sqrt{t/C}))$}
\end{axis}

\end{tikzpicture} } 
		\subcaption{}
		\label{fig: ml_D_0}
	\end{subfigure}

	\begin{subfigure}[b]{0.495\textwidth}
		\centering
		\scalebox{0.6}{
\begin{tikzpicture}

\definecolor{darkgray176}{RGB}{176,176,176}
\definecolor{darkorange25512714}{RGB}{255,127,14}
\definecolor{forestgreen4416044}{RGB}{44,160,44}
\definecolor{lightgray204}{RGB}{204,204,204}
\definecolor{steelblue31119180}{RGB}{31,119,180}

\begin{axis}[
legend cell align={left},
legend style={
  fill opacity=0.8,
  draw opacity=1,
  text opacity=1,
  at={(0.03,0.97)},
  anchor=north west,
  draw=lightgray204
},
tick align=outside,
tick pos=left,
x grid style={darkgray176},
xlabel={\(\displaystyle r\)},
xmajorgrids,
xmin=-39.5, xmax=1049.5,
xtick style={color=black},
y grid style={darkgray176},
ylabel={\(\displaystyle R_S(T)\)},
ymajorgrids,
ymin=-19.4353322514098, ymax=600.623840757471,
ytick style={color=black}
]
\path [draw=steelblue31119180, semithick]
(axis cs:10,106.826999196498)
--(axis cs:10,109.261000803502);

\path [draw=steelblue31119180, semithick]
(axis cs:100,133.063149045663)
--(axis cs:100,139.556850954337);

\path [draw=steelblue31119180, semithick]
(axis cs:200,166.997781148382)
--(axis cs:200,177.318218851618);

\path [draw=steelblue31119180, semithick]
(axis cs:360,236.152800138113)
--(axis cs:360,253.499199861887);

\path [draw=steelblue31119180, semithick]
(axis cs:500,294.284846733554)
--(axis cs:500,316.999153266446);

\path [draw=steelblue31119180, semithick]
(axis cs:750,422.293785003109)
--(axis cs:750,455.454214996891);

\path [draw=steelblue31119180, semithick]
(axis cs:900,469.575119524479)
--(axis cs:900,509.216880475522);

\path [draw=steelblue31119180, semithick]
(axis cs:1000,528.052667106568)
--(axis cs:1000,572.439332893431);

\path [draw=darkorange25512714, semithick]
(axis cs:10,11.1084343994618)
--(axis cs:10,12.2155656005382);

\path [draw=darkorange25512714, semithick]
(axis cs:100,55.9475994836462)
--(axis cs:100,60.6044005163538);

\path [draw=darkorange25512714, semithick]
(axis cs:200,106.912854470354)
--(axis cs:200,115.915145529646);

\path [draw=darkorange25512714, semithick]
(axis cs:360,184.457816582038)
--(axis cs:360,200.214183417962);

\path [draw=darkorange25512714, semithick]
(axis cs:500,252.615066826275)
--(axis cs:500,275.284933173724);

\path [draw=darkorange25512714, semithick]
(axis cs:750,380.662117171891)
--(axis cs:750,413.517882828109);

\path [draw=darkorange25512714, semithick]
(axis cs:900,462.364027921183)
--(axis cs:900,501.051972078817);

\path [draw=darkorange25512714, semithick]
(axis cs:1000,511.336355973999)
--(axis cs:1000,556.111644026001);

\path [draw=forestgreen4416044, semithick]
(axis cs:10,8.74917561263029)
--(axis cs:10,9.36282438736971);

\path [draw=forestgreen4416044, semithick]
(axis cs:100,55.0710690756995)
--(axis cs:100,59.6009309243005);

\path [draw=forestgreen4416044, semithick]
(axis cs:200,101.231761447566)
--(axis cs:200,110.080238552434);

\path [draw=forestgreen4416044, semithick]
(axis cs:360,188.677282169805)
--(axis cs:360,204.966717830195);

\path [draw=forestgreen4416044, semithick]
(axis cs:500,255.326071194268)
--(axis cs:500,277.553928805732);

\path [draw=forestgreen4416044, semithick]
(axis cs:750,393.321071378463)
--(axis cs:750,426.022928621537);

\path [draw=forestgreen4416044, semithick]
(axis cs:900,445.287438381789)
--(axis cs:900,484.812561618211);

\path [draw=forestgreen4416044, semithick]
(axis cs:1000,506.739226107285)
--(axis cs:1000,551.152773892715);

\addplot [semithick, steelblue31119180, mark=*, mark size=3, mark options={solid}]
table {%
10 108.044
100 136.31
200 172.158
360 244.826
500 305.642
750 438.874
900 489.396
1000 550.246
};
\addlegendentry{FTPL($\mathcal{O}(\sqrt{T/C}))$}
\addplot [semithick, darkorange25512714, mark=triangle*, mark size=3, mark options={solid}]
table {%
10 11.662
100 58.276
200 111.414
360 192.336
500 263.95
750 397.09
900 481.708
1000 533.724
};
\addlegendentry{FTPL($\mathcal{O}(\sqrt{t/C}))$}
\addplot [semithick, forestgreen4416044, mark=diamond*, mark size=3, mark options={solid}]
table {%
10 9.056
100 57.336
200 105.656
360 196.822
500 266.44
750 409.672
900 465.05
1000 528.946
};
\addlegendentry{LFU}
\end{axis}

\end{tikzpicture}}
		\subcaption{}
		\label{fig: stochastic_regret_r}
	\end{subfigure} 
	\begin{subfigure}[b]{0.495\textwidth}
		\centering
		\scalebox{0.6}{
\begin{tikzpicture}

\definecolor{darkgray176}{RGB}{176,176,176}
\definecolor{darkorange25512714}{RGB}{255,127,14}
\definecolor{forestgreen4416044}{RGB}{44,160,44}
\definecolor{lightgray204}{RGB}{204,204,204}
\definecolor{steelblue31119180}{RGB}{31,119,180}

\begin{axis}[
legend cell align={left},
legend style={
  fill opacity=0.8,
  draw opacity=1,
  text opacity=1,
  at={(0.03,0.97)},
  anchor=north west,
  draw=lightgray204
},
tick align=outside,
tick pos=left,
x grid style={darkgray176},
xlabel={\(\displaystyle r\)},
xmajorgrids,
xmin=-795, xmax=18895,
xtick style={color=black},
y grid style={darkgray176},
ylabel={\(\displaystyle R(T)\)},
ymajorgrids,
ymin=90.9172379973988, ymax=1271.89887416216,
ytick style={color=black}
]
\path [draw=steelblue31119180, semithick]
(axis cs:100,526.846727983116)
--(axis cs:100,537.153272016884);

\path [draw=steelblue31119180, semithick]
(axis cs:500,542.764417302326)
--(axis cs:500,555.035582697674);

\path [draw=steelblue31119180, semithick]
(axis cs:750,544.368938022119)
--(axis cs:750,555.791061977881);

\path [draw=steelblue31119180, semithick]
(axis cs:1000,550.694713951878)
--(axis cs:1000,561.945286048122);

\path [draw=steelblue31119180, semithick]
(axis cs:2000,572.434401737061)
--(axis cs:2000,586.685598262939);

\path [draw=steelblue31119180, semithick]
(axis cs:3000,608.890890786604)
--(axis cs:3000,623.349109213396);

\path [draw=steelblue31119180, semithick]
(axis cs:4500,661.251305211975)
--(axis cs:4500,680.988694788025);

\path [draw=steelblue31119180, semithick]
(axis cs:6000,688.936169603821)
--(axis cs:6000,709.863830396179);

\path [draw=steelblue31119180, semithick]
(axis cs:9000,790.309278144158)
--(axis cs:9000,815.290721855842);

\path [draw=steelblue31119180, semithick]
(axis cs:18000,1178.36650146468)
--(axis cs:18000,1200.27349853532);

\path [draw=darkorange25512714, semithick]
(axis cs:100,333.032402704764)
--(axis cs:100,352.967597295236);

\path [draw=darkorange25512714, semithick]
(axis cs:500,317.381462604162)
--(axis cs:500,338.138537395838);

\path [draw=darkorange25512714, semithick]
(axis cs:750,323.334135496885)
--(axis cs:750,342.225864503115);

\path [draw=darkorange25512714, semithick]
(axis cs:1000,341.552548819769)
--(axis cs:1000,361.967451180231);

\path [draw=darkorange25512714, semithick]
(axis cs:2000,394.630166399006)
--(axis cs:2000,411.849833600994);

\path [draw=darkorange25512714, semithick]
(axis cs:3000,428.795582588517)
--(axis cs:3000,451.244417411483);

\path [draw=darkorange25512714, semithick]
(axis cs:4500,504.426877068115)
--(axis cs:4500,525.173122931885);

\path [draw=darkorange25512714, semithick]
(axis cs:6000,566.786475980396)
--(axis cs:6000,584.773524019604);

\path [draw=darkorange25512714, semithick]
(axis cs:9000,691.444303670187)
--(axis cs:9000,708.395696329813);

\path [draw=darkorange25512714, semithick]
(axis cs:18000,1194.98210929988)
--(axis cs:18000,1218.21789070012);

\path [draw=forestgreen4416044, semithick]
(axis cs:100,160.982912903767)
--(axis cs:100,161.457087096233);

\path [draw=forestgreen4416044, semithick]
(axis cs:500,144.598221459433)
--(axis cs:500,146.001778540567);

\path [draw=forestgreen4416044, semithick]
(axis cs:750,174.520335592575)
--(axis cs:750,176.159664407425);

\path [draw=forestgreen4416044, semithick]
(axis cs:1000,171.173976324304)
--(axis cs:1000,173.386023675696);

\path [draw=forestgreen4416044, semithick]
(axis cs:2000,236.118445963439)
--(axis cs:2000,239.561554036561);

\path [draw=forestgreen4416044, semithick]
(axis cs:3000,289.461044637092)
--(axis cs:3000,294.938955362908);

\path [draw=forestgreen4416044, semithick]
(axis cs:4500,396.961906728383)
--(axis cs:4500,403.878093271617);

\path [draw=forestgreen4416044, semithick]
(axis cs:6000,463.650163320702)
--(axis cs:6000,473.149836679297);

\path [draw=forestgreen4416044, semithick]
(axis cs:9000,595.173619481011)
--(axis cs:9000,608.466380518989);

\path [draw=forestgreen4416044, semithick]
(axis cs:18000,1178.36650146468)
--(axis cs:18000,1200.27349853532);

\addplot [semithick, steelblue31119180, mark=*, mark size=3, mark options={solid}]
table {%
100 532
500 548.9
750 550.08
1000 556.32
2000 579.56
3000 616.12
4500 671.12
6000 699.4
9000 802.8
18000 1189.32
};
\addlegendentry{FTPL($\mathcal{O}(\sqrt{T/C}))$}
\addplot [semithick, darkorange25512714, mark=triangle*, mark size=3, mark options={solid}]
table {%
100 343
500 327.76
750 332.78
1000 351.76
2000 403.24
3000 440.02
4500 514.8
6000 575.78
9000 699.92
18000 1206.6
};
\addlegendentry{FTPL($\mathcal{O}(\sqrt{t/C}))$}
\addplot [semithick, forestgreen4416044, mark=diamond*, mark size=3, mark options={solid}]
table {%
100 161.22
500 145.3
750 175.34
1000 172.28
2000 237.84
3000 292.2
4500 400.42
6000 468.4
9000 601.82
18000 1189.32
};
\addlegendentry{LFU}
\end{axis}

\end{tikzpicture} }
		\subcaption{}
		\label{fig: ml_r_learning_rate_const}
	\end{subfigure}
	\caption{The plots compare (a) the regret including the switching cost for $D=100$ as a function of $T$ on a round robin request sequence, (b) the regret without the switching cost, i.e., $D=0$ as a function of $T$ on the MovieLens dataset, (c) the regret as a function of the constant switching frequency $r$ under stochastic file requests from a dyadic distribution, and (d) the regret as a function of the switching frequency $r$ on the MovieLens dataset, of various caching policies. We used $L=2, C=1$ for Part (a) and $L=10,C=4$ for Part (c). In Part (a), note that the regret including the switching cost scales linearly with $T$ for LFU, while W-FTPL($\mathcal{O}(\sqrt{t})$) and FTPL($\mathcal{O}(\sqrt{t})$) show better performance. In Part (c), the regret scales linearly with $r$ for all the three algorithms as expected. The regret scales sublinearly with $T$ in Part (b) and linearly with $r$ in Part (d).}
	\label{fig: adversarial_and_restricted}
\end{figure}
\subsection{Setting 1 with stochastic file requests}
\emph{Setup.} We use $L=10, C=4$ throughout this section. The popularity distribution used is a dyadic distribution, i.e., for $1 \leq i \le L-1$, $\mu(i)=\frac{1}{2^i}$ and $\mu(L)=\frac{1}{2^{L-1}}$.

\noindent\emph{Results.} Figure~\ref{fig: stochastic_regret} shows that the regret of FTPL including the switching cost increases with $T$, while it is essentially constant for FTPL($\mathcal{O}(\sqrt{t})$), W-FTPL($\mathcal{O}(\sqrt{t})$) and LFU. One can also observe that W- FTPL($\mathcal{O}(\sqrt{t})$) performs the best among all the four algorithms. In Figure~\ref{fig: stochastic_regret_D_0}, we plot only the regret as a function of $T$. Note that the same trend is observed here as well. Here, we omit plotting W- FTPL($\mathcal{O}(\sqrt{t})$) as its regret would be the same as that of FTPL($\mathcal{O}(\sqrt{t})$) in this case. Figure~\ref{fig: switches} shows that FTPL($\mathcal{O}(\sqrt{t})$) indeed makes more switches at the beginning, which is the motivation for the W-FTPL($\mathcal{O}(\sqrt{t})$) policy where no switches are made for an initial period. Figure~\ref{fig: stochastic_regret_D} shows that the regret of LFU and FTPL($\mathcal{O}(\sqrt{t})$) grows linearly with $D$, while that of W-FTPL($\mathcal{O}(\sqrt{t})$) grows sublinearly with $D$.
\subsection{Setting 1 with adversarial file requests}
\emph{Setup.} We consider a synthetic adversarial request sequence in Figure~\ref{fig: adversarial_regret} and a real-world trace in Figure~\ref{fig: ml_D_0}. The synthetic adversarial request sequence used is $1,2,1,2,\ldots$ for $L=2,C=1$, i.e., a round-robin request sequence. The real-world trace used is the first 20,000 rows of the MovieLens 1M dataset \cite{ml1m_website,ML1m_paper} which contains ratings for 2569 movies with timestamps that we model as requests to a CDN server of library size 2569 and a cache size of 25. \\
\noindent \emph{Results.} Figure~\ref{fig: adversarial_regret} shows that under the round-robin request sequence, the regret including the switching cost of LFU scales linearly with $T$, while that of W-FTPL($\mathcal{O}(\sqrt{t})$) and FTPL($\mathcal{O}(\sqrt{t})$) scales sublinearly with $T$. Figure~\ref{fig: ml_D_0} shows that on the MovieLens dataset, the regret scales sublinearly with $T$ for all the four algorithms.
\subsection{Setting 2}
\emph{Setup.} We consider stochastic file requests drawn from a dyadic distribution for $L=10, C=4$ in Figure~\ref{fig: stochastic_regret_r} and file requests from the MovieLens dataset in Figure~\ref{fig: ml_r_learning_rate_const}. We also used $T=18000$ and chose $r$ to be factors of $T$. There were 2518 unique movies in the first 18000 rows of the MovieLens dataset and we set the cache size to be 25.

\noindent \emph{Results.} Figure~\ref{fig: stochastic_regret_r} shows that when file requests are drawn from a dyadic popularity distribution, the regret of all three policies vary linearly with $r$. Figure~\ref{fig: ml_r_learning_rate_const} shows that on the MovieLens dataset, the regret scales linearly with $r$ for all the three policies. 
\section{Conclusion}
We have shown that FTPL($\mathcal{O}( \sqrt{t})$) achieves order-optimal regret even after including the switching cost under stochastic requests. Combining with prior results on the performance of FTPL, it is simultaneously order-optimal for both stochastic and adversarial requests. We also showed that FTPL($\eta$) cannot possibly achieve order-optimal regret simultaneously under both stochastic and adversarial requests, while variants of this policy can individually be order-optimal under each type of request. We proposed the W-FTPL($\mathcal{O}( \sqrt{t})$) policy as a way of preventing the high switching cost incurred by FTPL at the beginning under stochastic file requests. We also considered the restricted switching setting, where the cache is allowed to update its contents only at specific pre-determined time slots and obtained a lower bound on the regret incurred by any policy. We proved an upper bound on the regret of FTPL($\mathcal{O}( \sqrt{t})$) and showed that it is order-optimal under stochastic file requests and in the homogenous restricted switching case under adversarial file requests. 

This work motivates several directions for future work: (1) Bringing the upper and lower bounds closer in the general restricted switching setting would help in proving whether FTPL($\mathcal{O}( \sqrt{t})$) is order-optimal or not under adversarial file requests in this case too. (2) For the restricted switching setting, our results consider only the regret. Adding the switching cost too here would make the bounds complete. 
\bibliography{ref}
\newpage
\appendix
\section{Proof of part (d) of Theorem \ref{thm: unlimited_adversarial}} \label{sec: app_wftpl_adversarial_ub}
In this section, we prove an upper bound on the regret of W-FTPL($\alpha \sqrt{t}$) under adversarial requests. The proof mostly follows that of Theorem 4.1 of \cite{sqrt_t} and is given here for completeness. We bound the regret till time $t'$ by $t'$ and bound the regret incurred from time $t'+1$ in a manner similar to \cite{sqrt_t}. Thus,
\begin{align*}
R_{A}^{\textrm{W-FTPL}(\alpha \sqrt{t})}(T) &= \max _{\boldsymbol{y} \in \mathcal{Y}}\left\langle\boldsymbol{y}, \boldsymbol{X}_{T+1}\right\rangle - \sum_{t=1}^{T} \mathbb{E}_{\boldsymbol{\gamma}}\left[\left\langle\boldsymbol{y}_{t}, \boldsymbol{x}_{t}\right\rangle\right]  \\
&\leq t' +  \max _{\boldsymbol{y} \in \mathcal{Y}}\left\langle\boldsymbol{y}, \boldsymbol{X}_{T+1} - X_{t'+1} \right\rangle - \sum_{t=t'+1}^{T} \mathbb{E}_{\boldsymbol{\gamma}}\left[\left\langle\boldsymbol{y}_{t}, \boldsymbol{x}_{t}\right\rangle\right].  \numberthis \label{eqn: wftpl_adversarial_regret} \\
\end{align*}
We define the potential function $\Phi_{t} : \mathbb{R}^{L} \rightarrow \mathbb{R}$ for all time instants $t$ in the following way:
\begin{align*}
\Phi_{t}(\boldsymbol{x})=\mathbb{E}_{\boldsymbol{\gamma}}\left[\max _{\boldsymbol{y} \in \mathcal{Y}}\left\langle\boldsymbol{y}, \boldsymbol{x}+\eta_{t} \boldsymbol{\gamma}\right\rangle\right],
\end{align*}
 where $\mathcal{Y}$ is the set of possible cache configurations, i.e., the set $\{y \in \{0,1\}^{L}: \|y\|_1 \leq C \}$. As shown in the proof of Proposition 4.1 in \cite{sqrt_t}, we have
 \begin{align*}
\mathbb{E}_{\boldsymbol{\gamma}}\left[\left\langle\boldsymbol{y}_{t}, \boldsymbol{x}_{t}\right\rangle\right] =\Phi_{t}\left(\boldsymbol{X}_{t+1}\right)-\Phi_{t}\left(\boldsymbol{X}_{t}\right)-\frac{1}{2}\left\langle\boldsymbol{x}_{t}, \nabla^{2} \Phi_{t}\left(\widetilde{\boldsymbol{X}}_{t}\right) \boldsymbol{x}_{t}. \right\rangle 
 \end{align*}
where $\widetilde{\boldsymbol{X}}_{t}=\boldsymbol{X}_{t}+\theta_{t} \boldsymbol{x}_{t}$, for some $\theta_{t} \in[0,1]$. Adding this from time $t'+1$ till $T$ gives us:
\begin{align*}
&\sum_{t=t'+1}^{T} \mathbb{E}_{\boldsymbol{\gamma}}\left[\left\langle\boldsymbol{y}_{t}, \boldsymbol{x}_{t}\right\rangle\right] \\
&=\sum_{t=t'+1}^{T}\left[\Phi_{t}\left(\boldsymbol{X}_{t+1}\right)-\Phi_{t}\left(\boldsymbol{X}_{\boldsymbol{t}}\right)\right]-\frac{1}{2} \sum_{t=t'+1}^{T}\left\langle\boldsymbol{x}_{t}, \nabla^{2} \Phi_{t}\left(\widetilde{\boldsymbol{X}}_{t}\right) \boldsymbol{x}_{t}\right\rangle \\
&=\Phi_{T}\left(\boldsymbol{X}_{T+1}\right)-\Phi_{t'+1}\left(\boldsymbol{X}_{t'+1}\right)+\sum_{t=t'+2}^{T}\left[\Phi_{t-1}\left(\boldsymbol{X}_{t}\right)-\Phi_{t}\left(\boldsymbol{X}_{t}\right)\right] \\
&- \frac{1}{2} \sum_{t=t'+1}^{T}\left\langle\boldsymbol{x}_{t}, \nabla^{2} \Phi_{t}\left(\widetilde{\boldsymbol{X}}_{t}\right) \boldsymbol{x}_{t}\right\rangle.
\end{align*}
Substituting this in \eqref{eqn: wftpl_adversarial_regret} gives us:
\begin{align*}
R_{A}^{\textrm{W-FTPL}(\alpha \sqrt{t})}(T) &\leq t' +  \max _{\boldsymbol{y} \in \mathcal{Y}}\left\langle\boldsymbol{y}, \boldsymbol{X}_{T+1} - X_{t'+1} \right\rangle - \Phi_{T}\left(\boldsymbol{X}_{T+1}\right) +\Phi_{t'+1}\left(\boldsymbol{X}_{t'+1}\right) \\ &+\sum_{t=t'+1}^{T-1}\left[\Phi_{t+1}\left(\boldsymbol{X}_{t+1}\right)-\Phi_{t}\left(\boldsymbol{X}_{t+1}\right)\right] +  \frac{1}{2} \sum_{t=t'+1}^{T}\left\langle\boldsymbol{x}_{t}, \nabla^{2} \Phi_{t}\left(\widetilde{\boldsymbol{X}}_{t}\right) \boldsymbol{x}_{t}\right\rangle. \numberthis \label{eqn: wftpl_ad_regret_ub}
\end{align*}
Using Jensen's inequality, we have that
\begin{align*}
\Phi_{T}\left(\boldsymbol{X}_{T+1}\right ) &= \mathbb{E}_{\boldsymbol{\gamma}}\left[\max _{\boldsymbol{y} \in \mathcal{Y}}\left\langle\boldsymbol{y}, \boldsymbol{X_{T+1}}+\eta_{T} \boldsymbol{\gamma}\right\rangle\right] \\
&\geq \max _{\boldsymbol{y} \in \mathcal{Y}} \mathbb{E}_{\boldsymbol{\gamma}}\left[\left\langle\boldsymbol{y}, \boldsymbol{X_{T+1}}+\eta_{T} \boldsymbol{\gamma}\right\rangle\right] \\
&=  \max _{\boldsymbol{y} \in \mathcal{Y}} \left\langle\boldsymbol{y}, \boldsymbol{X_{T+1}} \right\rangle. 
\end{align*}

Substituting the above in \eqref{eqn: wftpl_ad_regret_ub} gives us:
\begin{align*}
R_{A}^{\textrm{W-FTPL}(\alpha \sqrt{t})}(T) &\leq  t' +  \max _{\boldsymbol{y} \in \mathcal{Y}}\left\langle\boldsymbol{y}, \boldsymbol{X}_{T+1} - X_{t'+1} \right\rangle - \max _{\boldsymbol{y} \in \mathcal{Y}} \left\langle\boldsymbol{y}, \boldsymbol{X_{T+1}} \right\rangle  +\Phi_{t'+1}\left(\boldsymbol{X}_{t'+1}\right) \\ &-\sum_{t=t'+2}^{T}\left[\Phi_{t-1}\left(\boldsymbol{X}_{t}\right)-\Phi_{t}\left(\boldsymbol{X}_{t}\right)\right] +  \frac{1}{2} \sum_{t=t'+1}^{T}\left\langle\boldsymbol{x}_{t}, \nabla^{2} \Phi_{t}\left(\widetilde{\boldsymbol{X}}_{t}\right) \boldsymbol{x}_{t}\right\rangle \\ 
&\leq t' + \Phi_{t'+1}\left(\boldsymbol{X}_{t'+1}\right) \\
&+\sum_{t=t'+1}^{T-1}\left[\Phi_{t+1}\left(\boldsymbol{X}_{t+1}\right)-\Phi_{t}\left(\boldsymbol{X}_{t+1}\right)\right]+  \frac{1}{2} \sum_{t=t'+1}^{T}\left\langle\boldsymbol{x}_{t}, \nabla^{2} \Phi_{t}\left(\widetilde{\boldsymbol{X}}_{t}\right) \boldsymbol{x}_{t}\right\rangle, \numberthis \label{eqn: wftpl_ad_regret_ub_maxvalue}
\end{align*}
as $\displaystyle \max _{\boldsymbol{y} \in \mathcal{Y}}\left\langle\boldsymbol{y}, \boldsymbol{X}_{T+1} - \boldsymbol{X}_{t'+1} \right\rangle \leq \max _{\boldsymbol{y} \in \mathcal{Y}} \left\langle\boldsymbol{y}, \boldsymbol{X_{T+1}} \right\rangle$. 
We also have:
\begin{align*}
 \Phi_{t'+1}\left(\boldsymbol{X}_{t'+1}\right) &= \mathbb{E}_{\boldsymbol{\gamma}}\left[\max _{\boldsymbol{y} \in \mathcal{Y}}\left\langle\boldsymbol{y}, \boldsymbol{X}_{t'+1}+\eta_{t'+1} \boldsymbol{\gamma}\right\rangle\right] \\
 &\leq \mathbb{E}_{\boldsymbol{\gamma}}\left[\max _{\boldsymbol{y} \in \mathcal{Y}}\left\langle\boldsymbol{y}, \boldsymbol{X}_{t'+1} \right\rangle\right] +  \mathbb{E}_{\boldsymbol{\gamma}}\left[\max _{\boldsymbol{y} \in \mathcal{Y}} \left\langle \boldsymbol{y}, \eta_{t'+1} \boldsymbol{\gamma}\right\rangle\right] \\
 &\leq t' + \mathbb{E}_{\boldsymbol{\gamma}}\left[\max _{\boldsymbol{y} \in \mathcal{Y}}\left\langle \boldsymbol{y}, \eta_{t'+1} \boldsymbol{\gamma}\right\rangle\right],
\end{align*}
 as $\displaystyle\left\langle\boldsymbol{y}, \boldsymbol{X}_{t'+1} \right\rangle \leq t'$ for $\boldsymbol{y} \in \mathcal{Y}$. Substituting this back in \eqref{eqn: wftpl_ad_regret_ub_maxvalue}, we get
\begin{align*}
R_{A}^{\textrm{W-FTPL}(\alpha \sqrt{t})}(T) &\leq 2t' + \eta_{t'+1}  \mathbb{E}_{\boldsymbol{\gamma}}\left[\max _{\boldsymbol{y} \in \mathcal{Y}}\left\langle \boldsymbol{y}, \boldsymbol{\gamma}\right\rangle\right] \\
&+\sum_{t=t'+1}^{T-1}\left[\Phi_{t+1}\left(\boldsymbol{X}_{t+1}\right)-\Phi_{t}\left(\boldsymbol{X}_{t+1}\right)\right] +  \frac{1}{2} \sum_{t=t'+1}^{T}\left\langle\boldsymbol{x}_{t}, \nabla^{2} \Phi_{t}\left(\widetilde{\boldsymbol{X}}_{t}\right) \boldsymbol{x}_{t}\right\rangle.
\end{align*} 
As shown in \cite{sqrt_t},
\begin{align*}
\Phi_{t+1}\left(\boldsymbol{X}_{t+1}\right)-\Phi_{t}\left(\boldsymbol{X}_{t+1}\right) \leq \left|\eta_{t+1}-\eta_{t}\right| \mathbb{E}_{\boldsymbol{\gamma}}\left[\max _{\boldsymbol{y} \in \mathcal{Y}}\langle\boldsymbol{y}, \gamma\rangle\right].
\end{align*}
Also, \cite{cohenhazan15} proves that: 
\begin{align*}
  \mathbb{E}_{\boldsymbol{\gamma}}\left[\max _{\boldsymbol{y} \in \mathcal{Y}}\left\langle \boldsymbol{y}, \boldsymbol{\gamma}\right\rangle\right ] \leq C \sqrt{2  \log \left ( \frac{N}{C}\right )}.
\end{align*}
It has also been proved in \cite{sigmetrics} that
\begin{align*}
\sum_{t=1}^{T}\left\langle\boldsymbol{x}_{t}, \nabla^{2} \Phi_{t}\left(\widetilde{\boldsymbol{X}}_{t}\right) \boldsymbol{x}_{t}\right\rangle \leq \sqrt{\frac{2}{\pi}} \sum_{t=1}^{T} \frac{1}{\eta_{t}}.
\end{align*}
Combining all these results, we get
\begin{align*}
R_{A}^{\textrm{W-FTPL}(\alpha \sqrt{t})}(T) &\leq 2t' + \eta_{t'+1}C \sqrt{2  \log \left ( \frac{N}{C}\right )}+ \eta_{T} C \sqrt{2 \ln (N e / C)} + \sqrt{\frac{2}{\pi}} \sum_{t=1}^{T} \frac{1}{\eta_{t}}. 
\end{align*}
As W-FTPL$(\alpha \sqrt{t})$ does not incur any switch cost till $t'$ and then incurs the same switch cost as FTPL$(\alpha \sqrt{t})$, the total switch cost till time $T$ incurred by W-FTPL$(\alpha \sqrt{t})$ can be bounded from above by the switch cost of FTPL$(\alpha \sqrt{t})$ proved in Proposition 4.2 of \cite{sqrt_t}, which has been reproduced below:
\begin{align*}
\sum_{t=2}^{T} \mathbb{E}\left[\left\|\boldsymbol{y}_{t+1}-\boldsymbol{y}_{t}\right\|_{1}\right] &\leq \frac{3 \sqrt{2}}{\alpha \sqrt{\pi}}(\sqrt{T}-1)
+(N-1) \frac{2+\sqrt{2 e \ln (2 N)}}{\sqrt{e}} \ln T \\
&+\frac{3(N-1)(2+\sqrt{2 e \ln (2 N)})}{\sqrt{2 \pi e} \alpha}\left(1-T^{-1 / 2}\right) .
\end{align*}
Thus, we get an $\mathcal{O}(\sqrt{T})$ upper bound for $R_{A}^{\textrm{W-FTPL}(\alpha \sqrt{t})}(T,D)$.

\section{Proof of part (b) of Theorem \ref{thm: unlimited_stochastic}} \label{sec: app_ftpl_lb}
In this section, we prove a lower bound on the regret of FTPL with a constant learning rate under stochastic file requests. For any file $k$, let $\hat{\mu}_k(t)=\frac{c_k(t) + \eta_t \gamma(k)}{t}$ and let $\alpha_k(t)$ denote the empirical average number of requests received by file $k$, where $c_k(t)$ denotes the number of requests received by file $k$ at time $t$.
\begin{align*}
\mathbb{E}[R(T)]&=\mathbb{E}\left[\sum_{t=1}^{T} \mathbbm{1}\{x(t) \in \mathcal{C}\}-\mathbbm{1}\{x(t) \in C(t)\} \right ]  \\
& \stackrel{(a)} {=} \mathbb{E}\left[\sum_{t=1}^{T}\left(\sum_{j \in \mathcal{C}} \mu_{j}-\sum_{k \in C(t)} \mu_{k}\right)\right] \\ 
& \geq  \mathbb{E}\left[\sum_{t=1}^{T} \sum_{k \in C(t) \backslash \mathcal{C}}\left(\mu_{C}-\mu_{k}\right)\right] \\
&=\mathbb{E}\left[\sum_{t=1}^{T} \sum_{k \in \mathcal{L} \backslash \mathcal{C}}\left(\mu_{C}-\mu_{k}\right) \mathbbm{1}\{k \in C(t)\}\right] \\
&=\sum_{k \in \mathcal{L} \backslash \mathcal{C}}\left(\mu_{C}-\mu_{k}\right) \mathbb{E}\left[\sum_{t=1}^{T} \mathbbm{1}\{k \in C(t)\}\right] \\
&=\sum_{k \in \mathcal{L} \backslash \mathcal{C}}\left(\mu_{C}-\mu_{k}\right) \sum_{t=1}^{T} \mathbb{P}(k \in C(t)) \\
&\geq (\mu_C-\mu_{C+1}) \sum_{t=1}^{T} \sum_{k \in \mathcal{L} \backslash \mathcal{C}} \mathbb{P}(k \in C(t)) \\
&\geq (\mu_C-\mu_{C+1}) \sum_{t=1}^{T} \mathbb{P} \left ( \bigcup_{k \in \mathcal{L}\setminus \mathcal{C} }k \in C(t) \right ), \numberthis\label{here}
\end{align*}
where $(a)$ follows from the tower property of conditional expectation (i.e., condition on $C(t)$ inside the outer expectation). 
 The event $\left \{\bigcup_{k \in \mathcal{L}\setminus \mathcal{C} }k \in C(t) \right \}$ corresponds to the event where file 2 belongs to $C(t)$, which is equivalent to saying that file 2 has a perturbed count greater than that of file 1  at time $t$. Thus,
\begin{align*}
\mathbb{P} \left ( \bigcup_{k \in \mathcal{L}\setminus \mathcal{C} }k \in C(t) \right ) &= \mathbb{P}(\hat{\mu}_{2}(t)>\hat{\mu}_1(t))\\
&= \mathbb{P} \biggl (  (\alpha_{C+1}(t)-\alpha_{C}(t))-(\mu_{C+1}-\mu_C)  \\&+\frac{\eta}{t}(\gamma_{C+1}(t)-\gamma_{C}(t))>\Delta_{\min} \biggr ).
\end{align*}
Thus, we get 
\begin{multline}\label{lb} 
  \mathbb{P} \left ( \bigcup_{k \in \mathcal{L}\setminus \mathcal{C} }k \in C(t) \right )  \geq \mathbb{P}\left(\alpha_{2}(t)-\alpha_{1}(t)-(\mu_{2}-\mu_1)>-\frac{2}{t}  \right ) \\ \times \mathbb{P} \left (\frac{\eta}{t}\left (\gamma_{2}(t)-\gamma_{1}(t) \right )>\Delta_{\min} + \frac{2}{t} \right),
\end{multline}
 as the perturbation is independent of the file requests seen so far. We also have,
 \begin{align*}
     \mathbb{P}\left (\frac{\eta}{t}(\gamma_{2}(t)-\gamma_{1}(t)>\Delta_{\min} + \frac{2}{t}  \right) &=\mathbb{P} \left ( \mathcal{N}\left (0,2\eta^2 \right ) > t\Delta_{\min}  +2 \right ) \\
     &\geq \frac{1}{4} e^{-(t\Delta_{\min}+2 )^2/4\eta^2}. \numberthis \label{eqn: ftpl_lb_noise}
 \end{align*}
Also,
\begin{align*}
1-\alpha_{2}(t) &= \alpha_1(t). \numberthis \label{eqn: ftpl_lb_estimator}
\end{align*}
Using \eqref{eqn: ftpl_lb_noise} and \eqref{eqn: ftpl_lb_estimator} in \eqref{lb},
\begin{align*}
\mathbb{P} \left ( \bigcup_{k \in \mathcal{L}\setminus \mathcal{C} }k \in C(t) \right ) &\geq \frac{1}{4}\mathbb{P}\left( \alpha_{2}(t) > \frac{1-\Delta_{\min}}{2} -\frac{1}{t} \right)  e^{-(t\Delta_{\min}+2 )^2/4\eta^2}, \numberthis \label{eqn: ftpl_lb_binomial} 
\end{align*} 
where the last step follows from $1-\mu_{1} = \mu_{2}$. The paper \cite{kaas_mean_1980} shows that any median $m$ of a Binomial$(n,p)$ distribution lies in the interval $[\lfloor np \rfloor, \lceil np \rceil]$. The first term in \eqref{eqn: ftpl_lb_binomial} is the probability of a Binomial$(t,\mu_2)$ random variable exceeding $t\mu_2-1 \leq \lfloor t \mu_2 \rfloor$. Thus, we get that 
 \begin{align*}
     \mathbb{P} \left ( \bigcup_{k \in \mathcal{L}\setminus \mathcal{C} }k \in C(t) \right ) &\geq \frac{1}{8} e^{-(t\Delta_{\min}+2 )^2/4\eta^2}. 
 \end{align*}
Adding this over all $T$ gives us:
\begin{align*}
    \mathbb{E}[R(T)] &\geq  \frac{\Delta_{\min}}{8} \sum_{t=1}^{\frac{2\eta}{\Delta_{\min}}} e^{-(t\Delta_{\min}+2 )^2/4\eta^2} \\
    &\geq \frac{ \eta   e^{-\left ( \frac{1+\eta}{\eta} \right )^2}}{4}.
\end{align*}
 
\section{Proof of part (c) of Theorem \ref{thm: unlimited_stochastic}}  \label{sec: app_ftpl_ub_stochastic}
In this section, we prove an upper bound on the regret (including switching cost) of FTPL($\alpha \sqrt{t}$) under stochastic arrivals. We assume $L \geq 3$. To prevent the regret from scaling as a polynomial in $L$, we use the following idea from \cite{hedge}: The algorithm's regret is upper bounded by $t_0$ in the first $t_0$ rounds and in the other $T-t_0$ rounds, the regret is bounded using standard concentration inequalities. The switching cost is upper bounded by $DCt_0$ for the first $t_0$ requests and in the other $T-t_0$ rounds, the switching cost is bounded using standard concentration inequalities. Recall that $t_0=\max \left \{\frac{8}{\Delta_{\min}^2} \log \left ( {L^3}\right )  , \frac{32 \alpha^2}{\Delta_{\min}^2 }\log \left ( {L^3}\right ) \right \}$.

For any file $k$, at time $t$, let $\alpha_k(t)$ denote the empirical average number of requests received by file $k$. We also define, for any file $k$ after $t$ files have been requested from the cache and for $2 \leq i \leq s$,
\begin{align*}
\hat{\mu}_k(t) &:=\frac{c_k(t) + \eta_{t+1} \gamma(k)}{t}, \numberthis \label{eqn: perturbed_average} 
\end{align*}
 where $c_k(t)$ denotes the number of requests received by file $k$ at time $t$. We first consider the switching cost incurred by this policy. Any file can be fetched into the cache as the learning rate varies with time and as the number of switches is twice the number of fetches, the following equation from \cite{sqrt_t} holds:
\begin{align*}
\mathbb{E}\left[\left\|\boldsymbol{y}_{t+1}-\boldsymbol{y}_{t}\right\|_{1}\right] =2 \sum_{f=1}^{L} \mathbb{P}(\text { The file index } f \text { is fetched at time } t+1). 
\end{align*}
The probability of a file $f$ being fetched at time $t+1$ is evaluated by taking the following cases:
\begin{enumerate}
    \item $f \in \mathcal{C}$: \\
    Here, if file $f$ is fetched at time $t+1$, then $f \notin C(t)$ which implies that at time $t$, $\exists f' \in \mathcal{L} \setminus \mathcal{C}$ such that $f' \in C(t)$. This event can be upper bounded using a union bound over files in $\mathcal{L} \setminus \mathcal{C}$ for $f'$.
    \item $f \notin \mathcal{C}$: \\
    Here, if file $f$ is fetched at time $t+1$, then $f \in C(t+1)$ which implies that at time $t+1$, $\exists f' \in \mathcal{C}$ such that $f' \notin C(t+1)$. This event can be upper bounded using a union bound over files in $\mathcal{C}$ for $f'$.
\end{enumerate}
Therefore,
\begin{multline*}
  \sum_{f=1}^{L}   \mathbb{P}(\text { The file index } f \text { is fetched at time } t+1) \leq \sum_{j=1}^{C} \sum_{k=C+1}^{L} \mathbb{P}(\hat{\mu}_k(t-1)>\hat{\mu}_j(t-1)) \\+  \sum_{k=C+1}^{L} \sum_{j=1}^{C}  \mathbb{P}(\hat{\mu}_k(t)>\hat{\mu}_j(t)).
\end{multline*}
Thus the expected number of switches till $T$ can be bounded as:
\begin{align*}
\frac{D}{2} \sum_{t=2}^T \mathbb{E}\left[\left\|\boldsymbol{y}_{t+1}-\boldsymbol{y}_{t}\right\|_{1}\right]  &\leq  DCt_0+\frac{D}{2} \sum_{t=t_0+1}^{T-1} \sum_{j=1}^{C} \sum_{k=C+1}^{L} \mathbb{P}(\hat{\mu}_k(t-1)>\hat{\mu}_j(t-1)) \\&+\frac{D}{2} \sum_{t=t_0+1}^{T-1} \sum_{j=1}^{C} \sum_{k=C+1}^{L}  \mathbb{P}(\hat{\mu}_k(t)>\hat{\mu}_j(t))  \\
&\leq DCt_0+ D\sum_{t=t_0+1}^{T-1} \sum_{j=1}^{C} \sum_{k=C+1}^{L} \mathbb{P}(\hat{\mu}_k(t)>\hat{\mu}_j(t)) \\
&\leq DCt_0 + D \sum_{t=t_0+1}^{T-1} \sum_{j=1}^{C} \sum_{k=C+1}^{L} \mathbb{P}\left( \hat{\mu}_{j}(t)-\mu_{j} \leq-\Delta_{j, k} / 2\right )\\
&+\mathbb{P}\left(\hat{\mu}_{k}(t)-\mu_{k}>\Delta_{j, k} / 2\right) \\
 &\leq DCt_0+D \sum_{t=t_0+1}^{T} \sum_{j=1}^{C} \sum_{k=C+1}^{L} \mathbb{P}\left( \alpha_{j}(t)-\mu_{j} \leq-\Delta_{j, k} / 4\right )\\
 &+\mathbb{P}\left(\alpha_{k}(t)-\mu_{k}>\Delta_{j, k} / 4\right) \\&+\mathbb{P} \left( \eta_t \gamma(j) \leq-t\Delta_{j, k} / 4 \right )
+\mathbb{P} \left( \eta_t \gamma(k) >t\Delta_{j, k} / 4 \right ),
\end{align*}
where the last two steps follow from a union bounding argument. Now, using the Hoeffding inequality \cite{hoeffding}, 
\begin{align*}
\mathbb{P}\left(\alpha_{j}(t)-\mu_{j} \leq-\Delta_{j, k} / 4\right) &\leq e^{-t \Delta_{j, k}^{2} / 8}, \\
\mathbb{P}\left(\alpha_{k}(t)-\mu_{k}>\Delta_{j, k} / 4\right) &\leq e^{-t^{2} \Delta_{j, k}^{2} / 8}.
\end{align*}
We also have:
\begin{align*}
\mathbb{P}\left(\eta_t \gamma(j) \leq-t \Delta_{j, k} / 4\right)&=\mathbb{P}\left(\eta_t \gamma(k)>t \Delta_{j, k} / 4\right) \\&\leq e^{-t^{2} \Delta_{j, k}^{2} / 32 \eta_t^{2}} = e^{-t\Delta_{j, k}^{2} / 32 \alpha^{2}}.
 \end{align*}     
Thus, we get the following bound on the switching cost:
\begin{align}
    \sum_{t=1}^{T-1} \mathbb{E}\left[\left\|\boldsymbol{y}_{t+1}-\boldsymbol{y}_{t}\right\|_{1}\right] \leq DCt_0+ 2D \sum_{t=t_0+1}^{T-1} \sum_{j=1}^{C} \sum_{k=C+1}^{L} e^{-t \Delta_{j, k}^{2} / 8} + e^{-t\Delta_{j, k}^{2} / 32 \alpha^{2}}. \label{eqn: stochastic_eta_t_switchcost}
\end{align}
The regret for the first $t_0$ rounds is bounded by $t_0$. For $t>t_0$, we use ideas from \cite{learning_to_cache} and \cite{hedge} to upper bound the regret.
\begin{align*}
R_S^{\textrm{FTPL}(\alpha \sqrt{t})}(T) &=\mathbb{E}\left[\sum_{t=1}^{T} \mathbbm{1}\{x(t) \in \mathcal{C}\}-\mathbbm{1}\{x(t) \in C(t)\}\right]  \\
&=\mathbb{E}\left[\sum_{t=1}^{T}\left(\sum_{j \in \mathcal{C}} \mu_{j}-\sum_{k \in C(t)} \mu_{k}\right)\right] \\
&\leq t_0+ \mathbb{E}\left[\sum_{t=t_0+1}^{T} \sum_{j=1}^{C} \sum_{k=C+1}^{L} \Delta_{j, k} \mathbbm{1}\{j \notin C(t), k \in C(t)\}\right] \\
&\leq t_0+ \mathbb{E}\left[\sum_{t=t_0+1}^{T} \sum_{j=1}^{C} \sum_{k=C+1}^{L} \Delta_{j, k} \mathbbm{1}\left(\hat{\mu}_{k}(t)>\hat{\mu}_{j}(t)\right)\right] \\
&\leq t_0+ \mathbb{E}\biggl[\sum_{t=t_0+1}^{T} \sum_{j=1}^{C} \sum_{k=C+1}^{L} \Delta_{j, k}\biggl(\mathbbm{1}\left\{\hat{\mu}_{j}(t)-\mu_{j} \leq-\Delta_{j, k} / 2\right\} 
\\&+\mathbbm{1}\left\{\hat{\mu}_{k}(t)-\mu_{k}>\Delta_{j, k} / 2\right\}\biggr)\biggr] \\ 
&\leq t_0+ \mathbb{E}\biggl[\sum_{t=t_0+1}^{T} \sum_{j=1}^{C} \sum_{k=C+1}^{L} \Delta_{j, k} \biggl ( \mathbbm{1}\left\{\alpha_j(t)-\mu_{j} \leq-\Delta_{j, k} / 4\right\} \\&+ \mathbbm{1}\left\{\eta_t\gamma(j)\leq-t\Delta_{j, k} / 4\right\} \biggr ) \biggr] \\
&+\mathbb{E}\biggl[\sum_{t=t_0+1}^{T} \sum_{j=1}^{C} \sum_{k=C+1}^{L} \Delta_{j, k} \biggl ( \mathbbm{1}\left\{\alpha_k(t)-\mu_{k}>\Delta_{j, k} / 4\right\}\ \\&+ \mathbbm{1} \left \{ \eta_t\gamma(k) > t\Delta_{j, k} / 4 \right \} \biggr) \biggr]. \numberthis \label{eqn: eta_t_stochastic_regret_bound}
\end{align*}
By taking expectation inside the summation, we obtain the following upper bound for \eqref{eqn: eta_t_stochastic_regret_bound}:
\begin{align*}
R_S^{\textrm{FTPL}(\alpha \sqrt{t})}(T)  & \leq t_0+ \sum_{j=1}^{C} \sum_{k=C+1}^{L} \sum_{t=t_0+1}^{T} 2 \Delta_{j, k} \left ( e^{-t \Delta_{j, k}^{2} / 8} + e^{-t^2 \Delta_{j,k}^2/32 \eta^2} \right ) \numberthis \label{eqn: eta_t_regret} 
\end{align*}
Thus, combining \eqref{eqn: stochastic_eta_t_switchcost} and \eqref{eqn: eta_t_regret} , we have:
\begin{align*}
R_S^{\textrm{FTPL}(\alpha \sqrt{t})}(T,D) &\leq   (1+DC)t_0 \\&+ 2D\sum_{t=t_0+1}^{T} \sum_{j=1}^{C} \sum_{k=C+1}^{L}   \biggl(e^{-t \Delta_{j, k}^{2} / 8} \\&+e^{-t \Delta_{j, k}^{2} / 32 \alpha^{2}}\biggr) \\&+  2\sum_{t=t_0+1}^{T} \sum_{j=1}^{C} \sum_{k=C+1}^{L}  \Delta_{j,k} \biggl(e^{-t \Delta_{j, k}^{2} / 8} \\&+e^{-t \Delta_{j, k}^{2} / 32 \alpha^{2}}\biggr). \numberthis \label{all}
\end{align*}
Each of these terms are bounded separately in the following way:
\begin{align*}
 \sum_{t=t_0+1}^{T} \sum_{j=1}^{C} \sum_{k=C+1}^{L}   e^{-t \Delta_{j, k}^{2} / 8} &\leq \sum_{t=t_0+1}^{T} C(L-C) e^{-t \Delta_{\min}^2/8} \\& \leq  C(L-C) e^{-t_0 \Delta_{\min}^2/8} \sum_{t=t_0+1}^{T} e^{-(t-t_0) \Delta_{\min}^2/8}\\
 &\leq \sum_{t=1}^{T-t_0} e^{-t \Delta_{\min}^2/8} \leq \frac{8}{\Delta_{\min}^2 }. 
 \end{align*}
 Similarly,
 \begin{align*}
 \sum_{t=t_0+1}^{T} \sum_{j=1}^{C} \sum_{k=C+1}^{L}  e^{-t \Delta_{j, k}^{2} / 32 \alpha^{2}} &\leq \sum_{t=t_0+1}^{T} C(L-C) e^{-t \Delta_{\min}^{2} / 32 \alpha^{2}} \\
 &\leq L^2 e^{-t_0 \Delta_{\min}^{2} / 32 \alpha^{2}}\sum_{t=t_0+1}^{T} e^{-(t-t_0)\Delta_{\min}^{2} / 32 \alpha^{2}}.\\
 &\leq \sum_{t=1}^{T-t_0} e^{-t \Delta_{\min}^{2} / 32 \alpha^{2}} \leq \frac{32 \alpha^2}{\Delta_{\min}^2 }.
  \end{align*}
The function $f(u)=$ $u e^{-u^{2} / 2}$ is decreasing on $[1,+\infty)$. Since $\Delta_{j,k} \geq \Delta_{\min}, j \in \mathcal{C}, k \notin \mathcal{C}$, we get
 \begin{align*}
\Delta_{j,k} e^{-t \Delta_{j,k}^{2} / 8} &=\frac{2}{\sqrt{t}} f\left(\frac{\sqrt{t} \Delta_{j,k}}{2}\right) \\&\leq \frac{2}{\sqrt{t}} f\left(\frac{\sqrt{t} \Delta_{\min}}{2}\right)=\Delta_{\min} e^{-t \Delta_{\min}^{2} / 8} .
  \end{align*}
Note that for $t \geq t_0+1$, $t> \frac{4}{\Delta_{\min}^2}$ holds as $t_0 \geq \frac{8}{\Delta_{\min}^2}$. Using this, we have that:
 \begin{align*}
 \sum_{t=t_0+1}^{T} \sum_{j=1}^{C} \sum_{k=C+1}^{L}  \Delta_{j,k} e^{-t \Delta_{j, k}^{2} / 8} &\leq  \sum_{t=t_0+1}^{T} C(L-C) \Delta_{\min} e^{-t \Delta_{\min}^{2} / 8} &\leq \frac{8}{\Delta_{\min} }.
  \end{align*}
  Similarly,
   \begin{align*}
   \sum_{t=t_0+1}^{T} \sum_{j=1}^{C} \sum_{k=C+1}^{L}  \Delta_{j,k} e^{-t \Delta_{j, k}^{2} / 32 \alpha^{2}}  &\leq \frac{32 \alpha^2}{\Delta_{\min} }. 
\end{align*}
Substituting these bounds in \eqref{eqn: eta_t_regret}, we have the following upper bound on the regret of FTPL($\alpha \sqrt{t}$):
\begin{align*}
R_S^{\textrm{FTPL}(\alpha \sqrt{t})}(T,D)   \leq \left (1+ DC \right ) t_0 + 2\left (1+ \frac{D}{\Delta_{\min}} \right ) \left ( \frac{8}{\Delta_{\min}} + \frac{32 \alpha^2}{\Delta_{\min}} \right )  . 
\end{align*}

\section{Proof of part (d) of Theorem \ref{thm: unlimited_stochastic}}\label{sec: app_wftpl_ub}
In this section, we prove an upper bound on the regret (including the switching cost) of the W-FTPL algorithm under stochastic file requests. Using \eqref{eqn: stochastic_eta_t_switchcost},
\begin{align*}
   \frac{D}{2} \sum_{t=t'+1}^{T} \mathbb{E}\left[\left\|\boldsymbol{y}_{t+1}-\boldsymbol{y}_{t}\right\|_{1}\right] &\leq  2D\sum_{t=t'+1}^{T} \sum_{j=1}^{C} \sum_{k=C+1}^{L}   \left(e^{-t \Delta_{j, k}^{2} / 8}+e^{-t \Delta_{j, k}^{2} / 32 \alpha^{2}}\right) \\
   & \leq 2D C(L-C) \sum_{t=t'+1}^{T}  \left(e^{-t \Delta_{\min}^{2} / 8}+e^{-t \Delta_{\min}^{2} / 32 \alpha^{2}}\right) \\
   & \leq 2D C(L-C) \sum_{t=t'+1}^{T}  \biggl( e^{-t' \Delta_{\min}^{2} / 8} e^{-(t-t') \Delta_{\min}^{2} / 8} \\&+e^{-t' \Delta_{\min}^{2} / 32 \alpha^{2}} e^{-(t-t') \Delta_{\min}^{2} / 32 \alpha^{2}} \biggr) \\
   &\leq 2DC(L-C) \left ( e^{-t' \frac{\Delta_{\min}^{2}}{8} } \frac{8}{\Delta_{\min}^2} + e^{-t' \frac{\Delta_{\min}^{2}}{32 \alpha^{2}}} \frac{32 \alpha^2}{\Delta_{\min}^2} \right ) \\
&\leq 2DC(L-C) \biggl ( e^{- u (\log D)^{1+\beta} \Delta_{\min}^{2} / 8} \frac{8}{\Delta_{\min}^2} \\&+ e^{-u (\log D)^{1+\beta} \Delta_{\min}^{2} / 32 \alpha^{2}} \frac{32 \alpha^2}{\Delta_{\min}^2} \biggr). \numberthis \label{eqn: wftpl_switchcost}
\end{align*}
Till $t'$, the regret is bounded by $t'$. Thus, the regret can be bounded in the following way using \eqref{eqn: eta_t_regret} :
\begin{align*}
    \mathbb{E}\left[\sum_{t=1}^{T} \mathbbm{1}\{x(t) \in \mathcal{C}\}-\mathbbm{1}\{x(t) \in C(t)\} \right ] 
        &\leq t' + 2\sum_{t=t'+1}^T \sum_{j=1}^{C} \sum_{k=C+1}^{L} \Delta_{j,k}\biggl(e^{-t \Delta_{j, k}^{2} / 8} \\&+e^{-t \Delta_{j, k}^{2} / 32 \alpha^{2}}\biggr) \\
        &\leq t' + 2 \left ( \frac{8}{\Delta_{\min }} + \frac{32 \alpha^2}{\Delta_{\min }} \right ). \numberthis \label{eqn: wftpl_regret}
\end{align*}
Combining \eqref{eqn: wftpl_switchcost} and \eqref{eqn: wftpl_regret} gives the following upper bound on the regret of W-FTPL($\alpha \sqrt{t}$):
\begin{align*}
    R_S^{\textrm{W-FTPL}(\alpha \sqrt{t})}(T,D) &\leq t' + \frac{16}{\Delta_{\min }} + \frac{64 \alpha^2}{\Delta_{\min } } + 2LDC(L-C) \biggl ( e^{- u (\log D)^{1+\beta} \Delta_{\min}^{2} / 8} \frac{8}{\Delta_{\min}^2} \\&+ e^{-u (\log D)^{1+\beta} \Delta_{\min}^{2} / 32 \alpha^{2}} \frac{32 \alpha^2}{\Delta_{\min}^2} \biggr ).
\end{align*}
\section{Proof of part (a) of Theorem \ref{thm: restricted_stochastic_main}} \label{sec: app_restricted_stochastic_lb}
In this section, we prove a lower bound on the regret of any policy $\pi$ under stochastic file requests when cache updates are restricted to $s+1$ fixed points. To provide a lower bound on the regret, we first prove a lower bound on the regret for the initial set of $r_1$ requests. 
\begin{align*}
R_{S}^{\pi}(r_1)&\geq \mathbb{E}\left[\sum_{t=1}^{r_1} \mathbbm{1}\{x(t) \in \mathcal{C}\}-\mathbbm{1}\{x(t) \in C(t)\} \right ]  \\
& \stackrel{(a)} {=} \mathbb{E}\left[\sum_{t=1}^{r_1}\left(\sum_{j \in \mathcal{C}} \mu_{j}-\sum_{k \in C(t)} \mu_{k}\right)\right] \\ 
& \geq  \mathbb{E}\left[\sum_{t=1}^{r_1} \sum_{k \in C(t) \backslash \mathcal{C}}\left(\mu_{C}-\mu_{k}\right)\right] \\
&=\mathbb{E}\left[\sum_{t=1}^{r_1} \sum_{k \in \mathcal{L} \backslash \mathcal{C}}\left(\mu_{C}-\mu_{k}\right) \mathbbm{1}\{k \in C(t)\}\right] \\
&=\sum_{k \in \mathcal{L} \backslash \mathcal{C}}\left(\mu_{C}-\mu_{k}\right) \mathbb{E}\left[\sum_{t=1}^{r_1} \mathbbm{1}\{k \in C(t)\}\right] \\
&=\sum_{k \in \mathcal{L} \backslash \mathcal{C}}\left(\mu_{C}-\mu_{k}\right) \sum_{t=1}^{r_1} \mathbb{P}(k \in C(t)) \\
&\geq (\mu_C-\mu_{C+1}) \sum_{t=1}^{r_1} \sum_{k \in \mathcal{L} \backslash \mathcal{C}} \mathbb{P}(k \in C(t)),\numberthis\label{eqn: restricted_stochastic_lb}
\end{align*}
where $(a)$ follows from the tower property of conditional expectation (i.e., condition on $C(t)$ inside the outer expectation). Note that $C(t)$ remains constant during the first $r_1$ requests and consists of $C$ files chosen randomly from the library of $L$ files. Thus, for any $k \in \mathcal{L} \setminus \mathcal{C}$ and $1 \leq t \leq r_1$, 
\begin{align*}
\mathbb{P}(k \in C(t)) = \frac{{N-1 \choose C-1}}{{N \choose C}} = \frac{C}{N}.
\end{align*}
Using this in \eqref{eqn: restricted_stochastic_lb}, we get that
\begin{align*}
R_{S}^{\pi}(r_1) &\geq r_1(\mu_C-\mu_{C+1}) (L-C) C/N. 
\end{align*}
For proving a lower bound on the regret for larger values of $T$, we consider $L=2, C=1$. Given a popularity distribution $\mu=(\mu_1, \mu_2)$ such that $\mu_1 > \mu_2$ and a policy $\pi$, 
\begin{align*}
	R_{S, \mu}^{\pi} (T)  &= \mathbb{E}_{\pi \mu} \left[ \sum_{t=1}^{T} \mathbbm{1}\{x(t) \in \mathcal{C}\}-\mathbbm{1}\{x(t) \in C(t)\} \right] \\
	&= \sum_{t=1}^{T} \left(\mu_1 - \mu_2 \right) \mathbb{E}_{\pi \mu} \left[\mathbbm{1} \left\{C(t) = \{2\} \right\} \right] \\
	&= \left(\mu_1 - \mu_2 \right)  \sum_{t=1}^{T} \mathbb{P}_{\pi \mu} \left ( C(t) = \{2\}  \right ).
\end{align*}
Now, consider the popularity distribution $\mu'=(\mu_2, \mu_1)$. Following the same steps for this popularity distribution, we have that
\begin{align*}
R_{S, \mu'}^{\pi} (T)  &=	\left(\mu_1 - \mu_2 \right)  \sum_{t=1}^{T} \mathbb{P}_{\pi \mu'} \left ( C(t) = \{1\}  \right ).
\end{align*}
Defining $\Delta = \mu_1 - \mu_2$, we have that
\begin{multline*}
	R_{S, \mu'}^{\pi} (T) + R_{S, \mu}^{\pi} (T) = \Delta \sum_{t=1}^{T} \mathbb{P}_{\pi \mu} \left (C(t) = \{2\}  \right ) + \Delta  \sum_{t=1}^{T} \mathbb{P}_{\pi \mu'} \left ( C(t) = \{1\} \right ).
\end{multline*}
Using the fact that the cache configuration remains constant except at the places where the cache is allowed to change its contents, we have that
\begin{multline}
	R_{S, \mu'}^{\pi} (T) + R_{S, \mu}^{\pi} (T) \geq R_{S, \mu'}^{\pi} (r_1) + R_{S, \mu}^{\pi} (r_1) \\+ \Delta  \sum_{i=2}^{s} r_i \Bigl ( \mathbb{P}_{\pi \mu} \left ( C(t_i+1) = \{2\}  \right ) + \mathbb{P}_{\pi \mu'} \left ( C(t_i+1) = \{1\}  \right ) \Bigr ), \label{eqn: restricted_lb_main}
\end{multline}
where $t_i$ has been defined in \eqref{eqn: cum_sums}. 
\begin{lemma}[\cite{lattimore_szepesvari_2020}, Theorem 14.2 (Bretagnolle-Huber inequality)] \label{lem: BR_ineq}
	Let $P$ and $Q$ be probability distributions on the same measurable space $(\Omega, \mathcal{F})$ and let $A \in \mathcal{F}$ be an arbitrary event. Then, 
	\begin{align*}
		P(A) + Q(A^{\mathsf{c}}) \geq \frac{1}{2} \exp \left(-D(P, Q)\right).
	\end{align*}
\end{lemma}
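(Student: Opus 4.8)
The plan is to prove this purely through standard divergence inequalities, reducing the two-sided error probability to the \emph{affinity} (overlap) between $P$ and $Q$, then lower-bounding that affinity by the Bhattacharyya coefficient, and finally by $\exp(-D(P,Q))$ via Jensen's inequality. Throughout, I write $p$ and $q$ for densities of $P$ and $Q$ with respect to a common dominating measure $\lambda$ (for instance $\lambda = P + Q$), and recall that $D(P,Q)$ denotes the relative entropy $D(P,Q) = \int p \log(p/q)\, d\lambda$.

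First I would remove the dependence on the particular event $A$ by passing to the affinity. Since $p \geq \min(p,q)$ on $A$ and $q \geq \min(p,q)$ on $A^{\mathsf c}$,
$$P(A) + Q(A^{\mathsf c}) = \int_A p\, d\lambda + \int_{A^{\mathsf c}} q\, d\lambda \geq \int_A \min(p,q)\, d\lambda + \int_{A^{\mathsf c}} \min(p,q)\, d\lambda = \int \min(p,q)\, d\lambda.$$
It therefore suffices to establish the event-free bound $\int \min(p,q)\, d\lambda \geq \tfrac12 \exp(-D(P,Q))$.

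Next I would move from the minimum to the Bhattacharyya coefficient $\int \sqrt{pq}\, d\lambda$ through Cauchy--Schwarz. Factoring $\sqrt{pq} = \sqrt{\min(p,q)}\,\sqrt{\max(p,q)}$ gives
$$\left(\int \sqrt{pq}\, d\lambda\right)^2 \leq \left(\int \min(p,q)\, d\lambda\right)\left(\int \max(p,q)\, d\lambda\right),$$
and since $\int \max(p,q)\, d\lambda = \int (p + q - \min(p,q))\, d\lambda = 2 - \int \min(p,q)\, d\lambda \leq 2$, this yields the key estimate $\int \min(p,q)\, d\lambda \geq \tfrac12 \left(\int \sqrt{pq}\, d\lambda\right)^2$. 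Finally I would lower-bound the coefficient by applying Jensen's inequality to the convex map $x \mapsto e^x$ under $P$: writing $\int \sqrt{pq}\, d\lambda = \mathbb{E}_P[\exp(\tfrac12 \log(q/p))]$,
$$\int \sqrt{pq}\, d\lambda \geq \exp\!\left(\tfrac12 \mathbb{E}_P[\log(q/p)]\right) = \exp\!\left(-\tfrac12 D(P,Q)\right).$$
Chaining the three displays delivers $P(A) + Q(A^{\mathsf c}) \geq \int \min(p,q)\, d\lambda \geq \tfrac12 \left(\int \sqrt{pq}\, d\lambda\right)^2 \geq \tfrac12 \exp(-D(P,Q))$, as required.

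I expect the main subtlety to lie in the Cauchy--Schwarz step: spotting the factorization $\min(p,q)\max(p,q) = pq$ and controlling $\int \max(p,q)\, d\lambda \leq 2$ is precisely what makes the whole chain close. The remaining care is in the degenerate cases, which I would dispatch at the outset by noting that if $P \not\ll Q$ then $D(P,Q) = \infty$ and the right-hand side is $0$, so the bound holds trivially, while on the set $\{p = 0\}$ the integrand $\sqrt{pq}$ vanishes and contributes nothing to either side of the Jensen step.
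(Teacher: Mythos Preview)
Your proof is correct and is essentially the standard derivation of the Bretagnolle--Huber inequality via the chain $P(A)+Q(A^{\mathsf c}) \geq \int \min(p,q) \geq \tfrac{1}{2}\bigl(\int\sqrt{pq}\bigr)^2 \geq \tfrac{1}{2}e^{-D(P,Q)}$. The paper does not supply a proof at all: the lemma is quoted as Theorem~14.2 of \cite{lattimore_szepesvari_2020} and used as a black box, so there is no original argument to compare against. Your write-up therefore goes beyond what the paper needs, but it is sound; the handling of the degenerate cases (absolute-continuity failure giving $D=\infty$, and the set $\{p=0\}$) is appropriate.
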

Consider a fixed $i$ such that $2 \leq i \leq s$. Setting $A$ to be the event $\{C(t_i+1) = \{2\} \}$ in Lemma \ref{lem: BR_ineq}, we have that
\begin{align*}
	\mathbb{P}_{\pi \mu} \left ( C(t_i+1) = \{2\} \right )+ \mathbb{P}_{\pi \mu'} \left ( C(t_i+1) = \{1\} \right ) \geq \frac{1}{2} \exp \left(- D(\mathbb{P}_{\pi \mu}, \mathbb{P}_{\pi \mu'})\right).
\end{align*}
\begin{lemma} \label{lem: div}
Let $\mathbb{C}$ be the set of valid caching configurations. If $\mathbb{P}_{\pi \mu}$ and $\mathbb{P}_{\pi \mu'}$ are probability measures on the set $\mathcal{G}_i := \{[2]^{t_i} \times \mathbb{C}^{t_i +1} \}$, then
	\begin{align*}
		D(\mathbb{P}_{\pi \mu}, \mathbb{P}_{\pi \mu'}) = t_i D(\mathbb{P}_{\mu}, \mathbb{P}_{\mu'}),
	\end{align*}
where $\mathbb{P}_{\mu}, \mathbb{P}_{\mu'}$ are the corresponding marginal distributions.
\end{lemma}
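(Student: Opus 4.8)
The plan is to prove the identity by the chain rule for relative entropy (the ``divergence decomposition'' technique standard in the bandit literature). First I would fix $i$ with $2 \le i \le s$ and describe a generic element of $\mathcal{G}_i$ as an interleaved sequence $\omega = (y_1, x_1, y_2, x_2, \ldots, y_{t_i}, x_{t_i}, y_{t_i+1})$, where $x_t \in [2]$ is the file requested in slot $t$ and $y_t \in \mathbb{C}$ is the cache configuration used in slot $t$. Since $L = 2$ and $C = 1$, both $[2]$ and $\mathbb{C}$ are finite, so $\mathbb{P}_{\pi\mu}$ and $\mathbb{P}_{\pi\mu'}$ are discrete and one can argue directly with probability mass functions; the fact that the cache configuration is held fixed between the prescribed update points merely forces certain consecutive $y_t$'s to coincide and causes no difficulty.

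Next I would write the likelihood of $\omega$ under each measure as a telescoping product over the causal ordering of events within the horizon $[1,t_i]$: at the start of slot $t$ the policy maps the observed history $h(t)$ (and its own internal randomization, a fixed distribution not depending on the popularity vector) to $y_t$ via a kernel $\pi(y_t \mid h(t))$; then the request $x_t$ is drawn, and because the requests are i.i.d. and independent of the policy's randomness and of $C(t)$, its conditional law given everything observed so far is simply the one-slot request distribution $\mathbb{P}_{\mu}$ under the first measure and $\mathbb{P}_{\mu'}$ under the second. Hence
\begin{align*}
\mathbb{P}_{\pi\mu}(\omega) &= \left(\prod_{t=1}^{t_i+1} \pi(y_t\mid h(t))\right)\prod_{t=1}^{t_i}\mathbb{P}_{\mu}(x_t), \\
\mathbb{P}_{\pi\mu'}(\omega) &= \left(\prod_{t=1}^{t_i+1} \pi(y_t\mid h(t))\right)\prod_{t=1}^{t_i}\mathbb{P}_{\mu'}(x_t),
\end{align*}
and the two expressions differ only in the request factors.

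Then I would take logarithms, subtract, and take the expectation under $\mathbb{P}_{\pi\mu}$: every $\pi(y_t\mid h(t))$ factor cancels exactly, leaving
\begin{align*}
D(\mathbb{P}_{\pi\mu},\mathbb{P}_{\pi\mu'}) = \mathbb{E}_{\pi\mu}\left[\sum_{t=1}^{t_i}\log\frac{\mathbb{P}_{\mu}(x_t)}{\mathbb{P}_{\mu'}(x_t)}\right] = \sum_{t=1}^{t_i}\mathbb{E}_{\pi\mu}\left[\log\frac{\mathbb{P}_{\mu}(x_t)}{\mathbb{P}_{\mu'}(x_t)}\right].
\end{align*}
Finally, since the marginal law of each $x_t$ under $\mathbb{P}_{\pi\mu}$ is exactly $\mathbb{P}_{\mu}$ (again by i.i.d.-ness and independence from the policy), each summand equals $D(\mathbb{P}_{\mu},\mathbb{P}_{\mu'})$, which gives the claimed value $t_i\,D(\mathbb{P}_{\mu},\mathbb{P}_{\mu'})$.

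The only step that needs care --- and which I would record as an explicit observation rather than grind through --- is the claim that, conditioned on the full history $h(t)$ and on the freshly chosen configuration $y_t$, the request $x_t$ still has law $\mathbb{P}_{\mu}$ (resp. $\mathbb{P}_{\mu'}$); this is exactly where the modelling assumptions enter, namely that the arrival process is i.i.d. and that the environment does not react to the policy's realized decisions. Everything else is the standard finite-alphabet chain rule and poses no obstacle; the measure-theoretic version valid for general $L$ and $C$ would follow verbatim from the divergence-decomposition lemma of \cite{lattimore_szepesvari_2020}, but is not needed here.
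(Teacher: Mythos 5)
Your proposal is correct and follows essentially the same route as the paper: factor the joint likelihood over the causal ordering into policy kernels and one-slot request probabilities, observe that the policy factors cancel in the log-likelihood ratio, and use the i.i.d.\ nature of the requests to reduce the remaining sum to $t_i$ copies of the single-request divergence. Your explicit remark about where the modelling assumptions enter (the request law being unaffected by the realized history and configuration) is exactly the content of the paper's ``log likelihood does not depend upon the caching policy'' step.
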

Using the above lemma, we have that
\begin{align*}
	\mathbb{P}_{\pi \mu} \left ( C(t_i+1) = \{2\}  \right ) + \mathbb{P}_{\pi \mu'} \left (C(t_i+1) = \{1\} \right ) &\geq \frac{1}{2} \exp \left(- t_i D(\mathbb{P}_{\mu}, \mathbb{P}_{\mu'})\right) \\
	&\geq \frac{1}{2} \exp \left(- t_i \frac{\Delta^2}{\mu_1 \mu_2 } \right),
\end{align*}
where the last step follows from an upper bound on the KL Divergence between two Bernoulli distributions. Substituting this back in \eqref{eqn: restricted_lb_main}, we have that 
\begin{align*}
	R_{S, \mu'}^{\pi} (T) + R_{S, \mu}^{\pi} (T) \geq r_1 \Delta +  \sum_{i=2}^{s} r_i \frac{\Delta}{2} \exp \left(- t_i \frac{\Delta^2}{\mu_1 \mu_2 } \right).
\end{align*}
Thus, 
\begin{align*}
\max \left\{ R_{S, \mu'}^{\pi} (T), R_{S, \mu}^{\pi} (T) \right\} \geq \frac{r_1 \Delta}{2} +  \sum_{i=2}^{s} r_i \frac{\Delta}{4} \exp \left(- t_i \frac{\Delta^2}{\mu_1 \mu_2 } \right),
\end{align*}
which gives us the result. 
\begin{proof}[Proof of Lemma \ref{lem: div}]
	Let $p_\mu(x) = \mathbb{P}_\mu (x(t) = x)$ be the density function associated with $\mathbb{P}_{\mu}$ and let $p_{\pi \mu}$ be the density function associated with $\mathbb{P}_{\pi \mu}$. We denote by $\pi( \mathcal{C}_1 | h)$ to be the probability of the caching policy choosing cache configuration $\mathcal{C}_1$ given the history $h$. Then,
	\begin{multline*}
		p_{\pi \mu} (C(1), x_1, C(2), x_2, \ldots, C(t_i+1)) \\= \pi(C(1)) p_{\mu}(x_1) \pi(C(2) | C(1), x_1) \cdots \pi(C(t_i+1) | C(1), x_1, \ldots, x_{t_i}).
	\end{multline*}
Thus,
\begin{align*}
	D(\mathbb{P}_{\pi \mu}, \mathbb{P}_{\pi \mu'}) &= \sum_{g \in \mathcal{G}_i} p_{\pi \mu} (g) \log \left( \frac{p_{\pi \mu}(g)}{p_{\pi \mu'}(g)}  \right) \\
	&= \sum_{X^i = (x(1), \ldots, x(t_i))} p_{\mu }(X^i) \log \left( \frac{p_{\mu}(X^i)}{p_{\mu'}(X^i)} \right) \\
	&=  t_i D(\mathbb{P}_{\mu}, \mathbb{P}_{\mu'}),
\end{align*}
where the second last step follows from the fact that the log likelihood does not depend upon the caching policy and the last step follows from the independence of the file requests. 
\end{proof} 
\section{Proof of part (b) of Theorem \ref{thm: restricted_stochastic_main}} \label{sec: app_restricted_stochastic_ub_ftpl}
In this section, we prove an upper bound on the regret of FTPL($\alpha \sqrt{t}$) when cache updates are restricted to $s+1$ fixed time slots under adversarial requests. For any file $k$, at time $t$, let $\alpha_k(t)$ denote the empirical average number of requests received by file $k$. We also define, for any file $k$ after $t$ files have been requested from the cache and for $2 \leq i \leq s$,
\begin{align*}
\hat{\mu}_k(t) &:=\frac{c_k(t) + \eta_t \gamma(k)}{t}, \numberthis \label{eqn: perturbed_average} \\
t_i &:=\sum_{j=1}^{i-1} r_j, \numberthis \label{eqn: cum_sums} \\
\eta^i &:= \eta_{t_i},
\end{align*}
 where $c_k(t)$ denotes the number of requests received by file $k$ at time $t$. Thus,
\begin{align*}
R_{S}^{FTPL(\alpha \sqrt{t})}(T) &=\mathbb{E}\left[\sum_{t=1}^{T} \mathbbm{1}\{x(t) \in \mathcal{C}\}-\mathbbm{1}\{x(t) \in C(t)\}\right]  \\
&\stackrel{(a)}{\leq}  r_1 + \mathbb{E}\left[\sum_{t=r_1+1}^{T} \mathbbm{1}\{x(t) \in \mathcal{C}\}-\mathbbm{1}\{x(t) \in C(t)\}\right]  \\
&=r_1+\mathbb{E}\left[\sum_{i=2}^{s} \sum_{t=t_i+1}^{t_{i+1}} \mathbbm{1}\{x(t) \in \mathcal{C}\}-\mathbbm{1}\{x(t) \in C(t)\}\right]  \\
&= r_1+\sum_{i=2}^{s} \sum_{t=t_i+1}^{t_{i+1}}\mathbb{E}\left[ \mathbbm{1}\{x(t) \in \mathcal{C}\}-\mathbbm{1}\{x(t) \in C(t)\}\right] \\
& \stackrel{(b)}{=}r_1+ \sum_{i=2}^{s} r_i \, \mathbb{E}\left[ \mathbbm{1}\{x(t_i+1) \in \mathcal{C}\}-\mathbbm{1}\{x(t_i+1) \in C(t_i+1)\}\right], \\
\end{align*}
where (a) follows from bounding the regret for the first $r_1$ time slots by $r_1$ and (b) follows from the fact that $C(t)$ remains the same from $t_i+1$ to $t_{i+1}$ for $1 \leq i \leq s$. 
\begin{align*}
R_{S}^{FTPL(\alpha \sqrt{t})}(T) & \stackrel{(c)}{\leq} r_1 +  \sum_{i=2}^{s} r_i \, \mathbb{E}\left[ \left [\left(\sum_{j \in \mathcal{C}} \mu_{j}-\sum_{k \in C\left (t_i +1\right )} \mu_{k}\right)\right] \right ]\\
&\leq r_1 + \sum_{i=2}^{s} r_i  \, \mathbb{E}\left[ \sum_{j=1}^{C} \sum_{k=C+1}^{L} \Delta_{j, k} \mathbbm{1}\left \{j \notin C\left (t_i +1 \right ), k \in C\left (t_i +1\right ) \right \}\right] \\
&\leq r_1 + \sum_{i=2}^{s} r_i  \, \mathbb{E}\left[ \sum_{j=1}^{C} \sum_{k=C+1}^{L} \Delta_{j, k} \mathbbm{1}\left(\hat{\mu}_{k} \left (t_i \right )>\hat{\mu}_{j} \left (t_i \right )\right)\right]\\
& \stackrel{(d)}{\leq} r_1 + \sum_{i=2}^{s} r_i \, \mathbb{E}\Biggl[ \sum_{j=1}^{C} \sum_{k=C+1}^{L} \Delta_{j, k}\biggl(\mathbbm{1}\left\{\hat{\mu}_{j}\left (t_i \right )-\mu_{j} \leq-\Delta_{j, k} / 2\right\} 
\\&+\mathbbm{1}\left\{\hat{\mu}_{k}\left (t_i  \right )-\mu_{k}>\Delta_{j, k} / 2\right\}\biggr)\Biggr],
\end{align*}
where $(c)$ follows from the tower property of conditional expectation (i.e., condition on $C(t)$ inside the outer expectation) and $(d)$ follows from adding $\Delta_{j,k}$ on both sides and using the fact that at least one of $(\hat{\mu}_{k}\left (t_i \right )-\mu_{k} )$ and $(\mu_{j} - \hat{\mu}_{j}\left (t_i \right )) $ must be greater than $\Delta_{j,k}/2$. Using \eqref{eqn: perturbed_average},
\begin{align*}
R_{S}^{FTPL(\alpha \sqrt{t})}(T) &\leq r_1 + \sum_{i=2}^{s} r_i  \, \mathbb{E}\Biggl[\sum_{j=1}^{C} \sum_{k=C+1}^{L} \Delta_{j, k} \Biggl ( \mathbbm{1}\left\{\alpha_j (t_i)  -\mu_{j} \leq-\Delta_{j, k} / 4\right\} \\&+ \mathbbm{1}\left\{\eta_{t_i }\gamma(j)\leq-t_i \Delta_{j, k} / 4\right\} \Biggr ) \Biggr] \\
&+ \sum_{i=2}^{s} r_i \, \mathbb{E}\Biggl[ \sum_{j=1}^{C} \sum_{k=C+1}^{L} \Delta_{j, k} \Biggl ( \mathbbm{1}\left\{\alpha_k(t_i) -\mu_{k}>\Delta_{j, k} / 4\right\}\ \\&+ \mathbbm{1} \left \{ \eta_{t_i }\gamma(k) > t_i \Delta_{j, k} / 4 \right \} \Biggr) \Biggr],  \numberthis \label{eqn: restricted_stoch_ftpl_regret}
\end{align*}
which follows from a union bounding argument similar to (d). Using Hoeffding's inequality (\cite{hoeffding}), we obtain
\begin{align*}
\mathbb{P}\left(\alpha_{j}(t)-\mu_{j} \leq-\Delta_{j, k} / 4\right) &\leq e^{-t \Delta_{j, k}^{2} / 8}, \\
\mathbb{P}\left(\alpha_{k}(t)-\mu_{k}>\Delta_{j, k} / 4\right) &\leq e^{-t^2 \Delta_{j, k}^{2} / 8}.
\end{align*}
Also,
\begin{align*}
\mathbb{P}\left(\frac{\eta_t}{t}\gamma(j) \leq -\Delta_{j, k} / 4\right) = \mathbb{P}\left(\frac{\eta_t}{t}\gamma(k) > \Delta_{j, k} / 4\right) \leq e^{-t^2 \Delta_{j,k}^2/32\eta_t^2}.
\end{align*}
By taking the expectation inside the summation in \eqref{eqn: restricted_stoch_ftpl_regret}, we obtain the following upper bound for the regret:
\begin{align*}
R_{S}^{FTPL(\alpha \sqrt{t})}(T) &\leq r_1+ 2\sum_{j=1}^{C} \sum_{k=C+1}^{L} \sum_{i=2}^{s} r_i \, \Delta_{j, k} \left ( e^{-t_i \Delta_{j, k}^{2} / 8} + e^{-t_i \Delta_{j,k}^2/32 \alpha^2} \right ) \\
& = r_1+ 2\sum_{j=1}^{C} \sum_{k=C+1}^{L} \sum_{i=2}^{s} r_i \, \Delta_{j, k} \left ( e^{-\left ( \sum_{j=1}^{i-1}r_j \right )   \Delta_{j, k}^{2} / 8} + e^{-\left ( \sum_{j=1}^{i-1}r_j \right )  \Delta_{j,k}^2/32 \alpha^2} \right ) \numberthis \label{eqn: restricted_stochastic_ub_r_i_general}\\
& \leq r_1+ 2 \underset{2 \leq a \leq s}{\max} r_a \sum_{j=1}^{C} \sum_{k=C+1}^{L} \sum_{i=2}^{s}  \, \Delta_{j, k} \left ( e^{-\left ( (i-1) \underset{1 \leq a \leq s}{\min} r_a \right )   \Delta_{j, k}^{2} / 8} + e^{-(i-1) \underset{1 \leq a \leq s}{\min} r_a \Delta_{j,k}^2/32 \alpha^2} \right ) \\
& \leq r_1 + 2 \frac{\underset{2 \leq a \leq s}{\max} r_a}{\underset{1 \leq a \leq s}{\min} r_a}\sum_{j=1}^{C} \sum_{k=C+1}^{L} \frac{8}{\Delta_{j,k}}+ \frac{32 \alpha^2}{\Delta_{j,k}} \\
& \leq r_1 + 2C(L-C)  \frac{\underset{2 \leq a \leq s}{\max} r_a}{\underset{1 \leq a \leq s}{\min} r_a} \left ( \frac{8}{\Delta_{\min}}+ \frac{32 \alpha^2}{\Delta_{\min}}  \right ). 
\end{align*}
\section{Proof of Theorem \ref{thm: constrained_stochastic_main}} \label{sec: app_constrained_stochastic_ub_ftpl}
In this section, we prove an upper bound on the regret of FTPL$(\alpha \sqrt{t})$ when cache updates are restricted to periodic time slots. Recall that $t_0'=\max \left \{r,\frac{8}{\Delta_{\min}^2} \log \left ( {L^2}\right )  , \frac{32 \alpha^2}{\Delta_{\min}^2 }\log \left ( {L^2}\right ) \right \}$ and we assume that $L \geq 3$. We bound the regret for the first $t_0'$ rounds by $t_0'$. This technique of bounding the regret of an initial period by its worst-case regret and then using normal methods to bound the regret for $t>t_0'$ is based on \cite{hedge}.
Thus, we get the following expression for the regret: 
\begin{align*}
     R_{S}^{\textrm{FTPL}(\alpha \sqrt{t}) } &\leq \lceil t_0' \rceil +r \mathbb{E}\left[\sum_{t= \left \lceil \frac{t_0'}{r} \right \rceil +1}^{T/r} \mathbbm{1}\{x(rt) \in \mathcal{C}\}-\mathbbm{1}\{x(rt) \in C(rt)\}\right] \\
     &\leq 1+t_0' +  2r\sum_{t=\left \lceil \frac{t_0'}{r} \right \rceil+1}^{T/r} \sum_{j=1}^{C} \sum_{k=C+1}^{L}  \Delta_{j,k} \left(e^{-rt \Delta_{j, k}^{2} / 8}+e^{-rt \Delta_{j, k}^{2} / 32 \alpha^{2}}\right),
\end{align*}
which follows from \eqref{eqn: restricted_stochastic_ub_r_i_general}. The function $f(u)=$ $u e^{-u^{2} / 2}$ is decreasing on $[1,+\infty)$. Since $\Delta_{j,k} \geq \Delta_{\min}, j \in \mathcal{C}, k \notin \mathcal{C}$, we get
 \begin{align*}
\Delta_{j,k} e^{-rt \Delta_{j,k}^{2} / 8} &=\frac{2}{\sqrt{rt}} f\left(\frac{\sqrt{rt} \Delta_{j,k}}{2}\right) \\&\leq \frac{2}{\sqrt{rt}} f\left(\frac{\sqrt{rt} \Delta_{\min}}{2}\right)=\Delta_{\min} e^{-t \Delta_{\min}^{2} / 8} .
  \end{align*}
Note that for $t \geq \left \lceil \frac{t_0'}{r} \right \rceil + 1$, $t> \frac{4}{r\Delta_{\min}^2}$ holds as $t_0' \geq \frac{8}{\Delta_{\min}^2}$. Thus,
\begin{multline*}
        2r\sum_{t=\left \lceil \frac{t_0'}{r} \right \rceil+1}^{T/r} \sum_{j=1}^{C} \sum_{k=C+1}^{L}  \Delta_{j,k} \left(e^{-rt \Delta_{j, k}^{2} / 8}+e^{-rt \Delta_{j, k}^{2} / 32 \alpha^{2}}\right) \\ \leq 2rL^2\Delta_{\min} \sum_{t=\left \lceil \frac{t_0'}{r} \right \rceil+1}^{T/r}   e^{-rt \Delta_{\min}^{2} / 8}+e^{-rt \Delta_{\min}^{2} / 32 \alpha^{2}}.
\end{multline*}
Now,
\begin{align*}
    L^2\sum_{t=\left \lceil \frac{t_0'}{r} \right \rceil+1}^{T/r}   e^{-rt \Delta_{\min}^{2} / 8}+e^{-rt \Delta_{\min}^{2} / 32 \alpha^{2}}
    &\leq  L^2 \sum_{t=\left \lceil \frac{t_0'}{r} \right \rceil+1}^{T/r} \biggl \{ e^{-r \left (\left \lceil \frac{t_0'}{r} \right \rceil+t-\left \lceil \frac{t_0'}{r} \right \rceil \right) \Delta_{\min}^2/8} \\ &+  e^{-r \left (\left \lceil \frac{t_0'}{r} \right \rceil+ t-\left \lceil \frac{t_0'}{r} \right \rceil \right )\Delta_{\min}^{2} / 32 \alpha^{2}} \biggr \}\\ 
 &\leq \sum_{t=1}^{T/r-\left \lceil \frac{t_0'}{r} \right \rceil} e^{- \frac{rt \Delta_{\min}^2}{8} } + e^{- \frac{rt \Delta_{\min}^{2} }{ 32 \alpha^{2}} } \\&\leq  \frac{1}{r} \left ( \frac{8}{\Delta_{\min}^2 } + \frac{32 \alpha^2}{\Delta_{\min}^2 } \right ). 
\end{align*}
This gives the following upper bound on the regret:
\begin{align*}
   R_{S}^{\textrm{FTPL}(\alpha \sqrt{t}) }  \leq 1+t_0' + 2 \left ( \frac{8}{\Delta_{\min}} + \frac{32 \alpha^2}{\Delta_{\min}} \right ). 
\end{align*}    

\section{Proof of part (a) of Theorem \ref{thm: restricted_adversarial_main}} \label{sec: app_restricted_adversarial_lb} 
In this section, we prove a lower bound on the regret of any policy $\pi$ when cache updates are restricted under adversarial file requests. The proof uses a technique called the probabilistic method from \cite{sigmetrics} which was pioneered by Erd\H os \cite{erdos}. Consider a request sequence $\{x_t\}_{t=1}^{T}$ with a joint probability distribution defined on it. Then, we have the following lower bound on the regret incurred by any policy $\pi$:
\begin{equation}
R_{A}^{\pi}(T) \geq \mathbb{E}_{\{x_t\}_{t=1}^{T}} R_{A}^{\pi}(\{x_t\}_{t=1}^{T},T). \label{eqn: erdos_lb}
\end{equation}
Thus, the proof proceeds by defining a request sequence first and then using \eqref{eqn: erdos_lb} to lower bound the regret. Throughout this proof, if it is not explicitly mentioned with respect to what the expectation is being taken, it can be assumed that the expectation is being taken with respect to $\{x_t\}_{t=1}^{T}$.
\subsection{Lower Bound when $\mathbf{C=1}$}
We consider an adversarial request sequence where files are requested from the first $2$ files uniformly, and the same file is requested $r_i$ times from $t_i=\sum_{j=0}^{i-1}r_j$ to $t_i+r_i-1$, where $r_0=1$ and $1 \leq i \leq s$. To be precise, for each $1 \leq i \leq s$, a file is drawn uniformly at random from the first two files at $t=\sum_{j=0}^{i-1}r_j$ and is repeatedly requested till $t=t_i+r_i-1$, i.e., a total of $r_i$ times. Throughout this proof, ``phase $i$" refers to the time period from $t=\sum_{j=0}^{i-1}r_j$ to $t=t_i+r_i-1$ (both points included). Let $W_i$ be a Bernoulli random variable indicating whether file 1 was requested in the $i^{\textrm{th}}$ phase or not. Note that $\{W_i\}_{i=1}^{s}$ are i.i.d. Thus, the reward obtained by the optimal offline policy corresponds to:
\begin{align*}
 \underset{y \in \mathcal{Y}}{\max} \left \langle \boldsymbol{y}, \boldsymbol{X}_{T+1} \right \rangle &=  \underset{y \in \mathcal{Y}}{\max} \left \langle \boldsymbol{y}, \sum_{t=1}^{T} \boldsymbol{x}_{t} \right \rangle \\
 &=  \underset{y \in \mathcal{Y}}{\max} \left \langle \boldsymbol{y}, \sum_{i=1}^{s} \sum_{t=t_i}^{t_i+r_i-1} \boldsymbol{x}_{t} \right \rangle \\
 &=  \underset{y \in \mathcal{Y}}{\max} \left \langle \boldsymbol{y}, \sum_{i=1}^{s} r_i \, \boldsymbol{x}_{t_i} \right \rangle,
 \end{align*}
as the specific request sequence chosen here is constant in each phase. As $C=1$, $\boldsymbol{y}$ is a one-hot encoded vector and thus, we have that
 \begin{align*}
  \underset{y \in \mathcal{Y}}{\max} \left \langle \boldsymbol{y}, \boldsymbol{X}_{T+1} \right \rangle &= \max \left \{ \sum_{i=1}^{s} r_i W_i, T-\sum_{i=1}^{s} r_i W_i \right \} = \frac{T}{2} + \left |\frac{T}{2} - \sum_{i=1}^{s} r_i W_i\right |.
\end{align*}
The reward of any algorithm can be upper bounded in the following way:
\begin{align*}
\sum_{t=1}^{T} \left \langle \boldsymbol{y}_t, \boldsymbol{x}_{t} \right \rangle &= \sum_{i=1}^{s} \sum_{t=t_i}^{t_i+r_i-1} \left \langle \boldsymbol{y}_t, \boldsymbol{x}_{t} \right \rangle \\
&= \sum_{i=1}^{s} r_i \left \langle \boldsymbol{y}_{t_i}, \boldsymbol{x}_{t_i} \right \rangle,
\end{align*}
as the specific request sequence chosen here and the cache configuration is constant in each phase. Thus, the expected reward of any algorithm can be upper bounded in the following way, where the expectation is taken with respect to the request sequence only:
\begin{align*}
\mathbb{E}_{\{x_t\}_{t=1}^{T}} \left [ \sum_{t=1}^{T} \left \langle \boldsymbol{y}_t, \boldsymbol{x}_{t} \right \rangle \right ] &= \mathbb{E}_{\{x_t\}_{t=1}^{T}} \left [ \sum_{i=1}^{s} r_i \left \langle \boldsymbol{y}_{t_i}, \boldsymbol{x}_{t_i} \right \rangle \right ]\\
&  \sum_{i=1}^{s} r_i  \mathbb{E}_{\{x_t\}_{t=1}^{T}} \left [\left \langle \boldsymbol{y}_{t_i}, \boldsymbol{x}_{t_i} \right \rangle  \right ].
\end{align*}
In each time slot, the cache update happens before the file request arrives and hence $\boldsymbol{y}_t$ is independent of $\boldsymbol{x}_t$, $1 \leq t \leq T$. Thus, we have
\begin{align*}
\mathbb{E}_{\{x_t\}_{t=1}^{T}} \left [ \sum_{t=1}^{T} \left \langle \boldsymbol{y}_t, \boldsymbol{x}_{t} \right \rangle \right ] &=   \sum_{i=1}^{s} r_i  \mathbb{E}_{\{x_t\}_{t=1}^{T}} \left [\left \langle \boldsymbol{y}_{t_i}, \boldsymbol{x}_{t_i} \right \rangle  \right ] \\
&=   \sum_{i=1}^{s} r_i  \left \langle \boldsymbol{y}_{t_i}, \mathbb{E}_{\{x_t\}_{t=1}^{T}} \left [  \boldsymbol{x}_{t_i} \right ] \right \rangle  \\
&=   \frac{1}{2C} \sum_{i=1}^{s} r_i  \left (\boldsymbol{y}_{t_i}(1)+\boldsymbol{y}_{t_i}(2) \right ) \\
&\leq \frac{1}{2} \sum_{i=1}^{s} r_i = \frac{T}{2}. \numberthis \label{eqn: algo_reward_ub}
\end{align*} 
Thus,
\begin{align*}
R_A^{\pi}(T) \geq  \mathbb{E}_{\{x_t\}_{t=1}^{T}} \left [ \left |\frac{T}{2} - \sum_{i=1}^{s} r_i W_i\right | \right ].
\end{align*}
Now, we lower bound $\mathbb{E}[|M|]$, where $M=\frac{T}{2} - \sum_{i=1}^{s} r_i W_i= \sum_{i=1}^{s} r_i\left ( W_i - \frac{1}{2}\right )$, as $\sum_{i=1}^{s}r_i=T$. We denote $m_i=r_i\left ( W_i - \frac{1}{2}\right )$. Thus, $M=\sum_{i=1}^{s} m_i$, where $\mathbb{E}[m_i]=0$ and $\sigma_{i}^2=\mathbb{E}[m_i^2]=\frac{r_i^2}{4}$. Note that $\sigma_M=\sqrt{\sum_{i=1}^{s} \sigma_i^2}=\frac{1}{2}\sqrt{\sum_{i=1}^{s}r_i^2}$. Now, using the Markov inequality, 
\begin{align*}
    \mathbb{E}[|M|] \geq \sigma_M \mathbb{P} (|M| \geq \sigma_M) &\geq \sigma_M \mathbb{P} (\frac{M}{\sigma_M} \geq 1) \\
    &\geq \sigma_M \mathbb{P} (\frac{\sum_{i=1}^{s}m_i}{\sigma_M} \geq 1).
\end{align*}
Using the Berry-Esseen theorem, we have that
\begin{align*}
    \left|\operatorname{Pr}\left(\frac{\sum_{i=1}^{s}m_i}{\sigma_M} \leq 1 \right)-\Phi(1)\right| \leq \frac{C_0}{\sigma_M} \max _{1 \leq i \leq n} \frac{\rho_{i}}{\sigma_{i}^{2}} ,
\end{align*}
where $\rho_i= \mathbb{E}[|m_i^3|]=\frac{r_i^3}{8}$, $C_0$ is a constant and $\Phi(\cdot)$ is the CDF of the standard Gaussian random variable. Thus, we get
\begin{align*}
    \operatorname{Pr}\left(\frac{\sum_{i=1}^{s}m_i}{\sigma_M} \geq 1 \right) &\geq 1 - \Phi(1) - C_0 \frac{\underset{1 \leq i \leq s}{\max} r_i }{2\sigma_M}   \\
    &\geq 0.15 - C_0 \frac{\underset{1 \leq i \leq s}{\max} r_i }{\sqrt{\sum_{i=1}^{s}r_i^2}}. 
\end{align*}
We thus have:
\begin{align*}
 R_A^{\pi}(T) &\geq  \mathbb{E}_{\{x_t\}_{t=1}^{T}} \left [   \max \left \{ \sum_{i=1}^{s} r_i W_i, T-\sum_{i=1}^{s} r_i W_i \right \}  \right ]\\
 &\geq \frac{T}{2} +  \frac{1}{2}\sqrt{\sum_{i=1}^{s}r_i^2} \left ( 0.15 - C_0 \frac{\underset{1 \leq i \leq s}{\max} r_i }{\sqrt{\sum_{i=1}^{s}r_i^2}} \right ). 
\end{align*}
\subsection{Lower bound for general $L,C$}
In this section, we extend the result in the previous section for a general $L,C$ value. We consider an analogous adversarial request sequence where, files are requested from the first $2C$ files uniformly, and the same file is requested $r_i$ times from $t_i=\sum_{j=0}^{i-1}r_j$ to $t_i+r_i-1$, where $r_0=1$ and $1 \leq i \leq s$. To be precise, for each $1 \leq i \leq s$, a file is drawn uniformly at random from the first two files at $t=\sum_{j=0}^{i-1}r_j$ and is repeatedly requested till $t=t_i+r_i-1$, i.e., a total of $r_i$ times. We denote by $W_i$ the file requested in the $i^{\text{th}}$ phase. 

 We use the balls-into-bins technique from the proof of Lemma 1 of \cite{sigmetrics}. A bin is associated with each file from $1,\ldots,2C$ where a request for that file is equivalent to a ball being thrown into that bin. The bins are numbered as $1,2, \ldots, 2 C$ and every two consecutive bins $\{(2 i-1,2 i)\}, 1 \leq i \leq C$ are combined to form $C$ Super bins. Denote by $Z_{i}, i=1,2, \ldots, C$ the number of balls in the $i^{\text {th }}$ super bin. Conditioned on $Z_{i}$, the number of balls in the bins $2 i-1$ and $2 i$ are jointly distributed as $\left(V, Z_{i}-V\right)$, where $V$ is a binomial random variable with parameter $\left(Z_{i}, \frac{1}{2}\right)$. Let $H_{i}$ denote the number of balls in the bin containing the maximum number of balls among bins $2 i-1$ and $2 i$. Then, as shown in the previous section, when $\forall 1 \leq i \leq C, Z_{i}>0$ :
\begin{align*}
\mathbb{E}\left(H_{i} \mid Z_{i}\right) \geq \frac{Z_i}{2} + \frac{1}{2}\sqrt{\sum_{j=1}^{s}r_j^2 \mathbb{I}_{W_j \in \{(2 i-1,2 i)\} }} \left ( 0.15 - C_0 \frac{\underset{1 \leq j \leq s, W_j \in \{(2 i-1,2 i)\} }{\max} r_j }{\sqrt{\sum_{j=1}^{s}r_j^2 \mathbb{I}_{W_j \in \{(2 i-1,2 i)\} } }  } \right ),
 \end{align*}
The expected number of requests obtained by the best offline policy can be lower bounded as:
\begin{align*}
\underset{y \in \mathcal{Y}}{\max} \left \langle \boldsymbol{y}, \boldsymbol{X}_{T+1} \right \rangle    &\geq \mathbb{E} \left [ \sum_{i=1}^{C} H_{i} \right ]  \\
    &= \sum_{i=1}^{C} \mathbb{E} \left [  H_{i} \right ] \\
    &\stackrel{(a)}{=} C H_1 \\
    &\stackrel{(b)}{=} C \mathbb{E} \left [ \mathbb{E}\left[H_{1} \mathbb{I}\left(Z_{1}>0\right) \mid Z_{1}\right] \right ],
    \end{align*}
    where (a) follows from the request sequence being symmetric across the $2C$ files and (b) follows from the tower property of conditional expectation as shown in \cite{sigmetrics}. Thus, 
    \begin{align*}
 \underset{y \in \mathcal{Y}}{\max} \left \langle \boldsymbol{y}, \boldsymbol{X}_{T+1} \right \rangle   &\geq \frac{C \mathbb{E}[Z_1]}{2} + C \mathbb{E} \left [ \frac{1}{2}\sqrt{\sum_{i=1}^{s}r_i^2 \mathbb{I}_{W_i \in \{(1,2)\} }} \left ( 0.15 - C_0 \frac{\underset{1 \leq i \leq s, W_i \in \{(1,2)\} }{\max} r_i }{\sqrt{\sum_{i=1}^{s}r_i^2 \mathbb{I}_{W_i \in \{(1,2)\} } }  } \right ) \right ]\\ 
    &\geq \frac{T}{2} +  \frac{C}{2} \mathbb{E} \left [  0.15 \sqrt{\sum_{i=1}^{s}r_i^2 \mathbb{I}_{W_i \in \{(1,2)\} }} - C_0 \underset{1 \leq i \leq s }{\max} r_i  \right ], \numberthis \label{eqn: res_ad_lb_final}
\end{align*}
as $\mathbb{E}[Z_1]=T/C$. Using (20) of \cite{sigmetrics}, 
\begin{align*}
\mathbb{E} \left [ \sqrt{ \sum_{i=1}^{s}r_i^2 \mathbb{I}_{W_i \in \{(1,2)\} } }\right ] &\geq \sqrt{\mathbb{E}\left( \sum_{i=1}^{s}r_i^2 \mathbb{I}_{W_i \in \{(1,2)\} } \right)}\left(1-\frac{\operatorname{Var}\left(\sum_{i=1}^{s}r_i^2 \mathbb{I}_{W_i \in \{(1,2)\} } \right)}{2\left(\mathbb{E}\left(\sum_{i=1}^{s}r_i^2 \mathbb{I}_{W_i \in \{(1,2)\} }\right)\right)^{2}}\right) \\
&= \sqrt{ \frac{\sum_{i=1}^{s}r_i^2}{C} } \left(1-\frac{(C-1)\left(\sum_{i=1}^{s}r_i^4 \right)}{2\left(\sum_{i=1}^{s}r_i^2 \right)^{2}}\right),
\end{align*}
as $\operatorname{Var} \left (\mathbb{I}_{W_i \in \{(1,2)\} } \right )=\frac{1}{C}\left(1-\frac{1}{C}\right)$ and $\mathbb{E} \left (\mathbb{I}_{W_i \in \{(1,2)\} } \right )=\frac{1}{C}$. Similar to \eqref{eqn: algo_reward_ub},  
\begin{align*}
\mathbb{E}_{\{x_t\}_{t=1}^{T}} \left [ \sum_{t=1}^{T} \left \langle \boldsymbol{y}_t, \boldsymbol{x}_{t} \right \rangle \right ] &=  \frac{1}{2C} \sum_{i=1}^{s} r_i  \left (\boldsymbol{y}_{t_i}(1)+\cdots+\boldsymbol{y}_{t_i}(2C) \right ) \\
&\leq \frac{1}{2} \sum_{i=1}^{s} r_i = \frac{T}{2}. 
\end{align*} 
Combining the above result and \ref{eqn: res_ad_lb_final} gives us the following lower bound on the regret:
\begin{align*}
 R^{\pi}_{A}(T)   \geq \frac{1}{2}   \left(0.15 \sqrt{ C\sum_{i=1}^{s}r_i^2 } \left(1-\frac{(C-1)\left(\sum_{i=1}^{s}r_i^4 \right)}{2\left(\sum_{i=1}^{s}r_i^2 \right)^{2}}\right) -0.6 \, C  \underset{1 \leq i \leq s}{\max} r_i  \right). 
\end{align*}
\section{Proof of part (b) of Theorem \ref{thm: restricted_adversarial_main}} \label{sec: app_restricted_adversarial_ub_ftpl}
In this section, we prove an upper bound on the regret of FTPL($\alpha \sqrt{t}$) when cache updates are restricted to after every $r$ slots only under adversarial file requests. The proof is based on the proof of Proposition 4.2 of \cite{sqrt_t}. Define the following potential function at each time slot $t$:
\begin{equation*}
    \Phi_{t}(\boldsymbol{x})=\mathbb{E}_{\boldsymbol{\gamma} \sim \mathcal{N}(0, I)}\left[\max _{\boldsymbol{y} \in \mathcal{Y}}\langle\boldsymbol{y}, \boldsymbol{x}+\eta \boldsymbol{\gamma}\rangle\right]
\end{equation*}

The regret incurred in this setting can be expressed in the following way:
\begin{align*}
R_{A}^{\textrm{FTPL}(\alpha \sqrt{t})}(T) &= \max _{\boldsymbol{y} \in \mathcal{Y}}\left\langle\boldsymbol{y}, \boldsymbol{X}_{T+1}\right\rangle - \sum_{t=1}^{T} \mathbb{E}_{\boldsymbol{\gamma}}\left[\left\langle\boldsymbol{y}_{t}, \boldsymbol{x}_{t}\right\rangle\right]  \\
&= \max _{\boldsymbol{y} \in \mathcal{Y}}\left\langle\boldsymbol{y}, \boldsymbol{X}_{T+1}\right\rangle - \sum_{i=1}^{s} \mathbb{E}_{\boldsymbol{\gamma}}\left[\left\langle\boldsymbol{y}_{ \sum_{j=0}^{i-1}r_j } , \sum_{t=\sum^{{i-1}}_{k=0} r_k }^{\sum^{{i}}_{k=1} r_k} \boldsymbol{x}_{t }\right\rangle\right],  \\  \numberthis \label{eqn: ftpl_adversarial_regret_restricted} \\
\end{align*}
where we define $r_0=1$. This follows from the fact that the cache configuration can change only at the pre-defined $s$ fixed time slots. Thus, we essentially have a time horizon of $s$, but each time slot $i, 1 \leq i \leq s$ contains $r_i$ requests instead. For brevity, we define 
\begin{align*}
\boldsymbol{y}^{i} := \boldsymbol{y}_{ \sum_{j=0}^{i-1}r_j },\quad \boldsymbol{x}^{i} :=  \sum_{t=\sum^{{i-1}}_{k=0} r_k }^{\sum^{{i}}_{k=1} r_k} \boldsymbol{x}_{t },\quad \eta^i := \eta_{\sum_{j=0}^{i-1}r_j },\quad \boldsymbol{X}^{i} := \sum_{j=1}^{i} \boldsymbol{x}^{j},\quad 1 \leq i \leq s.
\end{align*}

We define the potential function $\Phi_{i} : \mathbb{R}^{L} \rightarrow \mathbb{R}$ for $1 \leq i \leq s$ in the following way:
\begin{align*}
\Phi_{i}(\boldsymbol{x})=\mathbb{E}_{\boldsymbol{\gamma}}\left[\max _{\boldsymbol{y} \in \mathcal{Y}}\left\langle\boldsymbol{y}, \boldsymbol{x}+\eta^{i} \boldsymbol{\gamma}\right\rangle\right],
\end{align*}
 where $\mathcal{Y}$ is the set of possible cache configurations, i.e., the set $\{y \in \{0,1\}^{L}: \|y\|_1 \leq C \}$. As shown in the proof of Proposition 4.1 in \cite{sqrt_t}, 
 \begin{align*}
 \nabla \Phi_{i}\left(\boldsymbol{X}^{i}\right) &=\left.\nabla \mathbb{E}_{\boldsymbol{\gamma}}\left[\left\langle\boldsymbol{y}^{i}, \boldsymbol{x}+\eta^{i} \gamma\right\rangle\right]\right|_{\boldsymbol{x}=\boldsymbol{X}^{i}}=\mathbb{E}_{\boldsymbol{\gamma}}\left[\boldsymbol{y}^{i}\right]. \\
\mathbb{E}_{\boldsymbol{\gamma}}\left[\left\langle\boldsymbol{y}^{i}, \boldsymbol{x}^{i}\right\rangle\right]  &= \left\langle\nabla \Phi_{i}\left(\boldsymbol{X}^{i}\right), \boldsymbol{X}^{i}-\boldsymbol{X}^{i-1}\right\rangle\\
&=\Phi_{i}\left(\boldsymbol{X}^{i}\right)-\Phi_{i}\left(\boldsymbol{X}^{i-1}\right)-\frac{1}{2}\left\langle\boldsymbol{x}^{i}, \nabla^{2} \Phi_{i}\left(\widetilde{\boldsymbol{X}}^{i}\right) \boldsymbol{x}^{i} \right\rangle,
 \end{align*}
where $\widetilde{\boldsymbol{X}}^{i}=\boldsymbol{X}^{i-1}+\theta^{i} \boldsymbol{x}^{i}$, for some $\theta^{i} \in[0,1]$ using Taylor's theorem. Thus, we have
\begin{align*}
&\sum_{i=1}^{s} \mathbb{E}_{\boldsymbol{\gamma}}\left[\left\langle\boldsymbol{y}^{i}, \boldsymbol{x}^{i}\right\rangle\right] \\
&=\Phi_{s}\left(\boldsymbol{X}^{s}\right) -\Phi_{1}\left(\boldsymbol{X}^{0}\right)+\sum_{i=2}^{s}\left[\Phi_{i-1}\left(\boldsymbol{X}^{i-1}\right)- \Phi_{i}\left(\boldsymbol{X}^{i-1}\right)\right] \\
&-\frac{1}{2} \sum_{i=1}^{s}\left\langle\boldsymbol{x}^{i}, \nabla^{2} \Phi_{i}\left(\widetilde{\boldsymbol{X}}^{i}\right) \boldsymbol{x}^{i}\right\rangle.
\end{align*}
Using Jensen's inequality, we have that
\begin{align*}
\Phi_{s}\left(\boldsymbol{X}_{s}\right ) &= \mathbb{E}_{\boldsymbol{\gamma}}\left[\max _{\boldsymbol{y} \in \mathcal{Y}}\left\langle\boldsymbol{y}, \boldsymbol{X^{s}}+\eta^{s} \boldsymbol{\gamma}\right\rangle\right] \\
&\geq \max _{\boldsymbol{y} \in \mathcal{Y}} \mathbb{E}_{\boldsymbol{\gamma}}\left[\left\langle\boldsymbol{y}, \boldsymbol{X^{s}}+\eta^{s} \boldsymbol{\gamma}\right\rangle\right] \\
&=  \max _{\boldsymbol{y} \in \mathcal{Y}} \left\langle\boldsymbol{y}, \boldsymbol{X^{s}} \right\rangle =  \max _{\boldsymbol{y} \in \mathcal{Y}} \left\langle\boldsymbol{y}, \boldsymbol{X_{T+1}} \right\rangle. 
\end{align*}
Substituting the above results in \eqref{eqn: ftpl_adversarial_regret_restricted}, we get
\begin{align*}
R_{A}^{\textrm{FTPL}(\alpha \sqrt{t})}(T) &\leq \Phi_{1}\left(\boldsymbol{X}^{0}\right)+\sum_{i=2}^{s}\left[ \Phi_{i}\left(\boldsymbol{X}^{i-1}\right) - \Phi_{i-1}\left(\boldsymbol{X}^{i-1}\right) \right] \\
&+ \frac{1}{2} \sum_{i=1}^{s}\left\langle\boldsymbol{x}^{i}, \nabla^{2} \Phi_{i}\left(\widetilde{\boldsymbol{X}}^{i}\right) \boldsymbol{x}^{i}\right\rangle.
\end{align*}
Since $\boldsymbol{x}^{i}$ contains $r_i$ file requests, the quadratic form above may be upper bounded in the following way:
\begin{align*}
\left\langle\boldsymbol{x}^{i}, \nabla^{2} \Phi_{i}\left(\widetilde{\boldsymbol{X}}^{i}\right) \boldsymbol{x}^{i}\right\rangle &\leq r_i \max _{k,j, \boldsymbol{x}}\left(\left|\nabla^{2} \Phi_{i}(\boldsymbol{x})\right|\right)_{k j}  \left\langle\boldsymbol{x}^{i}, \mathbf{1} \right\rangle \\
&= r_i^2 \max _{k,j, \boldsymbol{x}}\left(\left|\nabla^{2} \Phi_{i}(\boldsymbol{x})\right|\right)_{k j},
\end{align*}
where $\mathbf{1}$ is the all ones vector. Moreover, from \cite{sigmetrics},
\begin{align*}
\left(\left|\nabla^{2} \Phi_{i}(\boldsymbol{x})\right|\right)_{kj} \leq \frac{1}{\eta^{i}} \sqrt{\frac{2}{\pi}}.
\end{align*}
Thus,
\begin{align*}
\frac{1}{2} \sum_{i=1}^{s}\left\langle\boldsymbol{x}^{i}, \nabla^{2} \Phi_{i}\left(\widetilde{\boldsymbol{X}}^{i}\right) \boldsymbol{x}^{i}\right\rangle \leq \sqrt{\frac{2}{\pi}} \sum_{i=1}^{s} \frac{r_i^2}{\eta^{i}}.
\end{align*}
It can also be shown that:
\begin{align*}
\Phi_{i}\left(\boldsymbol{X}^{i-1}\right)-\Phi_{i-1}\left(\boldsymbol{X}^{i-1}\right) &= \mathbb{E}_{\boldsymbol{\gamma}}\left[\max _{\boldsymbol{y} \in \mathcal{Y}}\left\langle\boldsymbol{y}, \boldsymbol{X}^{i-1}+\eta^{i} \boldsymbol{\gamma}\right\rangle\right] - \mathbb{E}_{\boldsymbol{\gamma}}\left[\max _{\boldsymbol{y} \in \mathcal{Y}}\left\langle\boldsymbol{y}, \boldsymbol{X}^{i-1}+\eta^{i-1} \boldsymbol{\gamma}\right\rangle\right]\\
&\leq \mathbb{E}_{\boldsymbol{\gamma}}\left[\max _{\boldsymbol{y} \in \mathcal{Y}}\left\langle\boldsymbol{y}, |\eta^{i} - \eta^{i-1}| \boldsymbol{\gamma}\right\rangle\right] \\
&= \left|\eta^{i}-\eta^{i-1}\right| \mathbb{E}_{\boldsymbol{\gamma}}\left[\max _{\boldsymbol{y} \in \mathcal{Y}}\langle\boldsymbol{y}, \gamma\rangle\right].
\end{align*}
We also have from \cite{cohenhazan15} that: 
\begin{align*}
\Phi_{1}\left(\boldsymbol{X}^{0}\right)= \eta^1 \mathbb{E}_{\boldsymbol{\gamma}}\left[\max _{\boldsymbol{y} \in \mathcal{Y}}\left\langle \boldsymbol{y}, \boldsymbol{\gamma}\right\rangle\right ] \leq \eta^1 C \sqrt{2  \log \left ( \frac{N}{C}\right )}.
\end{align*}
Thus, combining the above bounds, we get
\begin{align*}
R_{A}^{\textrm{FTPL}(\alpha \sqrt{t})}(T) &\leq  \eta^1 C \sqrt{2  \log \left ( \frac{N}{C}\right )} + \eta^s C \sqrt{2  \log \left ( \frac{N}{C}\right )}+ \sqrt{\frac{2}{\pi}} \sum_{i=1}^{s} \frac{r_i^2}{\eta^{i}} \\
&\leq  \alpha C \sqrt{2  \log \left ( \frac{N}{C}\right )} + \alpha C \sqrt{T} \sqrt{2  \log \left ( \frac{N}{C}\right )}+ \sqrt{\frac{2}{\pi}} \sum_{i=1}^{s} \frac{r_i^2}{\alpha \sqrt{\sum_{j=0}^{i-1}r_j }}. \\
\end{align*}

\section{Proof of part (a) of Theorem \ref{thm: constrained_adversarial_main} } \label{sec: app_constrained_lb_adversarial}
In this section, we prove a lower bound on the regret of any policy $\pi$ in the restricted switching case where the cache is allowed to update its contents only after every $r$ requests. The proof uses \eqref{eqn: erdos_lb} to bound the regret. We consider the adversarial request sequence where files are requested from the top $2C$ files, and the same file is requested $r$ times in each of the $T/r$ `phases' (using terminology defined in Appendix \ref{sec: app_restricted_adversarial_lb}). To be precise, at the beginning of each phase $i$ where $1 \leq i \leq T/r$, i.e., at $t=1+(i-1)r$, a file is drawn uniformly at random from the first $2C$ files. This file is repeatedly requested $r$ times till $t=ir$. Throughout this proof, if it is not explicitly mentioned with respect to what the expectation is being taken, it can be assumed that the expectation is being taken with respect to $\{x_t\}_{t=1}^{T}$. The reward obtained by the optimal static configuration in hindsight can be bounded in the following way:
\begin{align*}
    \mathbb{E}\left( \max_{y \in \mathcal{Y}}\left \langle \boldsymbol{y}, \sum_{t=1}^{T}\boldsymbol{x}_{t} \right \rangle \right) &= r\mathbb{E}\left( \max_{y \in \mathcal{Y}} \left \langle \boldsymbol{y}, \sum_{t=1}^{T/r}\boldsymbol{x}_{t} \right \rangle \right)\\
    & \geq r\left(  \frac{T}{2r}+\sqrt{\frac{C T}{2r \pi}}-\frac{\sqrt{r}(\sqrt{2}+1) C^{3 / 2}}{2 \sqrt{2 \pi T}}-\sqrt{\frac{2}{\pi}} \frac{rC^{2}}{T}\right)\\
    & = \frac{T}{2}+\sqrt{\frac{CrT}{2\pi }}-\Theta\left({\frac{r\sqrt{r}}{\sqrt{T}}}\right),
\end{align*}
where in the second last step, we used Theorem 2 of \cite{sigmetrics}. For $1 \leq i \leq s$, let 
\begin{align*}
\boldsymbol{y}^{i} :=\sum_{t=1+(i-1)r}^{ir} \boldsymbol{y}_t,
\end{align*}
which is the sum of the caching configuration vectors in phase $i$. Note that the cache configuration at each time slot is independent of the file request that arrives in that slot. Now, to upper bound the reward obtained by any policy $\pi$,
\begin{align*}
     \mathbb{E}\left( \sum_{t=1}^{T} \left \langle \boldsymbol{y}_{t}, \boldsymbol{x}_{t} \right \rangle \right) & = \sum_{i=1}^{T/r}\frac{1}{2C}\sum_{k=1}^{2C} y^{i}(k)\\
     &\leq \sum_{t=1}^{T/r}\frac{1}{2C}\sum_{k=1}^{L} y^{i}(k) \\
     &\leq \frac{T}{r}  \cdot \frac{1}{2C}\cdot rC \\&= \frac{T}{2},
\end{align*}
The second last step follows from the fact that the sum of the elements of $\boldsymbol{y}^i$ is exactly $rC$ as the cache configuration remains constant in each phase. 
\begin{align*}
    \therefore R_{A}^{\pi}(T) &\geq \mathbb{E}_{\{\boldsymbol{x}_{t}\}_{t=1}^{T}}\left(\boldsymbol{y}\text{*}\cdot \sum_{t=1}^{T}\boldsymbol{x}_{t}-\sum_{t=1}^{T}\boldsymbol{y}_{t}\cdot \boldsymbol{x}_{t}\right)\\
    &\geq \sqrt{\frac{CrT}{2\pi }}-\Theta\left({\frac{r\sqrt{r}}{\sqrt{T}}}\right).
\end{align*}
For this bound to be meaningful, the second term needs to be order-wise smaller than the first term: 
\begin{align*}
    \frac{r\sqrt{r}}{\sqrt{T}}<\sqrt{rT}
    \implies r<T
\end{align*}
Hence this is an $\mathcal{O}(\sqrt{rT})$ lower bound for $r<o(T)$. \\
For $r=\Omega(T)$, the first set of $r$ requests always gives a regret that is $\Omega(r)$ as in time slot 1, a random $C$ files out of $L$ are stored, and hence we get an overall regret of $\Omega(T)$. 

%
%
%

%
\end{document}